\documentclass{fac}
\usepackage{amssymb}
\usepackage{amsmath}
\newtheorem{definition}{Definition}
\newtheorem{theorem}{Theorem}
\newtheorem{proposition}{Proposition}
\newtheorem{lemma}{Lemma}

\newtheorem{example}{Example} 
\usepackage{hyperref} 
\usepackage{graphicx}
\usepackage{caption}
\usepackage{subcaption}
\usepackage[all]{xy}
\usepackage{todonotes}
\usepackage{color,soul}

\def\true{\mathit{true}}

\def\failure{\mathit{failure}}
\def\error{\mathit{error}}

\def\exe{\mathit{Exe}}

\def\vars{\mathit{vars}}

\def\id{\mathit{id}}
\def\post{\mathit{POST}}

\def\mapsfrom{\mathrel{\mbox{$\leftarrow$\hskip -0.2ex\vrule width 0.15ex height 1ex\hbox to 0.3ex{}}}}

\def\ourmapsto{\mathrel{\mbox{\hbox to 0.2ex{}\vrule width 0.15ex height 1ex\hskip -0.2ex$\rightarrow$}}}

\def\mapstostar{\stackrel{*}\ourmapsto}
\def\mapsfromstar{\stackrel{*}\mapsfrom}

\newcommand{\imply}{\Rightarrow}

\newcommand{\relRT}{\overset{\!\!*}{\rightarrow}}
\newcommand{\irelRT}{\overset{\,\,*}{\leftarrow}}

\def\lift#1{\setbox1=\hbox{#1\hbox to 4pt{}}\vbox to 1.4\ht1{\hrule width 1\wd1 height .8pt\hbox to 1\wd1{\vrule height 3pt width .8pt\hfil\vrule height 3pt width .8pt}\vss\hbox to 1\wd1{\hfil#1\hfil}}}

\def\metachr{\textsc{MetaCHR}}
\def\where{\mathrel{\mbox{\textsc{where}}}}

\def\meta{\textsc{Meta}}
\def\metagr{\textsc{Meta}^{\mathit{Gr}}}
\def\chr{\textsc{Chr}} 

\def\denotes#1{[\![#1]\!]}
\def\denotesgr#1{[\![#1]\!]^{\mathit{Gr}}}

\def\bigdenotes#1{\bigl[\!\!\bigl[#1\bigr]\!\!\bigr]}
\def\bigdenotesgr#1{\bigl[\!\!\bigl[#1\bigr]\!\!\bigr]^{\mathit{Gr}}}

\def\allrars{\textit{all-relevant-app-recs}}
\def\mexe#1#2{\widehat\exe(#1,#2)}
\def\mexenoargs{\widehat\exe} 
\def\commonvars{\textit{common-vars}}

\def\whiteghost#1{{\setbox9=\hbox{#1}\hbox to
\wd9{\vrule width 0dd depth
\dp9 height \ht9 \hfil}}}
\begin{document}

\correspond{Henning Christiansen, Roskilde University, Denmark. Email: henning@ruc.dk}

\title[On Proving Confluence Modulo Equivalence for CHR]
{On Proving\\Confluence Modulo Equivalence\\for Constraint Handling Rules}
  \author[Christiansen and Kirkeby]
      {Henning Christiansen\thanks{The project
      is supported by The Danish Council for Independent Research, Natural Sciences,
      grant no.~DFF 4181-00442} and Maja H.~Kirkeby$^{1,}$\thanks{The second author's contribution  received funding from
the European Union Seventh Framework Programme (FP7/2007-2013)
under grant agreement no.~318337,
ENTRA - Whole-Systems Energy Transparency.}\\
       Research group PLIS: Programming, Logic and Intelligent Systems\\
      Department of People and Technology\\
      Roskilde University, Denmark}

\maketitle

\begin{abstract}
Previous results on proving confluence
for Constraint Handling Rules  are extended
in two ways in order to allow a larger and more realistic class of CHR programs to be considered confluent.
Firstly, we introduce the relaxed notion of confluence modulo equivalence into the context of CHR: while
confluence for a terminating program means that all alternative derivations for a 
query lead to the exact same final state, confluence modulo equivalence only requires
the final states to be equivalent with respect to an equivalence relation tailored for the given program.
Secondly, we allow non-logical built-in predicates such as \texttt{var}/1 and incomplete ones such as \texttt{is}/2,
that are ignored in previous work on confluence.

To this end, a new operational semantics for CHR is developed which includes such predicates.
In addition, this semantics differs from earlier approaches by its simplicity 
without loss of generality, and it may also be recommended for future studies of CHR.

For the purely logical subset of CHR, proofs can be expressed in first-order logic,
that we show is not sufficient in the present case.
We have introduced a formal meta-language that allows reasoning about abstract states
and derivations with meta-level restrictions that reflect the non-logical and incomplete predicates.
This language represents subproofs as diagrams, which facilitates a systematic
enumeration of proof cases, pointing forward to a mechanical support for such proofs.
\end{abstract}


\section{Introduction}
Constraint Handling Rules, CHR~\cite{fruehwirth-98,fru_chr_book_2009},
is a programming language consisting of guarded rewriting rules over constraint stores.
CHR inherits its nomenclature from the logic programming tradition; constraints are first-order atoms,
and the language has a declarative semantics based on a logical reading of the rules.
It has become important as a general language for knowledge representation and reasoning
as well as
for expressing algorithms in a high-level fashion;
see, e.g.,~\cite{FruehwirthRaiserEds2011,DBLP:series/lncs/5388,DBLP:journals/tplp/SneyersWSK10}.

A foundation for applying confluence in the analysis and verification of CHR programs
has been laid in earlier work, and the overall theoretical issues are well understood~\cite{DBLP:conf/cp/Abdennadher97, DBLP:conf/cp/AbdennadherFM96,DBLP:journals/constraints/AbdennadherFM99,FruehwirthRaiserEds2011}.
The confluence notion goes longer back in the traditions of term and abstract rewriting systems;
see more details in Section~\ref{sec:rewrite}.
There are, however, still
severe limitations in the results for CHR that impede its
practical application to realistic programs.
The present paper aims at filling part of the gap, by 
\begin{itemize}
\item the introduction for CHR of confluence \emph{modulo equivalence} 
that allows a much larger and interesting class of programs to enjoy the advantages of confluence;
\item extending to a larger subset of CHR that includes non-logical and incomplete\footnote{In this paper, we use the term  \emph{incomplete} for a built-in predicate whose (established) implementation
produces runtime errors for selected calls. Examples of such calls are \texttt{4} \texttt{is} \texttt{2+X} and \texttt{X>1}.
The precise definition is found in section~\ref{sec:preliminaries}.} built-in predicates
(e.g., \texttt{var}/1, resp.\ \texttt{is}/2) that have been ignored in previous work.
\end{itemize}
While confluence of a program means that all derivations from a common initial state end in the same final state,
 the  ``modulo equivalence'' version relaxes this
such that final states need not be strictly identical,
but only equivalent with respect to a given equivalence relation.
The following motivating example is used throughout this paper.

\begin{example}[\cite{DBLP:conf/lopstr/ChristiansenK14}]\label{ex:collect}
The following CHR program, consisting of a single rule, collects a number of separate items
into a (multi-) set represented as a list of items.
\begin{verbatim}
set(L), item(A) <=> set([A|L]).
\end{verbatim}
This rule will apply repeatedly, replacing constraints matched by the left-hand side by those indicated to the right.
The query
\begin{verbatim}
?- item(a), item(b), set([]).
\end{verbatim}
may lead to two  different final states, $\{\texttt{set([a,b])}\}$ and  $\{\texttt{set([b,a])}\}$,
both representing the same set.
We introduce a state equivalence relation  $\approx$ implying that $\{\texttt{set($L$)}\}\approx\{\texttt{set($L'$)}\}$, whenever $L$ is a permutation of $L'$.
The program is not confluent when identical end states are required,
but it will be shown to be confluent modulo $\approx$ in Section~\ref{sec:set-example-all-details} below.
\end{example}
The relevance of confluence modulo equivalence is also demonstrated for dynamic optimization programs that produce an arbitrary, optimal solution among a collection of equally good ones;
the Viterbi algorithm
expressed in CHR is considered in Section~\ref{sec:viterbi}.

To model non-logical and incomplete predicates, we need to introduce a new operational semantics for CHR. To be interesting for studies of confluence, this semantics maintains
nondeterminism for choice of the next rule to be applied to the current state.
In addition to treating a larger language, this semantics differs from earlier approaches by its simplicity
without  loss of generality.
Various redundancies have been removed so that a program state has only two components, a constraint store and a bookkeeping device to handle well-known termination
problems for the propagation rules of CHR; a simple observation shows that global variables are unnecessary;
execution of built-in predicates are modelled by substitutions applied to the state immediately,
which is more in line with how a practical CHR system works (as opposed to earlier proposals'
additional store of ``processed'' built-ins and their evaluation explained by logical entailment).
A detailed comparison and references to previous operational semantics are given in Section~\ref{sec:CHR}
below.

Reasoning about derivations is more difficult in the context of non-logical/incomplete built-ins.
Basically, all earlier proof methods for the purely logical subset of CHR rely on a subsumption principle
that any property shown about derivations between states also
holds when more constraints are added and substitutions applied to the states;
as a consequence of this, confluence proofs can be reduced to considering a finite number
of cases that can be 
checked in an automatic way.
This principle breaks down when non-logical predicates are introduced, e.g.,
the predicate \texttt{var(X)} succeeds but the instance  \texttt{var(7)}
fails.
To cope with this, we have introduced a formal meta-language $\metachr$ to represent abstract states, derivations
and proofs as diagrams, with powerful parametrization and meta-level constraints that limit the allowed
instances. The following is an example of an abstract term in the meta-language,
$\texttt{var($a$)} \where \mathit{variable}(a)$. Here, $a$ is a meta-variable ranging over terms
and  $\mathit{variable}$ is a meta-level constraint on such terms, allowing only substitutions to names of such variables. This abstract term 
is said to cover all instances that satisfy the meta-level constraint, i.e.,
 \texttt{var(X)} but not  \texttt{var(7)}.
$\metachr$ allows us to reason about such abstract terms in a way so that properties shown at this
level are guaranteed to hold for all such permissible instances.
We can demonstrate that proofs of confluence can be reduced to considering only a finite number of
abstract proof cases, but the additional complexity given by an equivalence relation
(and state invariant; below) may in some cases require an unfolding into an infinite number
of subcases, each requiring a differently shaped proof diagram.

The notion of observable confluence~\cite{DBLP:conf/iclp/DuckSS07} for CHR considers only states that satisfy a given invariant.
We include such invariants, as we consider them to be central in CHR programming practice:
a program is typically developed with a particular class of queries in mind, often strongly biased, 
so only queries in this class lead to meaningful computations.

\begin{example}\label{ex:collect-inv}\emph{(Example~\ref{ex:collect}, continued)}
The one-line program above reflects a tacitly assumed state invariant:
only one \texttt{set} constraint is allowed.
If we open up for a query such as
\begin{verbatim}
?- item(a), item(b), set([]), set([c]).
\end{verbatim}
we obtain a collection of different answers, representing different ways of splitting
$\{\texttt{a},\texttt{b},\texttt{c}\}$ into two disjoint subsets. 
However, this may not be intended, and the program is not confluent modulo the indicated equivalence
relation unless the invariant is taken into account.
The relevant invariant may specify that all constraints must be ground, and that a state must include exactly
one \texttt{set}/2 constraints whose argument is a list.
\end{example}
The earlier approach~\cite{DBLP:conf/iclp/DuckSS07} for showing observable confluence (for logical built-ins only) 
sticks to the above mentioned logical subsumption principle.
As shown by~\cite{DBLP:conf/iclp/DuckSS07} and explained below, this leads to infinitely many proof cases for even simple invariants such as groundedness;
our meta-language approach handles such examples in a more satisfactory way.

Confluence modulo equivalence was mentioned in relation to  CHR in a previous conference
paper~\cite{DBLP:conf/lopstr/ChristiansenK14} that also gave a first version of the
operational semantics. 
The present paper provides theoretical foundations for studying
confluence modulo equivalence for CHR, and introduces a formal meta-language that supports systematic proofs. This may also point forward towards (partly) mechanized proof systems for confluence modulo equivalence.

The results in the present paper may carry over in a useful way to other systems
with nondeterminism in which confluence has to be studied. This may be active rules in databases~\cite{DBLP:conf/sigmod/AikenWH92}, concurrent constraint
programming~\cite{DBLP:journals/tcs/FalaschiGMP97}
and theoretical models of concurrency such as
$\pi$- and $\rho$-calculi~\cite{DBLP:journals/jfp/Niehren00,DBLP:conf/ccl/NiehrenS94}.

%



Section~\ref{sec:background} reviews previous work on confluence in term rewriting
and general rewriting systems, including fundamental results concerning confluence modulo
equivalence, that has not been utilized for CHR before, and we give an overview
of the state of the art for CHR.
Section~\ref{sec:CHR} gives 
our operational semantics for CHR, first introduced in~\cite{DBLP:conf/lopstr/ChristiansenK14},
intended for reasoning about confluence for programs with  non-logical built-in predicates,
and various properties related to confluence are introduced; we also make a comparison with operational
semantics used in earlier work on confluence for CHR.
In Section~\ref{sec:conf-mod-eq-in-CHR} we generalize earlier results on critical pairs for CHR, now including
the larger set of built-in predicates, and taking invariant and equivalence into account;
we can also show that such pairs -- or corners as we call them (since we include the common ancestor state) --
are not suited for proofs of confluence in our more general case due to this subsumption principle; we also add some more detailed comments
on previous work on confluence for CHR.
%

Our main results are presented in Section~\ref{sec:abstract}. The meta-language \metachr{} is introduced in which proofs of joinability are reified as abstract diagrams. 
A proof of confluence modulo equivalence can be split into a finite set of proof cases, each given by an abstract corner.
As opposed to the results of \cite{DBLP:conf/cp/Abdennadher97,DBLP:conf/cp/AbdennadherFM96} it is not necessary for confluence (modulo equivalence) that
each such abstract corner is joinable. A property called split-joinable is introduced, occasionally leading to  infinite sets of corners to be checked for joinability.
We show that when the abstract corners are either joinable or split-joinable, local confluence is guaranteed and confluence is guaranteed for terminating programs.

In Section~\ref{sec:examples}, we demonstrate the applicability of the suggested approach,  by giving proofs of confluence modulo
equivalence for selected programs: the program of Example~\ref{ex:collect} that demonstrates
an equivalence indicating a redundant data representation,
a version of the Viterbi algorithm in CHR that exemplifies dynamic programming
algorithms with pruning, and finally an example with a splitting into infinitely many cases.
%
Section~\ref{sec:discussion}
provides for a summary, and a discussion of possible
directions for future work.

\section{Background and Related work}\label{sec:background}
Confluence modulo trivial identity is well-studied in Rewriting Systems,
see, e.g.,~\cite{BaderNipkow1999} for an overview. Since the 1990es, the proof methods have been adapted to the more complex system of Constraint Handling Rules~\cite{fruehwirth-98,fru_chr_book_2009}, most notably~\cite{DBLP:conf/cp/Abdennadher97,DBLP:conf/cp/AbdennadherFM96,DBLP:conf/iclp/DuckSS07}.
Confluence modulo equivalence has been studied in general rewriting systems~\cite{DBLP:journals/jacm/Huet80} and was only recently introduced to CHR~\cite{DBLP:conf/lopstr/ChristiansenK14}.
%
%
%
%
%
%
%
%
%
\subsection{Confluence for General Rewriting Systems and Term Rewriting Systems}\label{sec:rewrite}
A binary \emph{relation} $\rightarrow$ on a set $A$ is a subset of $A \times A$, where $x \rightarrow y$ denotes membership of $\rightarrow$. A \emph{rewriting system} is  a pair $\langle A, \rightarrow\rangle$; it is \emph{terminating} if there is no infinite chain $a_0 \rightarrow a_1 \rightarrow \cdots$.
The \emph{reflexive transitive closure} of $\rightarrow$ is denoted $\relRT$. The \emph{inverse relation} $\leftarrow$ is defined by $\{(y,x) \mid x \rightarrow y\}$. 
An \emph{equivalence (relation)} $\approx$ is a binary relation on $A$ that is reflexive, transitive and symmetric. We say that $x$ and $y$ are \emph{joinable} if there exists a $z$ such that $x \relRT z $ and a $z \irelRT y$.
%
%
%
%
%

A rewriting system $\langle A, \rightarrow\rangle$ is \emph{confluent} if and only if $y' \irelRT x \relRT y \imply \exists z.\ y' \relRT z \irelRT y$, and is \emph{locally confluent} if and only if $y' \leftarrow x \rightarrow y \imply \exists z.\ y' \relRT z \irelRT y$.
In 1942, Newman showed his fundamental Lemma~\cite{Newman42}:
\emph{A terminating rewriting system is confluent if and only if it is locally confluent.} An elegant proof of Newman's lemma was provided by Huet \cite{DBLP:journals/jacm/Huet80} in 1980.
%


The more general notion of \textit{confluence modulo equivalence} was introduced in 1972 by Aho et~al~\cite{Aho72} in the context of the Church-Rosser property.
\begin{definition}[Confluence modulo equivalence]\label{def:confModEq}
A relation $\rightarrow$ is confluent modulo an equivalence $\approx$ if and only if
\begin{align*} 
\forall\, x,y,x', y'. \quad y' \irelRT x' \approx x \relRT y \qquad  \imply \qquad \exists\, z, z'. \quad  y' \relRT z' \approx z \irelRT y.
\end{align*}
\end{definition}
Given an equivalence relation $\approx$, we say that $x$ and $y$ are \emph{joinable modulo equivalence} if there exists $z,z'$ such that $x \relRT z $, $z' \irelRT y$ and $z \approx z'$.
This is shown as a diagram in Fig.~\ref{fig:ConfModEq}.
In 1974, 
Sethi~\cite{DBLP:journals/jacm/Sethi74} 
studied confluence modulo equivalence for bounded rewriting systems, that are systems for which there exists an upper bound for the number of possible rewrite steps for all terms. He showed that confluence modulo equivalence for bounded systems is equivalent to the following properties, $\alpha$ and $\beta$, also shown in Fig.~\ref{fig:LocalConfModEq}. 
\begin{definition}[$\alpha$ \& $\beta$]\label{def:alfaBeta}
A relation $\rightarrow$ has the $\alpha$ property and the $\beta$ property
with respect to an equivalence $\approx$ if and only if it satisfies the $\alpha$ and $\beta$ conditions, respectively:
\begin{align*}
\alpha:\qquad & \forall x,y, y'. \quad y' \leftarrow x \rightarrow y \qquad \Longrightarrow \qquad \exists z, z'. \quad  y' \relRT z' \approx z \irelRT y \\
\beta: \qquad & \forall x, y', y. \quad y' \approx x \rightarrow y  \qquad\, \Longrightarrow \qquad \exists z, z'. \quad  y' \relRT z' \approx z \irelRT y
\end{align*}
\end{definition}
In 1980,~Huet \cite{DBLP:journals/jacm/Huet80} generalized this result to any terminating system.
%
\begin{definition}[Local confl.~mod.~equivalence]\label{def:LconfModEq}
A rewriting system is \emph{locally confluent modulo an equivalence} $\approx$ if and only if it has the $\alpha$ and $\beta$ properties.
\end{definition}
\begin{theorem}\label{thm:confLconfModEq}\textbf{(Huet,~\cite{DBLP:journals/jacm/Huet80})}\quad
Let $\rightarrow$ be a terminating rewriting system. For any equivalence $\approx$, $\rightarrow$ is confluent modulo $\approx$ if and only if $\rightarrow$ is locally confluent modulo $\approx$.
\end{theorem}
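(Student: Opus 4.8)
The plan is to prove the two implications separately, the direction ``confluent modulo $\approx$ $\imply$ locally confluent modulo $\approx$'' being routine and its converse carrying all the weight. For the easy direction I would assume $\rightarrow$ is confluent modulo $\approx$ and derive the two local properties by specialising Definition~\ref{def:confModEq}; notably, no termination assumption is needed here. To obtain $\alpha$ from a fork $y' \leftarrow x \rightarrow y$, I would instantiate the confluence hypothesis with $x' := x$, using reflexivity of $\approx$ to supply $x \approx x$ and reading the single steps $x \rightarrow y$ and $x \rightarrow y'$ as members of $\relRT$. To obtain $\beta$ from $y' \approx x \rightarrow y$, I would instead make the left peak empty, instantiating with $x' := y'$ and the length-zero derivation $y' \relRT y'$, and using symmetry of $\approx$ to read $y' \approx x$ as $x \approx y'$. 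In both cases the conclusion of confluence modulo $\approx$ is exactly the required joinability modulo $\approx$.

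For the hard direction I would assume $\alpha$ and $\beta$ and prove confluence modulo $\approx$. Since $\rightarrow$ is terminating, $\rightarrow^{+}$ is well-founded, and the argument proceeds by well-founded (Noetherian) induction. Given a configuration $y' \irelRT x' \approx x \relRT y$, I would induct on the multiset $\{x,x'\}$ of the two sources, ordered by the multiset extension of $\rightarrow^{+}$ (which is again well-founded), aiming to show that $y$ and $y'$ are joinable modulo $\approx$. The base case, in which both derivations $x \relRT y$ and $x' \relRT y'$ are empty, is immediate: then $y = x \approx x' = y'$. In the inductive step at least one derivation is non-empty; I would peel a first step off a non-empty side and fill the region between the divergent derivations with elementary tiles, using an $\alpha$-tile to resolve a genuine fork emanating from a single element and a $\beta$-tile to transport a single reduction across the $\approx$-edge bridging the two peaks. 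Each tile produces new sub-configurations whose sources are reachable in at least one step from $x$ or from $x'$, hence strictly smaller in the multiset order, so the induction hypothesis applies to them; the resulting partial joinings are then spliced into one joining of $y$ and $y'$ using transitivity of $\approx$.

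The main obstacle is precisely the bookkeeping that keeps this induction well-founded. Unlike Newman's lemma, where a single peak is tiled and the source strictly decreases at every recursive call, here the two derivations start from distinct sources bridged by $\approx$, and an $\approx$-edge carries no termination content: it neither decreases the measure nor can be consumed by a reduction. The delicate point is therefore that every appeal to the induction hypothesis must be shown to target a strictly smaller multiset of sources, and this is most subtle at the $\beta$-tiles, since applying $\beta$ reintroduces fresh $\approx$-edges that must themselves be resolved. I would organise the argument so that each such new edge sits below a source that has already been strictly decreased, guaranteeing that the cascade of induction-hypothesis applications closes after finitely many tiles. It is exactly this finiteness, underwritten by termination through the multiset order, that upgrades the purely local properties $\alpha$ and $\beta$ to global confluence modulo $\approx$.
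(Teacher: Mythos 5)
The paper does not prove this statement; it is imported verbatim from Huet's 1980 paper, so your attempt can only be judged against the known argument. Your easy direction (confluence modulo $\approx$ implies $\alpha$ and $\beta$, no termination needed) is correct. The hard direction, however, has a genuine gap at exactly the point you flag and then dispose of by assertion. Your induction is on the multiset $\{x,x'\}$ of the two sources, and you claim every tile produces sub-configurations whose sources are reachable \emph{in at least one step} from $x$ or $x'$, hence strictly smaller. This fails already at the first $\beta$-tile: peeling $x\rightarrow y_1$ off the configuration $y'\irelRT x'\approx x\relRT y$ and applying $\beta$ to $x'\approx x\rightarrow y_1$ yields $x'\relRT z'$ and $y_1\relRT z$ with $z\approx z'$, and the new derivation $x'\relRT z'$ together with the untouched $x'\relRT y'$ is a fork whose source is $x'$ itself, reached in \emph{zero} steps; its multiset $\{x',x'\}$ is not below $\{x,x'\}$. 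Worse, even where the induction hypothesis does apply, its conclusion has the shape $\cdot\relRT u\approx u'\irelRT\cdot$, i.e.\ it hands back a fresh $\approx$-edge flanked by reducible states. Splicing these partial joins produces an ever-growing staircase $y\relRT u\approx u'\relRT v\approx v'\relRT\cdots$: each appeal to the induction hypothesis discharges one $\approx$-edge but creates another adjacent to a nonempty derivation, and nothing in your measure bounds the number of appeals. The cascade does not close after finitely many tiles, so "organising the argument" as you propose cannot be done within this setup.

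The standard repair (essentially Huet's) changes what is proved rather than how the bookkeeping is done. One shows by a single simultaneous well-founded induction: (i) all normal forms of a state $x$ are pairwise $\approx$-equivalent, with measure the \emph{singleton} multiset $\{x\}$, using $\alpha$; and (ii) for $a\approx b$, every normal form of $a$ is $\approx$-equivalent to every normal form of $b$, with measure $\{a,b\}$, using $\beta$. This fixes both defects at once: the problematic fork at $x'$ is an instance of (i) with measure $\{x'\}$, which is a proper sub-multiset of $\{x,x'\}$ and hence smaller; and because the objects being related are normal forms, the $\approx$-links delivered by the induction hypotheses compose by transitivity of $\approx$ without spawning new reducible configurations, which is what terminates the splicing. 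Confluence modulo $\approx$ then follows by normalising $y$ and $y'$ (termination guarantees normal forms exist) and joining through their normal forms. Some device of this kind is indispensable; the multiset induction on the two sources alone, with naive tiling, does not go through.
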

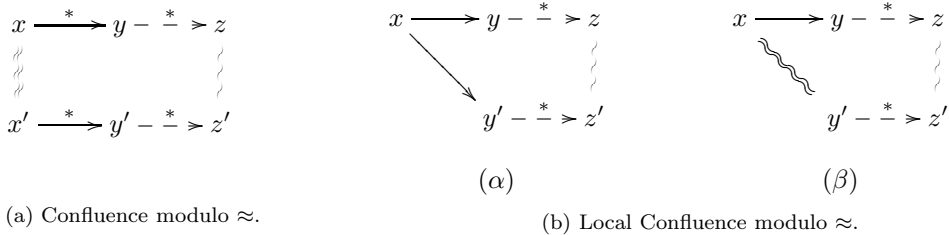
\begin{figure}
\centering
\begin{subfigure}{0.3\textwidth}
   \begin{displaymath}
        \begin{array}{c}
     \xymatrix{ x \ar[r]^{*} \ar@2{~}[d] & y  \ar@{-->}[r]^{*} & z \\
                x' \ar[r]^{*}   	 & y' \ar@{-->}[r]^{*} & z' \ar@{~}[u] }
                \\ \\ 
     \end{array}
   \end{displaymath} 
   \subcaption{Confluence modulo $\approx$.}
   \label{fig:ConfModEq}
 \end{subfigure}
 \begin{subfigure}{0.6\textwidth}
   \begin{subfigure}{0.45\textwidth}
     \begin{displaymath}
     \begin{array}{c}
	  \xymatrix{ x \ar[r] \ar[dr] & y  \ar@{-->}[r]^{*} & z \\
				      & y' \ar@{-->}[r]^{*} & z' \ar@{~}[u] }\\ \\
	 ( \alpha  )
     \end{array}
     \end{displaymath}
   \end{subfigure}
   \begin{subfigure}{0.45\textwidth}
   \begin{displaymath}
     \begin{array}{c}
	\xymatrix{  x \ar[r]^{} \ar@2{~}[dr]	& y    \ar@{-->}[r]^{*}       & z \\
						& y'   \ar@{-->}[r]^{*}       & z' \ar@{~}[u]      }\\ \\
	 ( \beta  )
     \end{array}
     \end{displaymath}
   \end{subfigure}
 \subcaption{Local Confluence modulo $\approx$.}
   \label{fig:LocalConfModEq}
 \end{subfigure}  
%
%
\caption{Diagrams for the fundamental notions.
A dotted arrow (single wave line) indicates an inferred step (inferred equivalence).}
\label{fig:alphaBeta}
\end{figure}
%
Term rewriting systems have been studied extensively, and
 terminology and several important results carry over to CHR, as we will see below.
In the following, we assume the reader familiar with the notions of terms over some signature and variables,
substitutions and most general unifiers.

\begin{definition}[Term Rewriting System; semi-formal version adapted from~\cite{BaderNipkow1999}]
A \emph{term rewriting system (TRS)} consists of a finite set of
rules of the form $(l,r)$ in which any variable in $r$ also appears in $l$.
The application of such a rule to a term $s$ to obtain another term $t$, written $s\rightarrow t$
is obtained by 1) find a substitution $\theta$, such that $l\theta$ is a subterm of $s$, and 2)
$t$ is given by replacing that subterm in $s$ by $r\theta$.
\end{definition}
The following notion of critical pairs represents cases in which two rules both can apply in the same
subterm, but if one is applied, the second one cannot be applied successively.

\begin{definition}[TRS Critical Pair; adapted from~\cite{BaderNipkow1999}]
Consider $R_k=(l_k,r_k), k=1,2$ (assumed renamed apart so they have no variable in common)
for which there is a most general unifier $\sigma$ of $l_2$ and a non-variable subterm of $l_1$.
Then $\langle t_1, t_2\rangle$ is a \textit{critical pair}, whenever
$l_1\sigma\rightarrow t_k$ using $R_k$, $k=1,2$.
\end{definition}
For example, the two rules $(f(a),b)$ and $(a,c)$ give rise to the critical pair $\langle b, f(c)\rangle$;
both are derived from the common ancestor term $f(a)$, i.e.,
$b\leftarrow f(a)\rightarrow f(c)$.

In 1970, Knuth and Bendix~\cite{KnuthBendix1970} developed the following, fundamental properties,
later elaborated by Huet~\cite{DBLP:journals/jacm/Huet80}.
We bring them in detail as very similar properties
holds for CHR.

\begin{lemma}[Critical Pair Lemma for TRS~\cite{DBLP:journals/jacm/Huet80,KnuthBendix1970}]\label{def:criticalPairsTRS}
Let a TRS be given and assume terms $s,t_1,t_2$ such that $t_1\leftarrow s \rightarrow t_2$.
Then either 
\begin{itemize}
  \item $t_1$ and $t_2$ are joinable, or
  \item there exists an instance $\langle u_1,u_2\rangle$ or  $\langle u_2,u_1\rangle$ of a critical pair
  and a specific subterm $s'$ of $s$ such that\\
      $t_k$ is a copy of $s$ in which $s'$ is replaced by $u_k$, $k=1,2$.
\end{itemize}
\end{lemma}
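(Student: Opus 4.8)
The plan is to analyze the two one-step rewrites according to the relative positions of the redexes on which they act. Write $s\rel t_1$ as an application of a rule $(l_1,r_1)$ at some position $p_1$ with substitution $\theta_1$, so that the subterm of $s$ at $p_1$ equals $l_1\theta_1$ and $t_1$ is the copy of $s$ in which that subterm is replaced by $r_1\theta_1$; likewise write $s\rel t_2$ using $(l_2,r_2)$ at a position $p_2$ with substitution $\theta_2$ (the two rules taken renamed apart, and possibly equal). Since any two positions in a term are either disjoint, or one is a prefix of the other, it suffices to treat three cases: (i) $p_1$ and $p_2$ are disjoint; (ii) one is a proper extension of the other, but the extension descends into a subterm matched against a variable of the outer rule's left-hand side; and (iii) one is a proper extension of the other and reaches a non-variable subterm of the outer left-hand side. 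The degenerate case $p_1=p_2$ is a special instance of (iii).

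In case (i) the two redexes occupy non-overlapping regions of $s$, so the rewrites commute: applying $(l_2,r_2)$ at $p_2$ to $t_1$, and applying $(l_1,r_1)$ at $p_1$ to $t_2$, both yield the single term in which both subterms have been replaced. Hence $t_1$ and $t_2$ are joinable, in one step on each side. In case (ii), assume without loss that $p_2=p_1 q$ and that $q$ descends below a variable position of $l_1$; then the inner redex lies entirely inside the part of $s$ bound by $\theta_1$, say inside the binding of a variable $x$ occurring in $l_1$. Rewriting that binding by $(l_2,r_2)$ turns $\theta_1$ into a modified substitution $\theta_1'$ that agrees with $\theta_1$ except on $x$. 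Joinability then follows by rewriting \emph{all} occurrences of the altered binding: from $t_2$ we reapply $(l_1,r_1)$ using $\theta_1'$, and from $t_1$ we apply $(l_2,r_2)$ once for each occurrence of $x$ in $r_1$. Both derivations reach the common term $r_1\theta_1'$, so $t_1\relRT r_1\theta_1'\irelRT t_2$.

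Case (iii) is exactly where critical pairs arise. Assuming again $p_2=p_1 q$ with $q$ addressing a non-variable subterm of $l_1$, the two matchings force that non-variable subterm and $l_2$ to be unifiable; let $\sigma$ be their most general unifier. By the definition of critical pair, the pair $\langle u_1,u_2\rangle$ determined by $l_1\sigma\rel u_k$ (using $(l_k,r_k)$, $k=1,2$) is a critical pair of the two rules. Because $\sigma$ is most general, the common matching substitution factors as $\sigma$ followed by some $\tau$, and one checks that taking $s'$ to be the outer redex $s|_{p_1}$ and replacing $s'$ by $u_k\tau$ reproduces exactly $t_k$. This yields the second alternative of the statement, with the pair written $\langle u_1,u_2\rangle$ or $\langle u_2,u_1\rangle$ according to which rule was the outer one.

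I expect the main obstacle to be case (ii): the variable $x$ may occur any number of times in $l_1$, and correspondingly any number of times (including zero) in $r_1$, so the argument that both sides reach $r_1\theta_1'$ requires rewriting every copy of the altered binding while verifying that no other part of the term is disturbed. The unification argument of case (iii) is routine once the factorization of the matching substitution through $\sigma$ is set up, and case (i) is immediate.
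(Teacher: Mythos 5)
Your argument is correct and is the standard Knuth--Bendix/Huet proof; note, however, that the paper does not actually prove this lemma --- it is quoted as a classical background result with citations, and the only proof of a statement of this shape carried out in the paper is the CHR analogue (the Critical Corner Lemma, proved in the appendix). That proof shares your case (i): when the two rule applications touch disjoint material (the case $\mathit{Overlap}=\emptyset$ there), the steps commute and the two results join in one further step on each side; your cases (ii) and (iii) have no CHR counterpart, because CHR rewrites a multiset at top level, redexes cannot be nested, and so every genuine overlap is already a critical one. On your own argument, the only point needing more care is the one you flag yourself: in case (ii), when $l_1$ is not linear in $x$, the term $t_2$ no longer matches $l_1$ at $p_1$ (one occurrence of $\theta_1(x)$ has been rewritten while the others have not), so before you can ``reapply $(l_1,r_1)$ using $\theta_1'$'' from $t_2$ you must first rewrite the remaining occurrences of the old binding inside $s|_{p_1}$; only then do both sides reach the common term, which is $s$ with the subterm at $p_1$ replaced by $r_1\theta_1'$ rather than the bare term $r_1\theta_1'$. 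With that repair spelled out, the proof is complete and agrees with the argument in the cited sources.
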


\begin{theorem}[Critical Pair Theorem for TRS~\cite{DBLP:journals/jacm/Huet80,KnuthBendix1970}]\label{thm:criticalPairs}
 A  TRS is locally confluent if and only if all its critical pairs are joinable.
\end{theorem}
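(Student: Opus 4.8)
The plan is to prove both directions of the biconditional, with the forward implication being immediate from the definitions and the backward implication resting on the Critical Pair Lemma (Lemma~\ref{def:criticalPairsTRS}) together with two standard closure properties of the rewrite relation.

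For the ($\imply$) direction, suppose the TRS is locally confluent and let $\langle t_1,t_2\rangle$ be an arbitrary critical pair, arising from a common ancestor $l_1\sigma$ with $t_1\leftarrow l_1\sigma\rightarrow t_2$ via the rules $R_1$ and $R_2$. Since this is a single-step peak $t_1\leftarrow l_1\sigma\rightarrow t_2$, local confluence applies directly and yields a $z$ with $t_1\relRT z\irelRT t_2$; hence $t_1$ and $t_2$ are joinable. As the critical pair was arbitrary, all critical pairs are joinable.

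For the ($\Leftarrow$) direction, assume all critical pairs are joinable and consider an arbitrary peak $t_1\leftarrow s\rightarrow t_2$. Here I would invoke Lemma~\ref{def:criticalPairsTRS}, which splits the situation into two cases. In the first case, $t_1$ and $t_2$ are already joinable and there is nothing to prove. In the second case, there is a subterm $s'$ of $s$ and an instance $\langle u_1,u_2\rangle$ (or the reversed orientation $\langle u_2,u_1\rangle$) of some critical pair $\langle c_1,c_2\rangle$, such that $t_k$ is the copy of $s$ in which $s'$ is replaced by $u_k$. By assumption $c_1$ and $c_2$ are joinable, say $c_1\relRT w\irelRT c_2$, and it remains to transport this joinability up to $t_1$ and $t_2$.

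The transport proceeds in two steps and uses the two monotonicity properties of term rewriting, both of which follow directly from the definition of rule application in a TRS: $\rightarrow$ is closed under substitution (if $a\rightarrow b$ then $a\theta\rightarrow b\theta$) and closed under context (if $a\rightarrow b$ then $C[a]\rightarrow C[b]$), whence $\relRT$ and joinability inherit the same closures. Let $\theta$ be the substitution witnessing that $\langle u_1,u_2\rangle$ instantiates $\langle c_1,c_2\rangle$, and let $C[\cdot]$ be the context given by $s$ with a hole at the position of $s'$. Closure under substitution turns $c_1\relRT w\irelRT c_2$ into $u_1\relRT w\theta\irelRT u_2$, and closure under context then gives $t_1=C[u_1]\relRT C[w\theta]\irelRT C[u_2]=t_2$, so $t_1$ and $t_2$ are joinable; the reversed orientation is handled identically. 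Since every peak is thereby joinable, the system is locally confluent. The genuinely hard combinatorial content --- the case analysis separating overlapping redexes, which yield critical-pair instances, from disjoint or properly nested ones, which are joinable outright --- is already packaged inside Lemma~\ref{def:criticalPairsTRS}, so the only real obstacle at the level of this theorem is the careful bookkeeping of the substitution-and-context lifting, the heavier work living in the lemma I am permitted to assume.
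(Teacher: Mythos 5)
Your proof is correct and follows exactly the route the paper intends: the paper states this theorem without proof (citing Huet and Knuth--Bendix), but it places the Critical Pair Lemma (Lemma~\ref{def:criticalPairsTRS}) immediately before it precisely so that the theorem follows by the argument you give --- the forward direction directly from the definition of local confluence, and the backward direction by lifting joinability of a critical pair through substitution and context closure of $\rightarrow$. Both closure properties are indeed immediate from the definition of rule application in a TRS, so no gap remains.
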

This theorem in combination with Newman's lemma leads to a desired result: \emph{A terminating TRS is confluent if and only if all its critical pairs are joinable.}
Furthermore, confluence of a finite terminating TRS is decidable
(as there is only a finite number of critical pairs and finitely many finite derivations to test out from their states).

Mayr and Nipkow~\cite{DBLP:journals/tcs/MayrN98} studied 
confluence modulo equivalence for a subset of higher-order rewriting systems (that extend term rewriting to $\lambda$-terms).
They used an alternative version of Theorem~\ref{thm:confLconfModEq} in which the $\beta$ property is replaced
by a $\gamma$ property, as shown below.
It applies when the equivalence $\approx$ is specified as the transitive closure
of
a symmetric relation $\vdash\!\dashv$; such a relation may, e.g., be generated by
a set of equations.
\begin{lemma}[$\alpha$ \& $\gamma$ Confluence \cite{DBLP:journals/jacm/Huet80}]~\label{lemma:alpga-and-gamma}
 Let $\vdash\!\dashv$ be a symmetric relation and  ${\approx}  = (\vdash\!\dashv)^{*}$.
 Let $\rightarrow$ be any relation such that the composition $\rightarrow \cdot \approx$ is terminating. 
 Then $\rightarrow$ is confluent modulo $\approx$ if and only if the conditions $\alpha$ and $\gamma$ are satisfied:
 \begin{align*}
\alpha:\qquad & \forall x,y, y'. \quad y' \leftarrow x \rightarrow y \qquad \Longrightarrow \qquad \exists z, z'. \quad  y' \relRT z' \approx z \irelRT y \\
\gamma: \qquad & \forall x, y', y. \quad y' \vdash\!\dashv x \rightarrow y  \qquad\, \Longrightarrow \qquad \exists z, z'. \quad  y' \relRT z' \approx z \irelRT y
\end{align*}
\end{lemma}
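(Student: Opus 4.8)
The plan is to prove the biconditional by treating the two directions separately, with the forward implication being essentially immediate and the backward implication carrying all the weight.

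For the direction ``confluent modulo $\approx$ $\imply$ $\alpha$ and $\gamma$'', I would simply exhibit each of $\alpha$ and $\gamma$ as a specialization of Definition~\ref{def:confModEq}. For $\alpha$, given a peak $y' \leftarrow x \rightarrow y$, I take $x' := x$ in Definition~\ref{def:confModEq}; then $y' \irelRT x'$ and $x \relRT y$ hold via the single steps and $x' \approx x$ by reflexivity, so confluence modulo $\approx$ yields exactly the $\alpha$-conclusion. For $\gamma$, given $y' \vdash\!\dashv x \rightarrow y$, I use that a single $\vdash\!\dashv$ step is a one-step chain of $(\vdash\!\dashv)^{*} = {\approx}$, hence $y' \approx x$; taking $x' := y'$ (an empty reverse derivation $y' \irelRT x'$) gives an instance $y' \irelRT x' \approx x \relRT y$, and confluence modulo $\approx$ delivers the $\gamma$-conclusion.

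For the substantive direction ``$\alpha$ and $\gamma$ $\imply$ confluent modulo $\approx$'', the plan is a Noetherian induction on the well-founded order induced by the terminating relation $\rightarrow \cdot \approx$. First I would record the easy observation that termination of $\rightarrow \cdot \approx$ forces termination of $\rightarrow$ itself, since any infinite $\rightarrow$-chain becomes an infinite $\rightarrow\cdot\approx$-chain by appending reflexive $\approx$ steps; this keeps the hypotheses of Theorem~\ref{thm:confLconfModEq} available. Given a configuration $y' \irelRT x' \approx x \relRT y$ to be closed, I would expand the bridge $x' \approx x$ as a finite chain $x' = e_0 \vdash\!\dashv e_1 \vdash\!\dashv \cdots \vdash\!\dashv e_m = x$ and then chase the diagram: genuine rewrite-peaks are resolved by $\alpha$, each individual $\vdash\!\dashv$ bridge is pushed against an adjacent single rewrite step by $\gamma$, and every freshly created, strictly smaller configuration is closed by the induction hypothesis, the pieces being reassembled using transitivity of $\relRT$ and of $\approx$. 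An attractive packaging is to use this same induction to first establish the $\beta$ property of Definition~\ref{def:alfaBeta} from $\alpha$ and $\gamma$, and then invoke Theorem~\ref{thm:confLconfModEq} (with $\rightarrow$ terminating) to assemble local confluence modulo $\approx$ into full confluence modulo $\approx$.

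The main obstacle is precisely the mismatch between what $\gamma$ provides and what must be closed: $\gamma$ only handles one $\vdash\!\dashv$ step set against one $\rightarrow$ step, whereas the hypothesis gives an entire equivalence chain $x' \approx x$ confronted with multi-step derivations on both flanks. Propagating a rewrite across the whole chain therefore forces repeated interleaving of $\gamma$ and $\alpha$, each application spawning new rewrite and equivalence activity; the delicate point is guaranteeing that the induction measure strictly decreases at every recursive call. This is exactly where the assumption that the \emph{composite} $\rightarrow \cdot \approx$ terminates (rather than merely $\rightarrow$) is indispensable, since it is the composite activity, not the rewriting alone, that is generated during the chase. I expect the bookkeeping that certifies this decrease — showing each subconfiguration handed to the induction hypothesis lies strictly below $x$ in the $\rightarrow\cdot\approx$ order — to be the technically demanding part of the argument.
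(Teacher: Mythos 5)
First, a point of comparison that matters: the paper does not prove this lemma at all. It is quoted from Huet's 1980 paper with a citation, and the text immediately following it says explicitly that the lemma is not used in the present work. So there is no in-paper proof to measure you against; your attempt has to stand on its own, and the natural reference is Huet's original argument, whose strategy (Noetherian induction on the order induced by $\rightarrow\cdot\approx$) you have correctly identified. Your forward direction is complete and correct: $\alpha$ is the instance of Definition~\ref{def:confModEq} with $x'=x$ and reflexivity of $\approx$, and $\gamma$ is the instance with $x'=y'$, an empty left derivation, and the observation that a single $\vdash\!\dashv$ step lies in $(\vdash\!\dashv)^{*}={\approx}$. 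The preliminary remark that termination of $\rightarrow\cdot\approx$ implies termination of $\rightarrow$ (pad each step with a reflexive $\approx$) is also right and worth recording.

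The backward direction, however, is a plan rather than a proof, and the part you defer is essentially all of the mathematical content. Saying that one chases the diagram, resolving peaks by $\alpha$ and pushing each $\vdash\!\dashv$ bridge against an adjacent rewrite step by $\gamma$, and closing ``every freshly created, strictly smaller configuration'' by the induction hypothesis, presupposes two things you never supply: (i) a precise statement of the property $P(x)$ being proved by Noetherian induction — a whole configuration $y'\irelRT x'\approx x\relRT y$ is not an element of the order, so you must say which element(s) the induction is on and formulate $P$ strongly enough (in Huet's proof this is a carefully chosen combined statement, since closing the equivalence chain at $x$ requires the confluence property at points that are below $x$ only in the composite order $\rightarrow\cdot\approx$, not under $\rightarrow$ alone); and (ii) the verification, case by case, that each recursive call really is on a strictly smaller element — including the degenerate cases where a derivation produced by $\gamma$ is empty, so no $\rightarrow$-step is available to witness descent and one must fall back on transitivity of $\approx$ instead of the induction hypothesis. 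You correctly flag this bookkeeping as ``the technically demanding part,'' but flagging it is not doing it. The alternative packaging — derive $\beta$ of Definition~\ref{def:alfaBeta} from $\alpha$ and $\gamma$ and then invoke Theorem~\ref{thm:confLconfModEq} — is sound in outline but does not reduce the difficulty: establishing $\beta$ from $\gamma$ across an arbitrary $\vdash\!\dashv$-chain requires exactly the same simultaneous induction, so it cannot serve as a shortcut around the missing argument.
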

%
%
We do not use this lemma in the present paper, but possible
applications are
discussed in the concluding section.

\subsection{Confluence for Constraint Handling Rules}\label{sec:backgroundConfCHR}
Constraint Handling Rules, CHR, can be understood as a rewrite system over states
that are multisets of constraints
as shown in Example~\ref{ex:collect} above, p.~\pageref{ex:collect}.\footnote{The rule
in the example program is a so-called simplification rule.
CHR also includes other types of rules, that
do not introduce additional conceptual difficulties
in relation to confluence, although they 
imply an extra notational  overhead.}

The known results on confluence for CHR are very similar to those on term rewriting systems shown above.
Similar  critical pairs of states  may appear when two instances of rules
can apply to overlapping constraints;
the precise definition is given in Section~\ref{sec:CHR} below.
The following shows the construction of
such a critical pair for an overlap of two different instances of the only rule in the program
of Example~\ref{ex:collect}, p.~\pageref{ex:collect}, above.
$$
\{\texttt{item(Y)},\texttt{set([X|L])}\} \leftarrow \{\texttt{item(X)},\texttt{item(Y)},\texttt{set(L)}\}\rightarrow \{\texttt{item(X)}, \texttt{set([Y|L])}\}
$$
The first publications by Fr\"uhwirth on CHR appeared in 1993--4~\cite{DBLP:conf/iclp/Fruhwirth93,DBLP:journals/lncs/Fruhwirth94}. Soon after, around 1996, the central results
on confluence for CHR were developed by Abdennadher and others~\cite{DBLP:conf/cp/Abdennadher97,DBLP:conf/cp/AbdennadherFM96}, however, only for the subset of CHR 
with logical built-ins and neither invariant nor equivalence.
The concepts and results from the area of term rewriting can be transferred to CHR so that
the following results hold; CHR$^0$ refers to the indicated subset of CHR.
\begin{itemize}
  \item A CHR$^0$ program is locally confluent if and only if all its critical pairs are joinable.
  \item The set of critical pairs is finite and
  local confluence is decidable; automatic checkers of this property has been developed 
for CHR$^0$, e.g.,~\cite{Raiser-Langbein2010}
  \item A terminating CHR$^0$ program is confluent if and only if all its critical pairs are joinable.
\end{itemize}
%
These results are based on the previously mentioned subsumption principle which essentially boils down to the following. 
\begin{itemize}
  \item[(*)] whenever a (e.g., critical) pair of CHR$^0$ states $x,y$ are joinable,
it holds for any substitution $\theta$ and constraint set $s$ that $x\theta\cup s$ and $y\theta\cup s$
are joinable.
\end{itemize}
Section~\ref{sec:earlier-CHR-confl-etc}, p.~\pageref{sec:earlier-CHR-confl-etc},
gives a precise analysis and also shows that these results do not generalize directly to the larger subset of CHR considered in the present paper.

In 2007, Duck et al~\cite{DBLP:conf/iclp/DuckSS07} argued for the introduction of state invariants;
a state invariant $I(\cdot)$ is a property that is preserved by the derivations of the current program,
and it may, e.g., be defined by reachability from a set of intended queries.
We define an \emph{$I$-state} $x$ as a state for which $I(x)$ holds.
%
The precise definitions and arguments are given in Section~\ref{sec:semantics}, respectively~\ref{sec:earlier-CHR-confl-etc}.
They define \emph{(local) observable confluence} for  CHR$^0$
 as above,  considering only derivations between $I$-states.

While  this generalization of confluence is highly relevant from a practical
point of view, it is inherently more difficult, as the property (*) above does not generalize.
For this discussion, we refer to a state  $x\theta\cup s$ (pair $\langle x\theta\cup s, y\theta\cup s\rangle$) as an \emph{extension} of state $x$ (pair $\langle x, y\rangle$).
A state $x$ (e.g., in a critical pair) may not be an $I$-state in itself, but some of its extended states may be $I$-states; the other way round, some extensions of an $I$-state may not be $I$-states.
Duck et al~\cite{DBLP:conf/iclp/DuckSS07}
 considered cases where, for each critical pair
$\langle x, y\rangle$
, a collection 
of most general extensions $\{\langle x_i, y_i\rangle\}_{i\in\mathit{Inx}}$ exists,
such that any such $\langle x_i, y_i\rangle$ and any extension of it consists of $I$-states.

For a given program $\Pi$ and invariant $I$, let ${\mathcal M}^{I,\Pi}$ be the set of all
such most general extensions for all critical pairs.
Then the following holds.
\begin{itemize}
  \item A CHR$^0$ program is locally observably confluent w.r.t.\ $I$ if and only if all  pairs in
     ${\mathcal M}^{I,\Pi}$ are joinable.
  \item A terminating CHR$^0$ program is observably confluent w.r.t.\ $I$ if and only if all  pairs in 
  ${\mathcal M}^{I,\Pi}$ are joinable.
\end{itemize}
Decidability is lost, and~\cite{DBLP:conf/iclp/DuckSS07} shows that even a standard invariant such
as groundedness leads to infinite ${\mathcal M}^{I,\Pi}$ sets.
The characterization of ${\mathcal M}^{I,\Pi}$ is 
complicated, and no practically
relevant methods have been proposed. 
In the present paper, we cope with these problems by 
introducing a 
meta-language in which we can reason about abstract versions of critical pairs and their joinability,
and in which the invariant is treated as a meta-level constraint.

We are not aware of other work than our own on confluence for CHR  that includes
non-logical predicates or takes an equivalence relation into account.
Confluence for nonterminating CHR programs has been studied by~\cite{DBLP:journals/tplp/Haemmerle12,RaiserTacchella2007}, and~\cite{DBLP:conf/lopstr/AbdennadherF03} has 
considered
how the integration of two programs known to be confluent can be made confluent by adding
new rules.

The choice of an operational semantics for CHR, i.e., a definition of the derivation relation for CHR,
influences the set of programs recognized as confluent
and the amount of notational overhead
needed for the proofs.
We postpone a comparison with selected other operational semantics until Section~\ref{sec:commentOpSem},
following the introduction of the necessary technical apparatus.

\section{Constraint Handling Rules}\label{sec:CHR}
In the following, we introduce CHR and our new operational semantics as a rewriting system.
 We highlight the differences in comparison with previous
semantics used for the study of confluence for CHR.
Ours differs most essentially in that it can describe
non-logical and incomplete built-ins, and we have also succeeded in introducing several
simplifications without loss of generality (apart from a subtle mathematical consequence
implied by some earlier semantics exposed in Example~\ref{ex:different-confluence}, p.~\pageref{ex:different-confluence}).

\subsection{Preliminaries}\label{sec:preliminaries}
We extend the basic concepts and notation introduced in Section~\ref{sec:rewrite}.
Derivation steps are labelled so we can distinguish how they are produced with reference to the
CHR program in question (letters $D$ and $d$ are typically used for such labels,
indicating a \emph{d}escription of the step).
We also introduce the notions $\alpha$- and $\beta$-corners to give a representation
of cases where the $\alpha$- and $\beta$ conditions may (or may not) hold.

\begin{definition}\label{def:corners-etc}
A \emph{derivation system} $\langle S, D, \ourmapsto, I, \approx\rangle$ consists of a set $S$, called
\emph{states}, a set of \emph{labels} $D$, a ternary \emph{derivation relation}
${\mapsto}\subseteq S\times D\times S$, an \emph{invariant} $I\subset S$,
and an \emph{equivalence} ${\approx}\subseteq S\times S$.

A fact $\langle x,d,y\rangle \in {\mapsto}$ is written $x \stackrel d\ourmapsto y$, in which case we also write
$x \ourmapsto y$, thus projecting it to a binary relation;
as usual 	$\mapstostar$ denotes the reflexive, transitive closure of $\ourmapsto$,
and \emph{derivation} is a successive sequence of zero or more, perhaps infinitely many, derivation steps.
For brevity, we may use $x\mapstostar y$ to indicate a derivation from $x$ to $y$, with labels understood.
The invariant property of $I$ means that $I(x)\land (x\stackrel d \ourmapsto y)$ implies $I(y)$;
a state $x$ with $I(x)$ is an \emph{$I$-state} and \emph{$I$-derivation} (step)s are those that involve only $I$-states.

An \emph{$\alpha$-corner} is a structure of the form $y'\mapsfrom x\mapsto y$ where $x,y,y'$ are states and
$y'\mapsfrom x$, $x\mapsto y$ are derivation steps;
a \emph{$\beta$-corner} is of the form $y'\approx x\mapsto y$ where $x,y,y'$ are states, 
 $y'\approx x$ holds and $x\mapsto y$ is a derivation step.  We may use the symbol $\Lambda$ to denote a corner.
 In both cases, the state $x$ is referred to as the \emph{common ancestor state} for the \emph{wing states} $y'$ and $y$.
Two $\alpha$-corners  $y'\mapsfrom x\mapsto y$ and  $y\mapsfrom x\mapsto y'$  are considered identical.
An $\alpha$- ($\beta$-) corner is called an \emph{$\alpha$- ($\beta$-) $I$-corner} when its states are $I$-states.

A \emph{joinability diagram} (modulo $\approx$) for an $\alpha$- or $\beta$-corner 
$$y'  \mathrel{\mathit{Rel}} x\stackrel {d_2}\ourmapsto y$$
(thus $\mathit{Rel}$ is one of $\stackrel {d_1}\mapsfrom$  or $\approx$) is a  structure
of the form
$$
z'\mapsfromstar   y'  \mathrel{\mathit{Rel}} x\stackrel {d_2}\ourmapsto y   \mapstostar z$$
where
$z'\mapsfromstar   y'$ and $y   \mapstostar z$  are derivations such that the  equivalence
 $z'\approx z$ holds. A diagram is sometimes denoted by the symbol $\Delta$.
A given corner is \emph{joinable} modulo $\approx$ whenever there exists a joinability diagram for it. An $\alpha$-corner of the form $y \mapsfrom x \ourmapsto y$ is called \emph{trivially joinable} (modulo $\approx$).

A derivation system $\langle S, D, \ourmapsto, I, \approx\rangle$ is \emph{confluent modulo $\approx$ (with respect to $I$)}
if and only if, for all $I$-states $y',x,y$: $y' \mapsfromstar x \mapstostar y \imply \exists z,z'.\ y' \mapstostar z'\approx z \mapsfromstar y$, and is \emph{locally confluent modulo $\approx$  (with respect to $I$)} if and only if all its $I$-corners are joinable modulo $\approx$.
\end{definition}
Joinability diagrams may be shown as in Figure~\ref{fig:LocalConfModEq},
and notions of (local) ($I$-) confluence (modulo $\approx$) $I$-termination apply as already introduced. 
We can  reformulate Theorem~\ref{thm:confLconfModEq} as  follows. 
\begin{theorem}\label{thm:I-joinable-corners}
An $I$-terminating derivation system is $I$-confluent modulo $\approx$ if and only if
all its $I$-corners (of type $\alpha$ as well as $\beta$) are
joinable  modulo $\approx$.	
\end{theorem}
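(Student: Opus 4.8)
The plan is to derive this as a specialization of Huet's Theorem~\ref{thm:confLconfModEq} to the subsystem of states singled out by the invariant. First I would form the restricted derivation system: let $S_I = \{x \in S \mid I(x)\}$, let $\ourmapsto_I$ be $\ourmapsto$ restricted to $S_I\times D\times S_I$, and let $\approx_I$ be $\approx$ restricted to $S_I\times S_I$. The crucial observation is that the invariant property --- $I(x)\land(x\stackrel d\ourmapsto y)$ implies $I(y)$ --- makes $S_I$ closed under $\ourmapsto$: every derivation, and in particular every \emph{joining} derivation, issuing from an $I$-state stays inside $S_I$. Consequently $\langle S_I,\ourmapsto_I\rangle$ is a bona fide rewriting system in the sense of Section~\ref{sec:rewrite}, $\approx_I$ is an equivalence on $S_I$ (reflexivity, symmetry and transitivity are inherited under restriction), and on $S_I$ the relation $\mapstostar$ coincides with the reflexive-transitive closure of $\ourmapsto_I$.

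Second, I would set up the dictionary that matches the $I$-relative notions of Definition~\ref{def:corners-etc} with the absolute notions of the restricted system. $I$-termination is exactly termination of $\langle S_I,\ourmapsto_I\rangle$, since an infinite $I$-derivation is the same thing as an infinite $\ourmapsto_I$-chain. An $\alpha$-$I$-corner $y'\mapsfrom x\ourmapsto y$ (resp.\ a $\beta$-$I$-corner $y'\approx x\ourmapsto y$) is precisely an instance of the premise of the $\alpha$ (resp.\ $\beta$) condition of Definition~\ref{def:alfaBeta} for $\ourmapsto_I$ and $\approx_I$, and joinability modulo $\approx$ of such a corner is exactly the existence of the joining states required on the right-hand side --- with all witnesses $z,z'$ automatically in $S_I$ by closure. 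Hence ``all $\alpha$- and $\beta$-$I$-corners are joinable modulo $\approx$'', i.e.\ local $I$-confluence modulo $\approx$, is identical to the restricted system being locally confluent modulo $\approx_I$ in the sense of Definition~\ref{def:LconfModEq}; likewise $I$-confluence modulo $\approx$ is confluence modulo $\approx_I$ of the restricted system.

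With this dictionary in place the statement is Theorem~\ref{thm:confLconfModEq} applied verbatim to $\langle S_I,\ourmapsto_I\rangle$ with equivalence $\approx_I$: an $I$-terminating (hence terminating) restricted system is confluent modulo $\approx_I$ if and only if it is locally confluent modulo $\approx_I$, which unwinds to the desired biconditional.

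The step I expect to be the main obstacle is the precise alignment of the confluence-modulo notions, and checking that the correspondence is faithful in \emph{both} directions. The ``if'' direction (corners joinable $\imply$ confluent) flows directly from Huet. The ``only if'' direction requires that $I$-confluence yield joinability of the $\beta$-$I$-corners, and this is where the exact shape of the definition matters: one must read confluence modulo $\approx$ in Huet's form, with two $\approx$-related common ancestors as in Definition~\ref{def:confModEq}, rather than a single-ancestor reading, since only the former entails the $\beta$ property. Concretely, from the two-ancestor premise $y'\mapsfromstar x'\approx x\mapstostar y$ one recovers $\beta$ by taking the left wing trivial ($x'=y'$) and recovers $\alpha$ by taking $x'=x$; a single common ancestor does not reach the $\beta$ case, even under termination. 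I would therefore make explicit that the confluence-modulo-$\approx$ notion in play is the two-ancestor one, confirm it coincides with Definition~\ref{def:confModEq} restricted to $I$-states, and only then invoke Huet's theorem; the remaining verifications are routine bookkeeping enabled by the invariant-closure of $S_I$.
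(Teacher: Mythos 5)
Your proposal is correct and matches the paper's approach: the paper gives no explicit proof of this theorem, presenting it simply as a reformulation of Huet's Theorem~\ref{thm:confLconfModEq} obtained by restricting to the $I$-states, which is exactly the restriction-and-dictionary argument you spell out. Your closing caveat is a genuine and worthwhile catch: Definition~\ref{def:corners-etc} states confluence modulo $\approx$ with respect to $I$ in the \emph{single}-ancestor form $y'\mapsfromstar x\mapstostar y$, under which the ``only if'' direction would actually fail (a terminating system can satisfy that condition while having a non-joinable $\beta$-corner), so the theorem must indeed be read against the two-ancestor form of Definition~\ref{def:confModEq} as you propose, and making that explicit is a correction the paper itself does not supply.
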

We assume standard notions of first-order logic such as predicates, atoms and terms.
For any expression $E$, $\vars(E)$ refers to the set of variables occurring in $E$.
A \emph{substitution} is a mapping from a finite set of variables to terms, e.g., the substitution $[x/t]$ replaces variable $x$ by term $t$. 
For substitution $\sigma$  and expression $E$, $E\sigma$
(or $E\cdot\sigma$)
denotes the expression that arises when $\sigma$ is applied to $E$; composition of two substitutions $\sigma, \tau$
is denoted $\sigma\circ\tau$.
Special substitutions $\failure$ and $\error$ are assumed, the first one representing falsity and the second
one runtime errors; a substitution different from these two is called a \emph{proper substitution}.

Two disjoint sets of \emph{(user) constraints} and  \emph{built-in} predicates are assumed.
Our semantics for built-ins differs from previous approaches
by mapping them immediately to a unique substitution. This makes
it possible to handle non-logical devices such as Prolog's
\texttt{var}/1 and run-time errors as they may arise from incomplete built-ins such as \texttt{is}/2.

An evaluation procedure $\exe$ for built-in atoms $b$ is assumed,
such that $\exe(b)$ is either a (possibly identity) substitution to a subset of $\vars(b)$ or 
one of $\failure$ and  $\error$.
It extends to sequences of built-ins as follows.
$$
\exe((b_1,b_2)) =
\begin{cases}
\exe(b_1) & \text{when $\exe(b_1)\in\{\failure,\error\}$},\\
\exe(b_2\cdot\exe(b_1)) & \text{when otherwise $\exe(b_2\cdot\exe(b_1))$}\\
& \text{~~~~~~~~~~~~~~~~~~~$\in\{\failure,\error\}$},\\
 \exe(b_1)\circ\exe(b_2\cdot\exe(b_1)) & \text{otherwise\label{text-def:exe}}
\end{cases}
$$
A built-in $b$ or sequence of such is \emph{satisfiable}
whenever there exists a substitution $\theta$ such that $\exe(b\theta)$ is a proper substitution.
A subset of built-in predicates are the \emph{logical} ones, whose meaning is given by a
first-order theory $\mathcal{B}$.
For a logical atom $b$ with $\exe(b)\neq\error$, the following conditions must hold. 
\begin{itemize}
  \item  Partial correctness: $\mathcal{B}\models \forall_{\vars(b)}(b \leftrightarrow \exists_{\vars(\exe(b))\setminus\vars(b)}\exe(b))$.
  \item Instantiation monotonicity:  $\exe(b\cdot\sigma)\neq \error$ for all substitutions $\sigma$.
\end{itemize}
A built-in predicate $p$ is \emph{incomplete} if there exists an atom $b$ with predicate $p$ for which $\exe(b)=\error$; any other built-in predicate is \emph{complete}.
Any built-in predicate which is not logical is called \emph{non-logical}.
A \emph{most general instance} of a built-in predicate $p/n$ is
an atom $p(v_1,\ldots,v_n)$ where $v_1,\ldots,v_n$ are new and unused variables. 
The following predicates are examples of built-ins, and
the list can be extended if needed.

\begin{definition}\label{def:built-ins}
The following list of built-in predicates are assumed
with their meaning as indicated; $\epsilon$ is the identity substitution.
\begin{enumerate}
  \item $\exe(t\mathrel{\texttt{=}}t')=\sigma$ where $\sigma$ is a most general unifier of $t$ and $t'$;
  if no such unifier exists, the result is $\failure$.
  \item $\exe(\texttt{true})$ is $\epsilon$.
  \item $\exe(\texttt{fail})$ is $\failure$.
  \item $\exe(\texttt{$t$ is $t'$}) = \exe(t\mathrel{\texttt{=}}v)$
  whenever $t'$ is a ground term that can be interpreted as an arithmetic expression with  value $v$; if no such
  $v$ exists, the result is $\error$.
   \item $\exe(\texttt{$t$ >= $t'$})$ 
   is $\epsilon$ whenever $t,t'$ are ground terms 
   that can be interpreted as arithmetic expressions with values $v,v'$ where $v \geq v'$;
   if such values exist but  $v < v'$, the result is $\failure$; otherwise, the result is  $\error$.
   \item $\exe(\texttt{var($t$)})$ is $\epsilon$ if $t$ is a variable and $\failure$ otherwise.
  \item $\exe(\texttt{nonvar($t$)})$ is $\epsilon$ when $t$ is not a variable and $\failure$ otherwise.
  \item $\exe(\texttt{ground($t$)})$ is $\epsilon$ when $t$ is ground and  $\failure$ otherwise.
  \item $\exe(\texttt{constant($t$)})$ is $\epsilon$ when $t$ is a constant and  $\failure$ otherwise.
  \item $\exe(t\mathrel{\texttt{==}}t')$ is $\epsilon$ when $t$ and $t'$ are identical and $\failure$ otherwise.
  \item $\exe(t\mathrel{\texttt{\char92=}}t')$ is $\epsilon$ when $t$ and $t'$ are non-unifiable and  $\failure$ otherwise.
\end{enumerate}\end{definition}
The first three predicates in Definition~\ref{def:built-ins} above are logical and complete;
``\texttt{is}'' and ``\texttt{>=}'' are logical but not complete.
The remaining ones are non-logical.

For the representation of CHR execution states, we introduce \emph{indices:}
an \emph{indexed set} $S$ is a set of items of the form $i{:}x$ where $i$ belongs to some index set and each such $i$ is unique in $S$.
When clear from context, we may identify an indexed set $S$
with its cleaned version $\{x\mid i{:}x\in S\}$.
Similarly, the item $x$ may identify the indexed version $i{:}x$. We extract the indices by $\id(i{:} x) = i$.

%

\subsection{Operational Semantics}\label{sec:semantics}
The following operational semantics is based on principles introduced
in~\cite{DBLP:conf/lopstr/ChristiansenK14};
it differs from those used in previous work in several ways that we discuss in Section~\ref{sec:commentOpSem} below.

As custom in recent theoretical work on CHR,
we use the  \emph{generalized simpagation} form~\cite{fru_chr_book_2009} as a common
representation for the rules of CHR.
The guards can modify variables that also occur in rule bodies, but not variables
that occur in the constraints matched by the head rules.

\begin{definition}\label{def:rules}
A \emph{rule} $R$ is of the form
$$r\colon\; H_1 \setminus H_2 \;\mathtt{<=>}\; g\mid C,$$
where $r$ is a unique identifier for the rule, $H_1$ and $H_2$ are sequences of constraints, 
forming the \emph{head} of  the rule,
$g$ is a \emph{guard} being a sequence of built-ins,
and $C$ 
is 
a sequence of constraints and built-ins called the \emph{body} of $R$.
Any of $H_1$ and $H_2$, but not both, may be empty.
A \emph{program} is a finite set of rules.

A \emph{most general pre-application instance} of rule $R$ is an indexed variant $R'$ of $R$ containing new
and fresh variables.

An \emph{application instance} of rule $R$ is a structure of the form
$$R'' = R'\sigma = (r\colon\; H'_1\sigma \setminus H'_2\sigma \;\mathtt{<=>}\; g'\sigma\mid C'\sigma)
$$
where $R'$ is a most general pre-application instance,
$\sigma$ is a substitution for the variables of $H'_1, H'_2$ and
  $\exe(g'\sigma)$ is a proper substitution such that\footnote{The condition
  indicates that the guard's substitution is not allowed
  to instantiate the variables in the head part.}
 $$(H_1'\uplus H_2')\sigma = (H_1' \uplus H_2')\sigma\,\exe(g'\sigma).$$
The part $g'$ ($g'\sigma$) is referred to as the \emph{guard of} $R'$ ($R''$). 
The \emph{application record} for  $R'$ ($R''$), denoted $\mathrm{applied}(R')$ ($\mathrm{applied}(R'')$)
is the structure
$$r\, @\, i_1 \ldots i_n$$
where $i_1 \ldots i_n$ is the sequence of indices of $H_1, H_2$ in the order they occur.

A rule is a \emph{simplification} when $H_1$ is empty,
a \emph{propagation} when $H_2$ is empty;
in both cases, the backslash is left out, and for a propagation, the  arrow symbol is written $\mathtt{==>}$ instead.
Any other rule is a \emph{simpagation}.
\end{definition}
Following~\cite{RaiserEtAl2009},
an execution state is defined in terms of a suitable equivalence class  that abstracts away irrelevant details
concerning which actual variables and indices are used.
%

\begin{definition}\label{def:state}
A \emph{(CHR) state representation} is a pair $\langle S, T\rangle$, where
\begin{itemize}
  \item $S$ is a finite, indexed set of atoms called the \emph{constraint store},
  \item $T$ is a set of relevant application records called the \emph{propagation history},
\end{itemize}
where a \emph{relevant} application record is one in which each index refers to an index in $S$.
Two state representations $S_1$ and $S_2$ are \emph{variants}, denoted $S_1\equiv S_2$, whenever one can be obtained
from the other by a renaming of variables and a consistent replacement of indices (i.e., by a 1-1 mapping).
When $\Sigma$ is the set of all state representations, a \emph{(CHR) state} is an element of
$\Sigma/\!_\equiv\cup\{\failure, \error\}$, i.e., an equivalence class in $\Sigma$ induced by $\equiv$ or one
of two special states; applying the $\failure$ ($\error$) substitution to a state yields the $\failure$ ($\error$) state.
To indicate a given state, we may for simplicity mention one of its representations.
A state different from $\failure$ and $\error$ is called a \emph{proper state}.
A \emph{query} $q$ is a conjunction of constraints, which is also identified with an initial state
$\langle q', \emptyset\rangle$ where $q'$ is an indexed version of $q$.

Assuming a fixed program, the function $\allrars$ from constraint stores to the powerset of application records is defined as
\begin{eqnarray*}
\allrars(S) & = & \{ r@i_1\ldots i_n\mid \mbox{$r$ identifies a propagation rule and} \\
 &   & \phantom{\{ r@i_1\ldots i_n\mid\mbox{}} \mbox{$i_1\ldots i_n$
        are indices of constraints to which the rule can apply} \}
\end{eqnarray*}
\end{definition}
To simplify notation when we make statements involving
several states or other entities involving components of states, we may do so referring to selected state representations,
considering recurrence of indices and variables significant.
For example, in the context of a program that includes the rule $r$: \texttt{p} \texttt{==>} \texttt{q}, we consider the following as a true statement.
$$ST = \langle\{1{:}\texttt{p},2{:}\texttt{q}\}, \emptyset\rangle \;\;\land\;\; ST= \langle S,\emptyset\rangle \;\;\land\;\; \allrars(S) = \{r@1\}$$

\begin{definition}\label{def:derivations}
A \emph{derivation step} $\ourmapsto$ from one $I$-state to another can be of two types: by rule application
instance $\stackrel{R}\ourmapsto$ or by built-in  $\stackrel{b}\ourmapsto$, defined as follows.
\begin{description}
  \item[Apply:] $\langle S\uplus H_1\uplus H_2, T\rangle \stackrel{R}\ourmapsto
\langle S\uplus H_1 \uplus \big(C\cdot\exe(g)\big), T'\rangle$\\
whenever there is an application instance $R$
of the form $r\colon\; H_1 \setminus H_2\;\mathtt{<=>}\; g\mid C$ with
$\mathrm{applied}(R)\not\in T$,
and $T'$ is derived from $T$ by
1) removing any application record
having an index in $H_2$ and
2) adding $\mathrm{applied}(R)$ in case
$R$ is a propagation.
\item[Built-in:] $\langle \{b\} \uplus S, T\rangle\stackrel{b}\ourmapsto \langle S, T \rangle\cdot\exe(b)$.
\end{description}
%
\end{definition}
Notice that the removal of application records in \textbf{Apply} steps ensures
that no non-relevant propagation record remains in the new state (i.e., the result \emph{is} a state).

\begin{example}
Consider a program consisting of the following two rules.

\medskip\noindent
\hbox to 2em{$r_1$:\hfil}\verb"p(X) \ q(Y) <=>  X=Y | r(X)."\\
\hbox to 2em{$r_2$:\hfil}\verb"r(X) ==> s(X)."

\medskip\noindent
The following is an application instance of $r_1$.
$$ R_1^{\texttt{a},\texttt{a}}\;\;=\;\; \bigl(r_1\colon 1{:}\texttt{p(a) \char92\ }2{:}\texttt{q(a) <=> a=a |\ }
        3{:}\texttt{r(a)}
  \bigr)$$
\medskip\noindent
It can be used in an \textbf{Apply} derivation step as follows.
$$ \bigl\langle\{1{:}\texttt{p(a)}, 2{:}\texttt{q(a)}\},\emptyset\bigr\rangle\stackrel{R_1^{\texttt{\scriptsize a},\texttt{\scriptsize a}}}\ourmapsto
    \bigl\langle\{1{:}\texttt{p(a)}, 3{:}\texttt{r(a)}\},\emptyset\bigr\rangle$$
However, the indexed instance of $r_1$, $\bigl(r_1\colon 1{:}\texttt{p(Z)} \texttt{\char92}2{:}\texttt{q(a)} \texttt{<=>} \texttt{Z=a} \texttt{|}
3{:}\texttt{r(Z)}\bigr)$ is not an application instance as  the guard, when executed, will bind
the head variable \texttt{Z}.

The rule $r_2$ is a propagation rule, and we show an application instance for it and
an \textbf{Apply} derivation step; here  the propagation history is checked before the step and modified by the step.
$$R_2^{\texttt{a}} \;\;=\;\; \bigl(r_2\colon 1{:}\texttt{r(a) ==>\ }
        4{:}\texttt{s(a)}
  \bigr)$$
$$ \bigl\langle\{1{:}\texttt{r(a)}, 2{:}\texttt{r(b)}, 3{:}\texttt{s(b)}\},\{r_2@2 \}\bigr\rangle\stackrel{R_2^{\texttt{\scriptsize a}}}\ourmapsto
    \bigl\langle\{1{:}\texttt{r(a)}, 2{:}\texttt{r(b)}, 3{:}\texttt{s(b)}, 4{:}\texttt{s(a)}\},\{r_2@2, r_2@1 \}\bigr\rangle$$
\end{example}
The following example shows how an incomplete predicate is treated when occurring in a guard and when executed by 
a \textbf{Built-in} step.
\begin{example}
Consider a program that includes the following rule
having the incomplete  ``\texttt{is}'' predicate in its guard. Furthermore, assume that ``\texttt{is}''
can appear in \textbf{Built-in} steps, i.e., can also appear in a state.

\medskip\noindent
\hbox to 2em{$r_1$:\hfil}\verb"p(X) ==> Y is X+2 | q(Y)."

\medskip\noindent
An attempt to \textbf{Apply} it to some state by matching the head with $1{:}\texttt{p(2)}$
may yield the application instance
$$R_2^{\texttt{a}} \;\;=\;\; \bigl(r_1\colon 1{:}\texttt{p(2) ==> Z is 2+2 |\ }
        7{:}\texttt{q(Z)}
  \bigr).$$
The guard evaluates to the substitution $[\texttt{Z}/\texttt{4}]$ and
the new state includes the instantiated body constraint $7{:}$\texttt{q(4)}.
The rule cannot apply by matching the head with 2{:}\texttt{p(Z)} as
the guard evaluates to the $\error$ substitution -- but no $\error$ state is produced.
A \textbf{Built-in} step, on the other hand, for \texttt{Z} \texttt{is} \texttt{2+A}
leads to the $\error$ substitution (by Definition \ref{def:built-ins}) and in turn to the $\error$ state.
\end{example}
We observe the following immediate consequence of the definition,
namely a functional dependency from a state plus label of a possible step
to the resulting state.

\begin{proposition}\label{prop:deriv-step-label-and-subst}
For any state $\Sigma$ and derivation step label $d$, there is at most
one state $\Sigma'$ such that $\Sigma\stackrel d\ourmapsto\Sigma'$.
\end{proposition}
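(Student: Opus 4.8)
The plan is to argue by a case analysis on the syntactic form of the label $d$, which by Definition~\ref{def:derivations} is either a rule application instance $R$ or a built-in atom $b$; since these two forms are distinguishable, a step labelled $d$ can only be of one of the two kinds, and I treat each separately. In both cases I fix a representation $\langle \mathit{St},T\rangle$ of the state $\Sigma$, assume $\Sigma\stackrel d\ourmapsto\Sigma_1'$ and $\Sigma\stackrel d\ourmapsto\Sigma_2'$, and show $\Sigma_1'=\Sigma_2'$ as states, i.e.\ as $\equiv$-equivalence classes in the sense of Definition~\ref{def:state}. The two facts I lean on throughout are that $\exe$ is a \emph{function} (so the substitution contributed by a step is uniquely determined by its argument), and that indices are unique within a store (so a set of indices locates a unique indexed subset of that store).

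For the \textbf{Built-in} case, $d=b$. By Definition~\ref{def:derivations} any representation of $\Sigma$ enabling the step must have the form $\langle\{i{:}b\}\uplus S,T\rangle$ and yields $\langle S,T\rangle\cdot\exe(b)$. The only apparent freedom is the index $i$ of the consumed copy of $b$, as the store may contain several copies. If $i,i'$ are two such indices, the transposition $i\leftrightarrow i'$ is a consistent $1$-$1$ index replacement that carries one residual store to the other and fixes $T$ (application records mention only indices of user constraints, never of the built-in $b$); thus the two residual representations are variants. Applying the single substitution value $\exe(b)$ preserves this variant relation, so $\Sigma_1'\equiv\Sigma_2'$, i.e.\ they coincide as states.

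For the \textbf{Apply} case, $d=R$ is a fixed application instance $r\colon H_1\setminus H_2\mathrel{\texttt{<=>}}g\mid C$ carrying fixed head indices $i_1\ldots i_n$ and a fixed, indexed body. For the step to be defined, a representation of $\Sigma$ must contain exactly the indexed constraints $H_1\uplus H_2$ at indices $i_1\ldots i_n$; by uniqueness of indices these occurrences, and hence the remainder $S=\mathit{St}\setminus(H_1\uplus H_2)$, are uniquely determined. The resulting representation $\langle S\uplus H_1\uplus(C\cdot\exe(g)),T'\rangle$ is then built entirely from data already fixed: $H_1$ and $C$ come from $R$, $\exe(g)$ is a function value, and $T'$ is obtained from $T$ by the deterministic operations of Definition~\ref{def:derivations} (remove every application record having an index in $H_2$, and, when $R$ is a propagation, add $\mathrm{applied}(R)$). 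The side condition $\mathrm{applied}(R)\notin T$ only governs enabledness and introduces no branching. Since none of these steps involves a choice, the representation is determined up to the choice of representative of $\Sigma$; two representations admitting the same $R$ differ only by a renaming fixing $i_1\ldots i_n$, which carries one result to the other, so again $\Sigma_1'=\Sigma_2'$.

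The routine ingredients---functionality of $\exe$ and determinism of the history update---are immediate. I expect the only real care to be needed in the bookkeeping that reconciles the multiset/equivalence-class view of states with the index-level description of a step: precisely, checking that the two genuine sources of apparent nondeterminism (which equal copy of a built-in is consumed, and which representative of $\Sigma$ is chosen) produce variant representations and therefore one and the same state.
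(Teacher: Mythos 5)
Your proof is correct. The paper offers no proof of this proposition at all---it is stated as an ``immediate consequence of the definition''---and your case analysis is a faithful elaboration of exactly that observation: functionality of $\exe$, determinism of the propagation-history update, and the reconciliation of the index-level description of a step with the equivalence-class view of states (including which indexed copy of a built-in is consumed) are precisely the points that make the claim immediate, and you have identified and discharged each of them.
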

The following distinctions become useful later when we reason about
derivation steps and the built-ins involved.
As it appears in Definition~\ref{def:derivations} above, built-ins  evaluating to $\error$ (representing runtime error)
are treated differently in the two sorts of derivation steps:
in a guard, $\error$ and $\failure$ both means that a rule cannot apply (corresponding to
no runtime error reported in an implemented system);
when such a built-in (coming from the query or a rule body) is applied
to a state, it gives rise to a derivation step leading to the relevant of an $\error$ or a $\failure$ state.

\begin{definition}\label{def:state-built-in-etc}
In the context of a state invariant $I$, a built-in predicate is a \emph{state built-in predicate}
whenever it can appear in an $I$-state.
A logical built-in predicate $p$ is \emph{$I$-complete} whenever $\exe(b)\neq\error$ for any atom $b$ with predicate $p$
that may occur in an $I$-state or in the guard of an application instance that can
apply to an $I$-state.

A guard in a rule is \emph{logical} if it contains only logical predicates; otherwise, it is \emph{non-logical}.
A logical guard is  \emph{$I$-complete} if it contains only $I$-complete predicates;
otherwise, it is \emph{$I$-incomplete}.
\end{definition}
\begin{example}
The built-in ``\texttt{is}/2'' is logical and, while incomplete, it is $I$-complete with respect to an invariant that guarantees
the second argument to be a ground arithmetic expression.
\end{example}

\subsection{A Few Comments on Earlier Operational Semantics for CHR}\label{sec:commentOpSem}
Our operational semantics for CHR  differs from other known and formally specified ones
e.g.,~\cite{DBLP:conf/cp/Abdennadher97,DBLP:conf/cp/AbdennadherFM96,DBLP:journals/constraints/AbdennadherFM99,DuckSBH04,DBLP:conf/iclp/DuckSS07}
by handling  also non-logical and incomplete built-ins.

We do that by ``executing'' built-ins immediately in terms of  substitutions
applied to the state, which we claim is more compatible with practical CHR systems than earlier approaches;
Apt et al's semantics for Prolog with such predicates~\cite{DBLP:journals/aaecc/AptMP94} applies the same principles (with the small difference that they do not distinguish between error and failure).
The referenced approaches use instead a separate store for built-in constraints (restricted to logical ones)
that have been
processed, with their satisfiability determined by a magic solver that mirrors a first-order semantics;
this excludes the possibility to consider runtime errors and non-logical and incomplete predicates.
The following example highlights the difference.

\begin{example}
Assume a program that includes the following program rule, and assume that \texttt{>=} is also a state built-in predicate.
\begin{verbatim}
p(X) <=> X >= 1 | r(X)
\end{verbatim}
We can point out the difference by the query \texttt{A>=2,} \texttt{p(A)}.
Starting from an empty built-in store (\texttt{true}), the semantics of~\cite{DBLP:conf/cp/Abdennadher97,DBLP:conf/cp/AbdennadherFM96,DBLP:journals/constraints/AbdennadherFM99,DuckSBH04,DBLP:conf/iclp/DuckSS07}
may first ``execute'' \texttt{A>=2} by adding it to the built-in store and keeping \texttt{p(A)} as the remaining query.
Then the program rule above can apply for \texttt{p(A)} -- since the truth of the guard is implied by the built-in store, thus leaving a final constraint store $\{\texttt{r(A)}\}$ constrained by the built-in store $(\texttt{A>=2})$.

With our semantics, the rule cannot apply (as the guard evaluates to $\error$ which is treated the same
way as $\failure$), and evaluating \texttt{A>=2} as part of the query results in the final state $\error$.
\end{example}
The test in our semantics (Definition~\ref{def:rules}) that
prevents a rule from being applied if it otherwise would modify variables in the
constraints matched by the rule head, is implicit in~\cite{DBLP:conf/cp/Abdennadher97,DBLP:conf/cp/AbdennadherFM96,DBLP:journals/constraints/AbdennadherFM99,DuckSBH04,DBLP:conf/iclp/DuckSS07}.
Consider, for example, a rule \texttt{p(X)} \texttt{<=>} \texttt{X=a} \texttt{|}$\cdots$ considered
for the query atom \texttt{p(A)}, assuming a built-in store $B$.
Here the test in the guard would amount to the condition $B\models\forall\texttt{A}.\, \texttt{A}=\texttt{a}$;
this is false when, say $B$ is empty, and holds only when $B$ implies that \texttt{a} is the only possible value for \texttt{A}.

The interpretation of runtime error in guards as failure
is
described by~\cite{DBLP:journals/aai/HolzbaurF00a} for one of the first widespread CHR compilers,
released with earlier versions of SICStus Prolog.
The documentation for the now dominant compiler~\cite{SchrijversDemoen2004} embedded in recent versions of SICStus Prolog
and SWI Prolog\footnote{See \url{http://www.swi-prolog.org}; version 7 checked February 2016.}
is not explicit about this point. A test of SWI Prolog shows that a runtime error in a guard makes the entire
computation  terminate with an error message.
While this limits the completeness results for CHR, it has been chosen for efficiency
reasons (and the fact that the guard can be reformulated to obtain error as failure if necessary)~\cite{personalCommTomSchrijversFeb2016}.

%
%
Furthermore, we disregard so-called global variables
defined as those that appear in the original query.
The mentioned previous approaches introduce a separate state component
to memorize global variables, but this can be shown unnecessary. 
Consider a query \texttt{q(X)};
we translate it into  \texttt{q(X),} \texttt{global('X',X)} where \texttt{'X'} is a constant that serves as the name of
variable global variable \texttt{X}. When a derivation terminates in a proper state, it includes the constraint
\texttt{global('X',$\mathit{val}$)} where $\mathit{val}$ is the value computed for variable \texttt{X}. 

The mentioned semantics uses  a separate state component,
that we will call the \emph{queue}, to hold constraints that
have not yet been entered into the ``active'' constraint store.
Constraints appearing in the body of a rule being applied are first entered into the queue,
and then from time to time moved into the active store by a separate sort of derivation step. Rules are applied by matching constraints within the active store. 
This separation may be relevant as a starting point for imposing strategies for
ordering  application of rules and  search for constraints to be processed (which
is one of the goals of~\cite{DuckSBH04}),
but for studies of confluence it is irrelevant as the set of derivations with or without this
additional mechanics is essentially the same.

Our semantics  has only two state components, in comparison
to, e.g.,~\cite{DBLP:conf/cp/Abdennadher97,DBLP:conf/cp/AbdennadherFM96,DBLP:conf/iclp/DuckSS07}
that need five state components when considering more restricted confluence problems.

There are also differences in how to avoid the potential looping by propagation rules applying
to the same constraints over and over again.
Our semantics (and some others not referenced)
hold a set of records telling which propagations must not apply (because they have
been applied already), while~\cite{DBLP:conf/cp/Abdennadher97,DBLP:conf/cp/AbdennadherFM96,DBLP:conf/iclp/DuckSS07} maintain a set of permissions for
those propagations that may apply.
There is essentially only a notational difference between the two, and the choice is a matter of taste.
An alternative approach is taken by~\cite{DBLP:journals/tplp/BetzRF10},
mixing a set-based and a multiset-based approach:
new constraints produced from the body of a propagation is treated set-wise,
and a propagation is only allowed if it results in adding new constraints.

In earlier work, such as~\cite{DBLP:conf/cp/Abdennadher97,DBLP:conf/cp/AbdennadherFM96,DBLP:conf/iclp/DuckSS07} already discussed, the states  include
specific indices and specific variables.
Thus any reasonable definition of joinability and confluence needed to mention
an equivalence relation
telling two states equivalent if they differ only in systematic replacement of variables and indices
(quite similar to our $\equiv$ in Definition~\ref{def:state}, p.~\pageref{def:state}).
So in some sense, these approaches concern confluence modulo equivalence problems,
but for a very specific equivalence hardcoded into the  proofs
of general properties.\footnote{The same can be said about \cite{DBLP:journals/jfp/Niehren00,DBLP:conf/ccl/NiehrenS94},
studying confluence in a completely different setting.}
In 2009, the paper~\cite{RaiserEtAl2009} gave a satisfactory solution to this problem,
abstracting away concrete indices and variables defining a state as an equivalence
class modulo such a relation $\equiv$, exactly as we have shown 
in our Definition~\ref{def:state} above.

\section{Confluence Modulo Equivalence for CHR}\label{sec:conf-mod-eq-in-CHR}
Here we adapt classical definitions of critical pairs and associated properties for CHR to
include non-logical and incomplete built-ins, as well as an invariant and an equivalence relation.

For the strictly logical case with no invariant,~\cite{DBLP:conf/cp/Abdennadher97}
defines critical pairs consisting of CHR states that may be shown joinable by ordinary CHR derivations.
This is not viable in our more general case as our analogous construction may lead to pairs that do not satisfy the invariant and from which no
derivations are possible (although the set of all relevant instances thereof may be joinable at the level of CHR).
As a first step towards our meta-level counterpart of critical pairs, we introduce
what we call most general critical pre-corners having a CHR state serving as a common ancestor.

We use the following subcategorization, introduced by \cite{DBLP:conf/lopstr/ChristiansenK14}, of  $\alpha$- and $\beta$-corners according to the sorts of derivation steps 
involved,
as
they need to be treated differently.
The earlier results on confluence for CHR concern
only $\alpha_1$-corners.

\begin{definition}\label{def:critical}
Assume a program with equivalence $\approx$ and invariant $I$.
Let $\Lambda_\alpha= (y\stackrel{\gamma}\mapsfrom x \stackrel{\delta}\ourmapsto y')$ be an $\alpha$-$I$-corner and $\Lambda_\beta= (y\approx x \stackrel{\delta}\ourmapsto y')$ a $\beta$-$I$-corner.
\begin{itemize}
  \item  $\Lambda_\alpha$ is  an \emph{$\alpha_1$-$I$-corner} whenever
  $\gamma$ and $\delta$ are rule application instances.
  \item $\Lambda_\alpha$ is  an \emph{$\alpha_2$-$I$-corner} whenever
  $\gamma$ is a rule application instance and $\delta$ a built-in.
  \item $\Lambda_\alpha$ is  an \emph{$\alpha_3$-$I$-corner}
   $\gamma$ and $\delta$ are built-ins.
  \item  $\Lambda_\beta$ is  a \emph{$\beta_1$-$I$-corner} whenever
         $\delta$  is a rule application instance.
  \item $\Lambda_\beta$ is  a \emph{$\beta_2$-$I$-corner} whenever
 $\delta$ a is built-in.
\end{itemize}
The ``-$I$-'' part of the names may be left out when clear from context.
\end{definition}

\subsection{Most General Critical Pre-corners}
According to Proposition~\ref{prop:deriv-step-label-and-subst},
the end state of a derivation step 
 is functionally dependent on
the initial state, and we employ this in the following definition
of most general critical pre-corners in which we leave out wing states.
The reason why we refer to these artefacts as \emph{pre-}corners
is that they may not be corners at all; when the wing states are attempted to be filled in, 
the guards or invariant may not be satisfied.

\begin{example}\label{ex:motivate-pre-corners}
Consider the following program which has non-logical guards;
assume $\approx$ being identity and $I(\cdot)=\true$.

\medskip\noindent
\hbox to 2em{$r_1$:\hfil}\verb"p(X) <=> var(X)    | q(X)."\\
\hbox to 2em{$r_2$:\hfil}\verb"p(X) <=> nonvar(X) | r(X)."\\
\hbox to 2em{$r_3$:\hfil}\verb"q(X) <=> r(X)."

\medskip\noindent
The equality predicate \texttt{=}/2 is here regarded as a state built-in predicate,
i.e., it may appear in a query;
the meaning of this and the other built-ins is
given in
Definition~\ref{def:built-ins}, p.~\pageref{def:built-ins}.
The following is an attempt to construct an $\alpha_2$-corner; there are no propagation rules, so we leave the empty propagation history implicit and identify states by multisets of constraints.
$$ \Lambda\; = \;\bigl(\{ \texttt{r(Z)}, \texttt{X=Y}\} \stackrel{R_2^{\texttt{Z}}}\mapsfrom
            \{ \texttt{p(Z)}, \texttt{X=Y}\}  \stackrel{\texttt{X}=\texttt{Y}}\ourmapsto \{ \texttt{p(Z)}\} \bigr)$$
Here $R_2^{\texttt{Z}}$ is the rule instance $r_2$: \texttt{p(Z)} \texttt{<=>} \texttt{nonvar(Z)} \texttt{|} \texttt{r(Z)};
$R_2^{\texttt{Z}}$ is not an application instance since its guard is false, thus
the hinted derivation step does not exist, and $\Lambda$ is not a corner.
However,  any substitution $\theta$  that grounds $Z$  will lead to a corner.
With $\texttt{Z}\theta=\texttt{a}$, $\Lambda\theta$ is a corner, whereas if
in addition $\texttt{X}\theta=\texttt{b}, \texttt{Y}\theta=\texttt{c}$ we need to replace the right-most derivation step $\cdots\stackrel{\texttt{X}=\texttt{Y}}\ourmapsto \{ \texttt{p(Z)}\!\}$ by
$\cdots\stackrel{\texttt{b=c}}\ourmapsto\failure$.
\end{example}
The following definition of most general critical pre-corners is lengthy as it has one case for each
sort of corners, but it is straightforward when a few elements have been explained.
For $\alpha_1$, the common ancestor state is constructed  from two rules
such that the application of one prevents the subsequent application of the other.
The symbol ``$\circ$'' is an arbitrary placeholder that visually indicates the presence of
some state.

Propagation histories notoriously 
introduce extra
notation and technicalities,
that are explained following the definition.
We recall Definition~\ref{def:state}, that
$\allrars(S)$  is the set
of all application records for rules of the current program taking indices only
from the constraint store $S$.

\begin{definition}[Most General Critical Pre-Corners]\label{def:pre-corners} Assume a program with equivalence $\approx$ and invariant $I$.\footnote{Notice that only $\alpha_2$ and $\alpha_3$ refers to $I$, but we maintain the -$I$- syllable in all cases for homogeneity.}\\
{\large{$\alpha_1$:}}\quad A \emph{most general critical $\alpha_1$-$I$-pre-corner} is a structure of the form
$( \circ\stackrel{R_1\sigma}\mapsfrom
            \langle H_1\sigma\cup H_2\sigma, T\rangle \stackrel{R_2\sigma}\ourmapsto\circ)$
where
\begin{itemize}
\item $R_k=(r_k\colon\;A_k \setminus B_k  \;\mathtt{<=>}\; g_k\mid C_k)$, $k=1,2$,
   are two most general pre-application instances;
\item  let $H_k = A_k \uplus B_k$ and 
   assume two nonempty sets $H'_k \subseteq H_k$ such that the set of indices used in $H'_1$ and $H'_2$ are identical and all other indices
  in $R_1,R_2$ are unique, and let $\sigma$ be a most general unifier of $H'_1$ and (a permutation of) $H'_2$;
\item if $r_1=r_2$, we must have $A_1\sigma \neq A_2\sigma$ or $B_1\sigma \neq B_2\sigma$;
\footnote{We exclude cases with
              $r_1=r_2$ where the rule applies the same way for both derivation steps, i.e.,
              $A_1\sigma = A_2\sigma$ and $B_1\sigma = B_2\sigma$, as the two wing states in any
              subsumed corner would be identical and thus trivially joinable.}
\item $B_1 \sigma \cap H'_2\sigma \neq \emptyset$ or $B_2 \sigma \cap H'_1\sigma \neq \emptyset$;
\item $ T = \allrars(H_1\sigma\cup H_2\sigma)
                \;\setminus\; \{r_1@\id(A_1B_1), r_2@\id(A_2B_2) \} 
         $;
\item there exists a substitution $\theta$ such that, for $k=1,2$, $\exe(g_k\sigma\theta)$ is a proper substitution and $H_k \exe(g_k\sigma\theta)=H_k.$
\end{itemize}
{\large{$\alpha_2$:}}\quad A \emph{most general critical $\alpha_2$-$I$-pre-corner} is a structure of the form
$(\circ \stackrel{R}\mapsfrom
            \langle A\uplus B\uplus\{b\}, T\rangle \stackrel{b}\ourmapsto\circ)$
where
\begin{itemize}
\item $R=(r\colon\;A \setminus B  \;\mathtt{<=>}\; g\mid C)$  is a most general pre-application instance whose guard
$g$ is non-logical or $I$-incomplete;
\item  there exists a substitution $\theta$ such that $\exe(g\theta)$ is a proper substitution,
$\vars(G\theta)\cap\vars(b\theta) \neq \emptyset$, and $H\exe(g\theta)=H$, where
        $H= A \uplus B$;
\item $b$ is a most general instance of a state built-in predicate (i.e., all arg's are fresh variables);
\item $ T = \allrars(A\uplus B)
                \;\setminus\; \{r@\id(AB) \} 
         $.
\end{itemize}
{\large{$\alpha_3$:}}\quad A \emph{most general critical $\alpha_3$-$I$-pre-corner} is a structure of the form
$(\circ \stackrel{b_1}\mapsfrom
            \langle\{b_1,b_2\}, \emptyset\rangle \stackrel{b_2}\ourmapsto\circ)$
where
\begin{itemize}
\item $b_k$, $k=1,2$, are  indexed, most general instances of state built-in predicates, $b_1$ being
     non-logical or $I$-incomplete.
\end{itemize}
{\large{$\beta_1$:}}\quad When $\approx\neq=$, a \emph{most general critical $\beta_1$-$I$-pre-corner} is a structure of the form
$(\circ \approx
           \langle A\uplus B, T\rangle \stackrel{R}\ourmapsto\circ)$\\
\hbox to 2.5em{}where \begin{itemize}
\item $R=(r\colon\;A \setminus B  \;\mathtt{<=>}\; g\mid C)$ is  a most general pre-application instance
whose guard $g$ is satisfiable;
\item $ T =\allrars(A\uplus B)
                \;\setminus\; \{r@\id(AB) \} 
         $.
\end{itemize}
{\large{$\beta_2$:}}\quad When $\approx\neq=$, a \emph{most general critical $\beta_2$-$I$-pre-corner} is a structure of the form
$(\circ  \approx
           \langle \{b\}, \emptyset\rangle \stackrel{b}\ourmapsto\circ)$
where
\begin{itemize}
\item $b$ is a most general instance of a state built-in predicate.
\end{itemize}
Any two most general critical $I$-pre-corners are considered the same whenever they differ only by
consistent renaming of indices and variables and swapping of the left and right parts.
The $I$ part of the names may be left out when clear from context.
\end{definition}
%
The propagation history constructed for $\alpha_1$-pre-corners
is similar to that of earlier work, e.g.,~\cite{DBLP:conf/cp/Abdennadher97},
for building critical pairs.\footnote{It makes only a syntactic difference that~\cite{DBLP:conf/cp/Abdennadher97}
maintains a set of application records for rules that may be applied,
whereas we maintain a set for those that may not be applied.} 
It tells that any other propagation rule, say \textit{Prop}, 
which might accidentally
be applied to constraints in the common ancestor state, is prevented from doing so.
This provides the maximum level of generality of the pre-corner in the sense
that it subsumes (defined below) all concrete corners in which \textit{Prop} can apply as well as those
where it cannot.
The propagation histories for the other sorts of pre-corners can be explained in similar ways.

\begin{example}[continuing Example~\ref{ex:motivate-pre-corners}, p.~\pageref{ex:motivate-pre-corners}]\label{ex:motivate-pre-corners-contd}
The following is an example of a most general critical $\alpha_2$-pre-corner for the rule labelled $r_1$
(whose guard contains \texttt{var}/1) 
and built-in \texttt{=}/2.
$$ \Lambda^{r_1,\texttt{=}}\; = \;\bigl(\circ \stackrel{R^Z_1}\mapsfrom
            \langle\{ \texttt{p(Z)}, \texttt{X=Y}\}, \emptyset\rangle\}  \stackrel{\texttt{X}=\texttt{Y}}\ourmapsto\circ \bigr)$$
Here $R^Z_1$ is the application instance $r_1:$ \texttt{p(Z)} \texttt{<=>} \texttt{var(Z)} \texttt{|} \texttt{q(Z)}.
\end{example}
As opposed to the derivation relation $\ourmapsto$, the equivalence relation is
given in an atomic way, so we need to consider any possible $\beta$-corner as critical,
i.e., its joinability is not a priori given.\footnote{In the concluding section,
we discuss an alternative approach that uses $\gamma$-corners,
as mentioned in Section~\ref{sec:rewrite}, instead of $\beta$-corners,
which makes it possible to subcategorize and perhaps filter away some
of the abstract pre-corners that concern the equivalence.}

By construction, we have the following.

\begin{proposition}\label{prop:precorners-finite}
For any given program with invariant $I$ and equivalence $\approx$, the set of most general
critical $I$-pre-corners is finite.
\end{proposition}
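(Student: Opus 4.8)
The plan is to show finiteness by examining each of the five kinds of most general critical $I$-pre-corners separately and arguing that each contributes only finitely many structures up to the equivalence declared at the end of Definition~\ref{def:pre-corners} (consistent renaming of indices and variables, plus left-right swapping). Since a program $\Pi$ is a finite set of rules and the list of state built-in predicates in play is finite, the raw combinatorial data feeding each case is finite; the only thing that could spoil finiteness is the freedom in choosing variables, indices, substitutions $\sigma$ or $\theta$, or the propagation history $T$. So the core of the argument is to check that none of these degrees of freedom introduces genuine infinitude once we quotient by the stated equivalence.

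First I would dispatch the $\alpha_1$ case, which is the richest. Here we pick an ordered pair of rules $R_1, R_2$ (finitely many such pairs since $\Pi$ is finite), choose nonempty overlap subsets $H'_1 \subseteq H_1$ and $H'_2 \subseteq H_2$ (finitely many subsets of the finite head multisets), and choose a permutation aligning $H'_2$ with $H'_1$ (finitely many). The unifier $\sigma$ is taken \emph{most general}, so it is unique up to renaming, and the equivalence of pre-corners absorbs exactly this renaming freedom; thus $\sigma$ contributes no new infinitude. The common ancestor state $\langle H_1\sigma\cup H_2\sigma, T\rangle$ is then determined, and crucially $T = \allrars(H_1\sigma\cup H_2\sigma)\setminus\{\dots\}$ is \emph{computed} from the store, hence unique, not chosen. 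The substitution $\theta$ in the last bullet is a mere existential side-condition (a test of nonemptiness of the set of permissible corners), not a component of the pre-corner, so it adds nothing to count. I would make explicit that the indices are canonical up to the 1-1 replacement folded into the equivalence, so using ``new and fresh'' index and variable names yields a single representative per combinatorial choice.

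The remaining cases are lighter versions of the same reasoning. For $\alpha_2$ I would note the ancestor is built from one rule $R$ with a non-logical or $I$-incomplete guard (finitely many rules) together with one most general instance $b$ of a state built-in predicate (finitely many predicates, and ``most general instance'' pins the atom up to variable renaming); again $T$ is computed, and $\theta$ is only an existential witness. The $\alpha_3$ case ranges over ordered pairs of state built-in predicates with the first non-logical or $I$-incomplete, which is finite, and the history is empty. For $\beta_1$ and $\beta_2$ (nonempty only when $\approx\,\neq\,=$) the ancestor is determined by a single rule, respectively a single built-in atom, with $T$ computed or empty and guard-satisfiability merely an existential filter. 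In every case the number of combinatorial seeds is finite and each seed yields exactly one pre-corner up to the declared equivalence, so the union over the five finite families is finite.

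The main obstacle I anticipate is not any single case but making airtight the claim that the choices quantified existentially in the definition --- in particular the witnessing substitutions $\theta$ and, for $\alpha_1$, the most general unifier $\sigma$ --- do not silently parametrize a family of distinct pre-corners. The subtle point is that $\sigma$ being \emph{most general} means it is canonical only modulo variable renaming, which is precisely the renaming quotiented away by ``consistent renaming of indices and variables''; I would want to state this correspondence cleanly so that two most general unifiers of the same $H'_1, H'_2$ always yield identified pre-corners. Once that is pinned down, the propagation histories cause no trouble because they are functions of the already-fixed store rather than independent data, and the whole statement follows by summing finitely many finite contributions.
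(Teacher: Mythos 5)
Your proof is correct and takes essentially the same route as the paper, which states the proposition follows immediately ``by construction'' without spelling out the details: the finitely many rules, head-overlap choices, and state built-in predicates yield finitely many combinatorial seeds, while the most general unifier is canonical up to the renaming already quotiented by the pre-corner equivalence, the propagation history $T$ is computed from the store rather than chosen, and the substitutions $\theta$ are only existential filters. Your elaboration of these points is exactly the intended argument.
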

As mentioned, most general
critical pre-corners are intended to provide a finite characterization of the set
of actual corners that are not per se joinable.
To express this, we introduce the following notion of subsumption.

\begin{definition}[Subsumption by Most General Critical Pre-Corners]\label{def:subsumption-by-gcpc}
Let 
$\Lambda=(\circ\mathrel{\mathit{Rel}_1}\langle S, T\rangle \mathrel{\mathit{Rel}_2}\circ)$
be a most general critical pre-corner.
An $I$-corner $\lambda =(\langle s_1,t_1\rangle \mathit{rel}_1 \langle s,t\rangle \mathit{rel}_2  \langle s_2,t_2\rangle)$
is \emph{subsumed by} $\Lambda$, written $\Lambda<\lambda$,
whenever there exists a substitution $\theta$, a set of indexed constraints $s^+$ and sets of application instances
$t^+$ and $t^\div$ such that
\begin{itemize}
\item $s=S\theta\uplus s^+$,
\item $t=T\uplus t^+ \setminus t^\div$,
\item $t^+\subseteq\allrars(S\theta\uplus s^{+})\setminus\allrars(S\theta)$\\(i.e., a set of application records, each containing an 
index in $s^+$),
\item $t^\div\subseteq  \allrars(S\theta)$
\item $\mathit{rel}_k=\mathit{Rel}_k\,\theta,\quad k=1,2$.
\end{itemize}
If, furthermore, $\Lambda$ is an $\alpha_2$-pre-corner
$(\circ \stackrel{R}\mapsfrom
            \langle\Sigma, T\rangle \stackrel{b}\ourmapsto\circ)$,
where $R$ has guard $g$, and $b$ a built-in,
it is required that $\vars(g\theta)\cap\vars(b\theta)\neq\emptyset$.
\end{definition}
This definition is guilty in a slight abuse of usage due to the additional
requirements for $\alpha_2$
in that only ``really critical'' instances
of the pre-corners are counted:
if the indicated variable overlaps are not observed, the two derivation steps commute so that joinability 
is guaranteed.

\begin{example}[continuing Examples~\ref{ex:motivate-pre-corners}, \ref{ex:motivate-pre-corners-contd}]\label{ex:motivate-pre-corners-contd-contd}
Consider the following  $\alpha_2$-corner for the program given  in Example~\ref{ex:motivate-pre-corners},
$$
\lambda \;=\; \bigl( \langle\{\texttt{q(A)}, \texttt{A=a}\}\cup S, T\rangle
    \stackrel{R_1^\texttt{\scriptsize{A}}}\mapsfrom
        \langle\{\texttt{p(A)}, \texttt{A=a}\}\cup S, T\rangle
        \stackrel{\texttt{A=a}}\ourmapsto
        \langle\{\texttt{p(a)}\}\cup S[\texttt{A}/\texttt{a}], T\rangle  \bigr)
 $$
 where ${R_1^\texttt{\normalsize{A}}}$ is the rule instance ($r_1:$ \texttt{p(A)} \texttt{<=>} \texttt{var(A)} \texttt{|} \texttt{q(A)})
 and $S$ ($T$)  a suitable set of indexed constraints (application records).
 It appears that $\lambda$ is subsumed by the most general critical $\alpha_2$-pre-corner
$ \Lambda^{r_1,\texttt{=}}$ introduced in Example~\ref{ex:motivate-pre-corners-contd} above.
To see this, we use the substitution $[\texttt{Z}/ \texttt{A}, \texttt{X}/\texttt{A}, \texttt{Y} / \texttt{a}]$ for $\theta$ in Definition~\ref{def:subsumption-by-gcpc} above,
and check that $\vars(\texttt{var(Z)}\theta)=\{\texttt{A}\}$ and $\vars((\texttt{X=Y})\theta)=\{\texttt{A}\}$ do overlap.
 \end{example}
The following adapts the Critical Pair Lemma~\cite{DBLP:journals/jacm/Huet80,KnuthBendix1970} known from term rewriting 
(and implicit in previous work on confluence for CHR)
to our setting.
\begin{lemma}[Critical Corner Lemma]\label{lem:critCorner}
Assume a program with invariant $I$ and equivalence relation $\approx$,
and let $\lambda$ be an $I$-arbitrary corner.
Then it holds that either
\begin{itemize}
  \item $\lambda$ is $I$-joinable modulo $\approx$, or
  \item $\lambda$ is subsumed by a most general critical pre-corner.
\end{itemize}
\end{lemma}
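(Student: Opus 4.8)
The plan is to prove the Critical Corner Lemma by a case analysis on the type of the arbitrary $I$-corner $\lambda$, following the subcategorization of Definition~\ref{def:critical} ($\alpha_1,\alpha_2,\alpha_3,\beta_1,\beta_2$), and in each case showing that \emph{either} the two ``sides'' of the corner act on disjoint parts of the state (so the steps commute and $\lambda$ is immediately joinable), \emph{or} they overlap in a way that is captured by one of the most general critical pre-corners of Definition~\ref{def:pre-corners}. The core idea is the same in every case: strip $\lambda$ down to the minimal sub-store on which the two derivation steps actually interact, verify that this minimal part matches the shape of the corresponding pre-corner, and then recover the full $\lambda$ as a subsumed instance by reading off a witnessing substitution $\theta$ together with the extra store $s^+$ and application-record sets $t^+,t^\div$ demanded by Definition~\ref{def:subsumption-by-gcpc}.

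I would organize the argument as follows. First, for the $\alpha_1$ case, let the two steps of $\lambda$ use application instances of rules $R_1,R_2$ acting on head constraints $H_1,H_2$. If the removed constraints of the two steps are disjoint (i.e.\ neither rule deletes a constraint the other matches, so $B_1\cap H_2 = \emptyset = B_2\cap H_1$ in the relevant instance), then the two applications commute and $y'\mapstostar z \mapsfromstar y$ trivially, giving $I$-joinability (each intermediate state inherits $I$ by the invariant property). Otherwise the overlap $B_1\cap H_2 \neq\emptyset$ or $B_2\cap H_1\neq\emptyset$ holds, and I extract the shared indices, take $\sigma$ a most general unifier of the overlapping head portions, and check that the resulting structure is exactly a most general critical $\alpha_1$-pre-corner; the witnessing $\theta$ sends $\sigma$-variables to their concrete values in $\lambda$, the surplus constraints become $s^+$, and the propagation-history bookkeeping is matched by choosing $t^+,t^\div$ as the records that mention $s^+$ respectively those pruned from $T$. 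The $\alpha_2$ case is analogous but one side is a built-in step $\stackrel{b}\ourmapsto$ against a rule $R$: here joinability is automatic unless the built-in shares a variable with the guard $g$ of $R$, i.e.\ $\vars(g\theta)\cap\vars(b\theta)\neq\emptyset$ — precisely the extra condition built into the subsumption relation — so the noncommuting case matches the $\alpha_2$-pre-corner. The $\alpha_3$ case (two built-ins) reduces to the two-element store $\{b_1,b_2\}$ with $b_1$ non-logical or $I$-incomplete, and the $\beta$ cases are simpler still, since a $\beta$-pre-corner places essentially no constraint beyond the single rule/built-in step and the bare equivalence on the left.

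The main obstacle I expect is the propagation-history accounting in the $\alpha_1$ (and to a lesser extent $\alpha_2$) case. The pre-corner fixes $T = \allrars(H_1\sigma\cup H_2\sigma)\setminus\{r_1@\id(A_1B_1),r_2@\id(A_2B_2)\}$, whereas an arbitrary corner $\lambda$ carries an arbitrary history $t$ on a larger store $s = S\theta\uplus s^+$. I must show that $t$ can always be decomposed as $T\uplus t^+\setminus t^\div$ with $t^+\subseteq \allrars(S\theta\uplus s^+)\setminus\allrars(S\theta)$ and $t^\div\subseteq\allrars(S\theta)$ — in particular that the two records $r_1@\cdots,r_2@\cdots$ excluded from $T$ are \emph{not} present in $t$, which is forced by the fact that both rules \emph{do} fire in $\lambda$ (so their records were absent, by the $\mathrm{applied}(R)\notin T$ side condition of the \textbf{Apply} rule). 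Getting this decomposition exactly right, and confirming that the removal/addition of records performed along the two steps of $\lambda$ is consistent with the $t^+,t^\div$ split, is the delicate part; the rest is a routine matching of substitutions and disjoint-support reasoning. I would also take care that in the commuting subcases each intermediate state is genuinely an $I$-state (needed so that the produced joinability diagram is an $I$-diagram), which follows from the invariant property of $I$ applied step by step.
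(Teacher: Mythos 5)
Your proposal follows essentially the same route as the paper's own proof: a case split over the five corner types, the commute-or-overlap dichotomy (disjoint removed constraints, respectively disjoint variables for the built-in cases, give immediate joinability), construction of the pre-corner from most general pre-application instances via an mgu of the overlapping head portions, and recovery of subsumption through a witnessing $\theta$ with $s^+=s\setminus S^0\theta$, $t^+=t\setminus T^0$, $t^\div=T^0\setminus t$. Your observation that the excluded application records cannot occur in $t$ because both rules actually fire in $\lambda$ is exactly the point the paper's decomposition relies on, so the proposal is correct and matches the published argument.
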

The proof which is straightforward but lengthy can be found in the appendix.

This leads to the following central theorem.

\begin{theorem}[Critical Corner Theorem]\label{thm:critCorner}
Assume a program $\Pi$ with invariant $I$ and state equivalence relation $\approx$.
Then
  $\Pi$ is locally confluent modulo $\approx$
if and only if
  all $I$-corners subsumed by some most general critical pre-corner for $\Pi$ are joinable.
\end{theorem}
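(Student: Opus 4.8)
The plan is to combine the definitions of local confluence modulo $\approx$ (Definition~\ref{def:corners-etc}) with the Critical Corner Lemma (Lemma~\ref{lem:critCorner}), so that the theorem follows as essentially a repackaging of the lemma. Recall that $\Pi$ is locally confluent modulo $\approx$ if and only if \emph{all} its $I$-corners (both $\alpha$ and $\beta$) are joinable modulo $\approx$. The theorem, by contrast, asks only that those $I$-corners \emph{subsumed by some most general critical pre-corner} be joinable. The entire content of the proof is therefore to show that restricting attention to the subsumed corners loses nothing, and this is exactly what Lemma~\ref{lem:critCorner} provides.

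For the forward direction ($\Rightarrow$), I would argue by simple logical weakening: if $\Pi$ is locally confluent modulo $\approx$, then by definition every $I$-corner is joinable, and in particular every $I$-corner that happens to be subsumed by some most general critical pre-corner is joinable. No machinery beyond the definition is needed here. The interesting direction is the converse ($\Leftarrow$). Assume that all $I$-corners subsumed by some most general critical pre-corner are joinable; I must show that \emph{every} $I$-corner is joinable, which is precisely local confluence modulo $\approx$. So let $\lambda$ be an arbitrary $I$-corner. By the Critical Corner Lemma, exactly one of two cases holds: either $\lambda$ is already $I$-joinable modulo $\approx$, in which case there is nothing to prove; or $\lambda$ is subsumed by a most general critical pre-corner, in which case $\lambda$ is joinable by the hypothesis. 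In either case $\lambda$ is joinable modulo $\approx$, and since $\lambda$ was arbitrary, $\Pi$ is locally confluent modulo $\approx$.

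One subtlety I would be careful to address is that Definition~\ref{def:corners-etc} requires \emph{all} $I$-corners, of both type $\alpha$ and type $\beta$, to be joinable, whereas the critical pre-corners of Definition~\ref{def:pre-corners} are organized into the five sub-sorts $\alpha_1$, $\alpha_2$, $\alpha_3$, $\beta_1$, $\beta_2$. I would remark that this subcategorization by Definition~\ref{def:critical} is exhaustive: every $\alpha$-$I$-corner falls under exactly one of $\alpha_1$, $\alpha_2$, $\alpha_3$ according to whether its two steps are rule applications or built-ins, and every $\beta$-$I$-corner is either $\beta_1$ or $\beta_2$ according to the single derivation step involved. Hence an arbitrary $I$-corner of \emph{any} type is covered by Lemma~\ref{lem:critCorner}, and the case analysis above genuinely applies to all corners. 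It is worth noting that the additional ``really critical'' variable-overlap requirement imposed on $\alpha_2$ in Definition~\ref{def:subsumption-by-gcpc} is harmless here: the remark following that definition already guarantees that corners failing the overlap condition have commuting steps and are therefore joinable outright, so they land in the first case of the lemma.

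The main obstacle is not in this proof at all but is deferred to Lemma~\ref{lem:critCorner}, whose proof the authors relegate to the appendix as ``straightforward but lengthy.'' Given the lemma, the theorem is a one-line dichotomy argument; the real work lies in establishing that every non-joinable $I$-corner is genuinely captured by one of the finitely many most general critical pre-corners of Definition~\ref{def:pre-corners}, which requires analyzing each corner type, reconstructing the common ancestor state via Proposition~\ref{prop:deriv-step-label-and-subst}, and checking that the propagation-history bookkeeping and guard-satisfiability conditions line up with the subsumption definition. Since I am permitted to assume Lemma~\ref{lem:critCorner}, I would keep the present proof short, emphasizing only the exhaustiveness of the corner classification and the clean application of the lemma's two-case conclusion.
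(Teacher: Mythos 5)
Your proposal is correct and follows essentially the same route as the paper: the ``only if'' direction is immediate from the definition of local confluence, and the ``if'' direction is the two-case dichotomy supplied by Lemma~\ref{lem:critCorner}. The extra remarks on the exhaustiveness of the corner classification and the $\alpha_2$ variable-overlap condition are sound but not needed beyond what the lemma already packages.
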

\begin{proof}
The ``only if'' part: Assume the opposite, that $\Pi$ is locally confluent and that there is an $I$-corner 
$\lambda$ subsumed by some critical pre-corner for $\Pi$  which is not joinable modulo $\approx$.
According to Lemma~\ref{lem:critCorner}, $\lambda$ must be joinable; contradiction. 
The ``if'' part follows immediately from Lemma~\ref{lem:critCorner}:
let $\lambda$ be an $I$-corner; if $\lambda$ is subsumed by some critical pre-corner for $\Pi$
we are done by assumption; otherwise the lemma states that it is joinable.
\end{proof}
Combining this result with~Theorem~\ref{thm:I-joinable-corners}, p.~\pageref{thm:I-joinable-corners},
we get the following.

\begin{theorem}\label{thm:conflu-and-critCorner}
Assume a terminating program $\Pi$ with invariant $I$ and state equivalence relation $\approx$.
Then
$\Pi$ is  confluent modulo $\approx$
if and only if
all $I$-corners subsumed by some critical pre-corner for $\Pi$ are joinable.
\end{theorem}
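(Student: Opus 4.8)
The plan is to obtain this statement as a direct corollary by chaining the two biconditionals already in hand: Theorem~\ref{thm:I-joinable-corners} and the Critical Corner Theorem (Theorem~\ref{thm:critCorner}). Termination is needed only to activate the first of these.

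First I would record that a terminating program is in particular $I$-terminating: every $I$-derivation is a derivation, so an infinite $I$-derivation would contradict termination of $\ourmapsto$. With $I$-termination in place I can invoke Theorem~\ref{thm:I-joinable-corners}, which gives that $\Pi$ is confluent modulo $\approx$ (with respect to $I$) if and only if all its $I$-corners, of both type $\alpha$ and type $\beta$, are joinable modulo $\approx$; by Definition~\ref{def:corners-etc} the latter is exactly the assertion that $\Pi$ is locally confluent modulo $\approx$.

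Next I would apply the Critical Corner Theorem (Theorem~\ref{thm:critCorner}), which holds without any termination hypothesis and states that $\Pi$ is locally confluent modulo $\approx$ if and only if every $I$-corner subsumed by some most general critical pre-corner for $\Pi$ is joinable. Composing the two equivalences yields that $\Pi$ is confluent modulo $\approx$ if and only if all $I$-corners subsumed by a critical pre-corner are joinable, which is precisely the claim.

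Because the genuine work — the finite case analysis and the subsumption argument — has already been discharged in Lemma~\ref{lem:critCorner} and its consequence Theorem~\ref{thm:critCorner}, the only point requiring care here is the termination-transfer step: that plain termination suffices even though the hypothesis of Theorem~\ref{thm:I-joinable-corners} is phrased in terms of $I$-termination. I expect this to be the main, and essentially the only, obstacle, and a minor one, since restricting attention to $I$-states cannot introduce new infinite chains.
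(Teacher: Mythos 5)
Your proposal is correct and is exactly the paper's argument: the paper derives this theorem by "combining" Theorem~\ref{thm:critCorner} with Theorem~\ref{thm:I-joinable-corners}, just as you do. Your extra remark that plain termination implies $I$-termination is a sound (and slightly more careful) handling of a point the paper passes over silently.
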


\subsection{Relationship with Earlier Approaches to Proving Confluence}\label{sec:earlier-CHR-confl-etc}
In the following, we reformulate earlier results of Abdennadher et al~\cite{DBLP:conf/cp/Abdennadher97,DBLP:conf/cp/AbdennadherFM96,fruehwirth-98}
for confluence without equivalence and invariant for the purely logical subset of CHR
 and those of Duck et al~\cite{DBLP:conf/iclp/DuckSS07}, who extended with an invariant, as we have described in Section~\ref{sec:backgroundConfCHR}.
Their critical pairs are similar to our most general critical $\alpha_1$-pre-corners, and  the other sorts of corners become either trivially joinable
or non-existing in these special cases.

In order to describe these results, we complement the notion of subsumption introduced above in Definition~\ref{def:subsumption-by-gcpc} with a subsumption ordering for $I$-corners.
\begin{definition}[Subsumption Ordering for $\alpha_1$-$I$-corner]\label{def:subsumption-by-a1icorner}
Assume a program with 
invariant $I$, and 
let $\lambda = (\langle s_1,t_1\rangle
\stackrel{r_1}\mapsfrom
\langle s_0,t_0\rangle 
\stackrel{r_2}\ourmapsto
\langle s_2,t_2\rangle)$ and 
$\lambda' = (\langle s_1',t_1'\rangle \stackrel{r'_1}\mapsfrom \langle s'_0,t'_0\rangle \stackrel{r'_2}\ourmapsto  \langle s'_2,t'_2\rangle)$
 be 
 $\alpha_1$-$I$-corners. 
We say that $\lambda$ \emph{subsumes} $\lambda'$ denoted $\lambda \preceq \lambda'$
whenever there exist a substitution $\theta_k$, a set of indexed constraints $s_k^+$ and sets of application instances
$t^+_k$ and $t^\div_k$ for $k=0,1,2$ such that
\begin{itemize} 
\item $s'_k = s_k\theta_k\uplus s_k^+$,
\item $t'_k = t_k\uplus t_k^+ \setminus t_k^\div$,
\item $t_k^+$ is a set of application records, each containing at least one index appearing in $s_k^+$,
\item $t_k^\div\subseteq  \allrars(S_k)$
\item $r'_i=\mathit{r}_i\,\theta_0,\quad i=1,2$.
\end{itemize}
We write $\lambda\prec\lambda'$ whenever $\lambda\preceq\lambda'$ and $\lambda\neq\lambda'$.
\end{definition}
The following property follows immediately by the direct similarity with Definition~\ref{def:subsumption-by-gcpc}.

\begin{proposition}
Let $\Lambda$ be a most general critical $\alpha_1$-pre-corner and $\lambda$ an $\alpha_1$-corner such that $\Lambda<\lambda$.
Whenever $\lambda'$ is a corner with $\lambda\preceq\lambda'$, it holds that $\Lambda<\lambda'$.
\end{proposition}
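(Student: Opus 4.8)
The plan is to read the statement as a composition (transitivity) between the two subsumption notions. The relation $\Lambda<\lambda$ of Definition~\ref{def:subsumption-by-gcpc} exhibits the corner $\lambda$ as an instantiation-plus-extension of the pre-corner $\Lambda$, while $\lambda\preceq\lambda'$ of Definition~\ref{def:subsumption-by-a1icorner} exhibits $\lambda'$ as an instantiation-plus-extension of $\lambda$; composing the two should exhibit $\lambda'$ as an instantiation-plus-extension of $\Lambda$, i.e.\ $\Lambda<\lambda'$. Concretely I would unfold $\Lambda<\lambda$ to obtain a substitution $\theta$, extra constraints $s^+$ and history sets $t^+,t^\div$, and unfold $\lambda\preceq\lambda'$ to obtain, for the common-ancestor component $k=0$, a substitution $\theta_0$, extra constraints $s_0^+$ and history sets $t_0^+,t_0^\div$. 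Since the target relation $\Lambda<\lambda'$ only constrains the common ancestor state and the two step labels, only the $k=0$ data of $\lambda\preceq\lambda'$ is needed.

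For the witnessing data of $\Lambda<\lambda'$ I would take $\theta'=\theta\circ\theta_0$, the combined extra constraints $\tilde s = s^+\theta_0\uplus s_0^+$, and history sets $\tilde t^+,\tilde t^\div$ still to be fixed. Writing $\langle S,T\rangle$ for the ancestor of $\Lambda$ and $\langle s_0',t_0'\rangle$ for that of $\lambda'$, the store condition is immediate: $S\theta'\uplus\tilde s = (S\theta)\theta_0\uplus s^+\theta_0\uplus s_0^+ = (S\theta\uplus s^+)\theta_0\uplus s_0^+ = s_0\theta_0\uplus s_0^+ = s_0'$, where disjointness of the indexed sets is preserved because substitutions act on variables and leave indices untouched. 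The label condition is equally direct: by $\Lambda<\lambda$ the labels of $\lambda$ are the $\mathit{Rel}_k\theta$, and by $\lambda\preceq\lambda'$ those of $\lambda'$ are obtained by further applying $\theta_0$, so they equal $\mathit{Rel}_k\theta\theta_0=\mathit{Rel}_k\theta'$ as required. The overlap side condition peculiar to the $\alpha_2$ case does not arise, since $\Lambda$ is of type $\alpha_1$.

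The remaining and only delicate part is the propagation history. Here I would exploit two facts about $\allrars$: it is monotone under store extension (matching a propagation head against a subset of constraints is unaffected by adding further constraints), and any application record mentioning an index of the freshly added constraints is automatically outside $\allrars$ of the un-extended store, which does not contain that index. Tracking records through the two subsumptions, every record that gets added carries an index of $s^+$ or of $s_0^+$, hence of $\tilde s$, which places it outside $\allrars(S\theta')$ and makes the ``$\setminus\,\allrars(S\theta')$'' clause of the added-records condition automatic; the additive history equation $t_0'=T\uplus\tilde t^+\setminus\tilde t^\div$ can then be arranged by the set-theoretic identity obtained with $\tilde t^+=t_0'\setminus T$ and $\tilde t^\div=T\setminus t_0'$.

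The main obstacle, and the step needing genuine care rather than routine set manipulation, is the condition $\tilde t^\div\subseteq\allrars(S\theta')$ on the removed records. A record of $T$ absent from $t_0'$ was dropped either through $t^\div\subseteq\allrars(S\theta)$ or through $t_0^\div\subseteq\allrars(s_0)$ on indices lying in $S$; in both cases one must verify that its applicability survives the further instantiation by $\theta_0$, i.e.\ that membership in $\allrars(S\theta)$ is not destroyed on passing to $\allrars(S\theta\theta_0)$, which is where substitution-sensitivity of $\allrars$ bites. I would handle this by tracing each such record back through the two subsumptions and using that the records of $T$ arise from the all-relevant application records of the ancestor minus the two critical ones, whose applicability to the matched ancestor constraints is preserved along the very instantiations witnessing that $\lambda$ and $\lambda'$ are genuine $\alpha_1$-corners; this is the point at which the specific construction of $\alpha_1$ pre-corner histories must be invoked rather than treating the histories purely set-theoretically.
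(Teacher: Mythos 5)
Your composition argument is correct and is exactly what the paper intends: the paper offers no proof of this proposition at all, asserting only that it ``follows immediately by the direct similarity'' of Definitions~\ref{def:subsumption-by-gcpc} and~\ref{def:subsumption-by-a1icorner}, and your transitivity-of-subsumption argument --- composing $\theta$ with $\theta_0$, combining the store extensions into $s^+\theta_0\uplus s_0^+$, and rearranging the history sets via $\tilde t^+=t_0'\setminus T$, $\tilde t^\div=T\setminus t_0'$ --- is precisely the routine verification being waved at. Your attention to the $\allrars$ side-conditions on the propagation history is if anything more scrupulous than the paper itself, which glosses $t^+\subseteq\allrars(S\theta\uplus s^{+})\setminus\allrars(S\theta)$ as merely ``records containing an index in $s^+$''.
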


\begin{example}[continuing Examples~\ref{ex:collect}, p.~\pageref{ex:collect}, and \ref{ex:collect-inv},
p.~\pageref{ex:collect-inv}]\label{ex:collect-corners}
Consider again the single rule program that collects elements into a list, with the invariant
of groundedness plus exactly one \texttt{set} constraint whose argument is a list (we ignore the state equivalence here).
The following shows a most general critical pre-corner, two corners and their mutual ordering;
$R^{L,A}$ stands for an applications instance for the rule in which variables \texttt{L}, \texttt{A}
are replaced by terms $L$, $A$.
\begin{center}
$\circ \stackrel{R^{\texttt{\scriptsize L},\texttt{\scriptsize A}}}\mapsfrom \{ \texttt{set(L)}, \texttt{item(A)}, \texttt{item(B)}\} \stackrel{R^{\texttt{\scriptsize L},\texttt{\scriptsize B}}}\ourmapsto  \circ$\\
\rotatebox[origin=c]{270}{\hbox to 1ex{}{\large$<$}\hbox to 0.5ex{}}\\
${\{\texttt{set([a|c])}, \texttt{item(b)}\} \stackrel{R^{\texttt{\scriptsize [c]},\texttt{\scriptsize a}}}\mapsfrom \{\texttt{set([c])}, \texttt{item(a)}, \texttt{item(b)}\} \stackrel{R^{\texttt{\scriptsize [c]},\texttt{\scriptsize b}}}\ourmapsto  \{ \texttt{set([b|c])}, \texttt{item(a)}\}}$\\
\rotatebox[origin=c]{270}{\hbox to 1ex{}{\large$\preceq$}\hbox to 0.5ex{}}\\
$\{\texttt{set([a|c])}, \texttt{item(b)}, \texttt{item(d)}\} \stackrel{R^{\texttt{\scriptsize [c]},\texttt{\scriptsize a}}}\mapsfrom \{\texttt{set([c])}, \texttt{item(a)}, \texttt{item(b)}, \texttt{item(d)}\}
\qquad\qquad\qquad\qquad\qquad$\\
$\qquad\qquad\qquad\qquad\quad\qquad\qquad\qquad\qquad\qquad\qquad\qquad\qquad\qquad
\stackrel{R^{\texttt{\scriptsize [c]},\texttt{\scriptsize b}}}\ourmapsto  \{ \texttt{set([b|c])}, \texttt{item(a)}, \texttt{item(d)}\}$
\end{center}
\end{example}
Notice in the example above that the  common ancestor in the
pre-corner does not satisfy the invariant and thus there are no derivation steps
possible from it.
However, when this state is instantiated (and perhaps extended with more constraints)
so that the invariant becomes satisfied, the derivation labels 
denote actual application instances
and corners emerge.

We proceed now as Duck et al~\cite{DBLP:conf/iclp/DuckSS07} and identify a collection of $I$-corners for each pre-corner
which together subsumes all relevant $I$-corners as shown in Theorem~\ref{thm:Icorner} below.

\begin{definition}[Minimal and Least Critical $I$-corners]\label{def:minimal-and-least-I-corners}
Assume a program with invariant $I$.
An $\alpha_1$-$I$-corner $\lambda$ is \emph{minimal (for a most general $\alpha_1$-pre-corner $\Lambda$)} whenever
 \begin{itemize}
  \item $\Lambda < \lambda$, and
  \item $\nexists\, \lambda' > \Lambda \colon \lambda' \prec \lambda$.
 \end{itemize}
When, furthermore
\begin{itemize}
  \item $\forall\, \lambda' > \Lambda \colon \lambda \preceq \lambda'$,
 \end{itemize}
 $\lambda$ is a \emph{least $I$-corner for $\Lambda$}.
%
\end{definition}
In Example~\ref{ex:collect-corners} above, the highest placed $I$-corner is minimal
but not least, as other similar corners exist with other choices of constants.

Theorem~\ref{thm:Icorner} below is similar to the central result of~\cite{DBLP:conf/iclp/DuckSS07},
showing that local $I$-confluence follows from joinability of a specific set of minimal $I$-corners.

\begin{lemma}[Existence of Minimal $I$-corners]\label{lem:icorner-precorner}Assume a program with invariant $I$.
For any $\alpha_1$-$I$-corner $\lambda'$  subsumed by a most general $\alpha_1$-pre-corner $\Lambda$, i.e., $\Lambda<\lambda'  $,  there exists a minimal $I$-corner $\lambda$ for $\Lambda$ such that
$\Lambda<\lambda \preceq \lambda'$.
\end{lemma}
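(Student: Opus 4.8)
The plan is to prove the existence of a minimal $I$-corner by a descent argument, exploiting the fact that the subsumption ordering $\preceq$ restricted to the $I$-corners above a fixed pre-corner $\Lambda$ is well-founded. First I would fix the given corner $\lambda'$ with $\Lambda < \lambda'$ and consider the set $\mathcal{C} = \{\lambda \mid \Lambda < \lambda \preceq \lambda'\}$; this set is nonempty since $\lambda' \in \mathcal{C}$. The goal is to extract from $\mathcal{C}$ a $\preceq$-minimal element, i.e.\ some $\lambda \in \mathcal{C}$ for which no $\lambda'' \in \mathcal{C}$ satisfies $\lambda'' \prec \lambda$. By Definition~\ref{def:minimal-and-least-I-corners}, such a $\lambda$ is exactly a minimal $I$-corner for $\Lambda$ (there is no $\lambda'' > \Lambda$ with $\lambda'' \prec \lambda$, because any such $\lambda''$ would also satisfy $\lambda'' \preceq \lambda \preceq \lambda'$, hence lie in $\mathcal{C}$, contradicting minimality within $\mathcal{C}$).

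The key technical step is to argue that an infinite strictly descending chain $\lambda' \succ \lambda_1 \succ \lambda_2 \succ \cdots$ inside $\mathcal{C}$ cannot exist. Here I would measure each $I$-corner by the ``size'' of its common ancestor state relative to the instantiated pre-corner $\Lambda\theta$: by Definition~\ref{def:subsumption-by-a1icorner}, passing from $\lambda$ to a strictly smaller $\lambda''$ with $\lambda'' \prec \lambda$ means that the extra constraints $s_k^+$ and the extra/removed application records shrink (in the multiset sense), or the instantiating substitution becomes strictly less specific. Since each such corner is itself bounded above by the fixed $\lambda'$ --- the store of any $\lambda \in \mathcal{C}$ embeds into the finite store of $\lambda'$ via the composed substitutions and the containment of the $s_k^+$ components --- every descent strictly decreases a natural-number measure (e.g.\ the total number of constraints in the three states not already forced by $\Lambda\theta$, together with a measure of how far $\theta$ is from the identity on the relevant variables). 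A strictly decreasing sequence of such natural numbers must terminate, so a $\preceq$-minimal element of $\mathcal{C}$ exists.

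The main obstacle I anticipate is making the descent measure genuinely well-founded and transitive: the ordering $\preceq$ is defined through the existence of substitutions $\theta_k$ and constraint/record sets, and composing two instances of the definition (to obtain transitivity of $\preceq$ and hence closure of $\mathcal{C}$ under descent) requires checking that the composites $s_k^+$, $t_k^+$, $t_k^\div$ and $\theta_k$ again satisfy all the clauses, in particular the constraint $t_k^\div \subseteq \allrars(S_k)$ on the propagation-history components and the condition $r_i' = r_i\theta_0$ on the rule-application labels. I would handle this by first establishing transitivity of $\preceq$ as a small lemma (composing substitutions and taking unions/differences of the auxiliary sets), and then defining the measure purely on the minimal-extension data of $\lambda$ over $\Lambda\theta$, so that $\lambda'' \prec \lambda$ forces a strict drop. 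Once transitivity and well-foundedness are in place, the minimal element $\lambda$ furnished by the descent satisfies $\Lambda < \lambda \preceq \lambda'$ by construction, completing the proof.
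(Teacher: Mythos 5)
Your argument is essentially the paper's own proof: the paper likewise observes that, by construction of subsumption, there can be no infinite strictly descending chain $\lambda_1\succ\lambda_2\succ\cdots>\Lambda$, and then descends from $\lambda'$ step by step until a minimal $I$-corner is reached. You additionally make explicit two points the paper leaves implicit --- transitivity of $\preceq$ (needed to conclude that the element reached still satisfies $\lambda\preceq\lambda'$) and a concrete well-founded measure justifying termination of the descent --- which is a welcome tightening of the same route rather than a different one.
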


\begin{proof}
First of all, we notice that by construction of subsumption, that (**) there cannot exist infinite chains
$\lambda_1\succ\lambda_2\succ\cdots > \Lambda$.

Consider now $\lambda'>\Lambda$. If $\lambda'$ is minimal, we are done; otherwise
(by Definition~\ref{def:subsumption-by-gcpc}, p.~\pageref{def:subsumption-by-gcpc}) there will be a $\lambda_1$ such that
$\lambda'\succ\lambda_1>\Lambda$; if $\lambda_1$ is minimal, we are done;
otherwise there will be a $\lambda_2$ such that $\lambda'\succ\lambda_1\succ\lambda_2>\Lambda$,
and we continue the same way until we reach a minimal $\lambda_n$ with
$\lambda_1\succ\lambda_2\succ\cdots\succ\lambda_n>\Lambda$; due to observation (**) above, this
process will terminate as indicated.
\end{proof}

%

\begin{lemma}[Minimal $I$-corner Lemma; logical case with invariant; trivial $\approx$]\label{lem:critICorner}~\\
Assume a program with logical and complete built-ins, invariant $I$ and state equivalence $=$,
and let $\lambda$ be an $\alpha_1$-$I$-corner. Then it holds that either
\begin{itemize}
  \item $\lambda$ is $I$-joinable, or
  \item $\lambda$ is subsumed by some minimal $\alpha_1$-$I$-corner.
\end{itemize}
\end{lemma}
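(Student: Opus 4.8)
The plan is to obtain this statement directly by chaining the two results already established for this setting: the Critical Corner Lemma (Lemma~\ref{lem:critCorner}) and the Existence of Minimal $I$-corners (Lemma~\ref{lem:icorner-precorner}). Since the built-ins are logical and complete and $\approx$ is the identity, ``joinable modulo $\approx$'' coincides with ordinary $I$-joinability, so the dichotomy delivered by Lemma~\ref{lem:critCorner} already matches the first alternative of the present statement, and no $\error$-induced built-in steps can intrude on the analysis.

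First I would apply Lemma~\ref{lem:critCorner} to the given $\alpha_1$-$I$-corner $\lambda$, yielding two cases. If $\lambda$ is $I$-joinable we are done by the first bullet. Otherwise $\lambda$ is subsumed by some most general critical pre-corner $\Lambda$, i.e.\ $\Lambda<\lambda$ in the sense of Definition~\ref{def:subsumption-by-gcpc}. The key step is to observe that this $\Lambda$ must itself be of type $\alpha_1$: subsumption requires $\mathit{rel}_k=\mathit{Rel}_k\,\theta$ for $k=1,2$, and applying a substitution to a rule application instance again yields a rule application instance; since both wing steps of an $\alpha_1$-corner are rule applications, both relations of $\Lambda$ must be rule-application derivation steps, so $\Lambda$ is a most general critical $\alpha_1$-pre-corner. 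With this in hand, Lemma~\ref{lem:icorner-precorner} applies to $\Lambda<\lambda$ and produces a minimal $I$-corner $\mu$ for $\Lambda$ with $\Lambda<\mu\preceq\lambda$. By Definition~\ref{def:subsumption-by-a1icorner}, the relation $\mu\preceq\lambda$ says precisely that $\lambda$ is subsumed by the minimal $\alpha_1$-$I$-corner $\mu$, which is the second bullet.

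The main obstacle is not a genuine difficulty but a point requiring care: the type-preservation observation above, ensuring that the pre-corner we land in is exactly the $\alpha_1$ case for which the minimality machinery of Definition~\ref{def:minimal-and-least-I-corners} and Lemma~\ref{lem:icorner-precorner} is stated. The hypotheses that the built-ins are logical and complete and that $\approx$ is trivial are what let us quote Lemma~\ref{lem:critCorner} with ``modulo $\approx$'' collapsing to plain joinability, and they guarantee that no $\alpha_2$-, $\alpha_3$-, or $\beta$-phenomena arise, so the $\alpha_1$ analysis is self-contained; everything else is bookkeeping around the two cited lemmas.
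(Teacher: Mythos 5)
Your proposal is correct and follows essentially the same route as the paper, whose proof is simply the one-line observation that the lemma is a direct consequence of Lemma~\ref{lem:critCorner} and Lemma~\ref{lem:icorner-precorner}; you merely spell out the chaining (including the type-preservation point and the collapse of ``modulo $\approx$'' to plain joinability) that the paper leaves implicit.
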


\begin{proof}
The lemma is a direct consequence of Lemma~\ref{lem:critCorner} and Lemma~\ref{lem:icorner-precorner}.
\end{proof}

\begin{theorem}[Minimal $I$-corner Theorem; logical case with invariant; trivial $\approx$]\label{thm:Icorner}~\\
For a program $\Pi$ with logical and complete built-ins, invariant $I$ and state equivalence relation $=$, the following properties hold.
\begin{enumerate}
  \item $\Pi$ is locally confluent if and only if all its minimal $I$-corners are joinable. \label{thm:Icorner-partI}
  \item When, furthermore, $\Pi$ is terminating, $\Pi$ is confluent if and only if all its minimal $I$-corners are joinable.\label{thm:Icorner-partII}
  \item  A minimal $I$-corner is not necessarily least, and the set of all minimal $I$-corners is not necessarily finite.\label{thm:Icorner-partIII}
\end{enumerate}
\end{theorem}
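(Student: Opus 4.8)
The plan is to reduce all three parts to the $\alpha_1$ machinery already assembled, since in the present restricted setting every other sort of corner is automatically joinable. First I would record this reduction. With state equivalence $=$, confluence modulo $\approx$ is ordinary confluence and joinability modulo $\approx$ is ordinary joinability. Moreover, by Definition~\ref{def:pre-corners} there are no critical $\beta$-pre-corners (these are built only when $\approx\neq=$), and since all built-ins are logical and complete there are no critical $\alpha_2$- or $\alpha_3$-pre-corners (these require a non-logical or $I$-incomplete guard, respectively built-in). Hence by the Critical Corner Lemma (Lemma~\ref{lem:critCorner}) every $\beta$-, $\alpha_2$- and $\alpha_3$-$I$-corner is $I$-joinable, so local confluence reduces to joinability of the $\alpha_1$-$I$-corners.

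For Part~\ref{thm:Icorner-partI}, the ``only if'' direction is immediate: a minimal $I$-corner is in particular an $\alpha_1$-$I$-corner and therefore an $I$-corner, so if $\Pi$ is locally confluent it must be joinable. For the ``if'' direction, assume every minimal $I$-corner is joinable and let $\lambda$ be an arbitrary $\alpha_1$-$I$-corner. By Lemma~\ref{lem:critICorner}, either $\lambda$ is already $I$-joinable, or it is subsumed by some minimal $\alpha_1$-$I$-corner $\lambda_0$, i.e.\ $\lambda_0\preceq\lambda$. In the latter case $\lambda_0$ is joinable by hypothesis, and I would lift this joinability along $\preceq$ using the subsumption principle $(*)$ for logical and complete built-ins: the diagram joining the wings of $\lambda_0$ is replayed from the wings of $\lambda$ after applying the substitutions $\theta_k$, adding the extra constraints $s_k^+$ and adjusting the propagation histories of Definition~\ref{def:subsumption-by-a1icorner}, where instantiation monotonicity guarantees that every guard that succeeded still succeeds and no new $\error$ arises. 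Together with the reduction of the first paragraph, every $I$-corner is then joinable, so $\Pi$ is locally confluent.

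Part~\ref{thm:Icorner-partII} follows by combining Part~\ref{thm:Icorner-partI} with Theorem~\ref{thm:I-joinable-corners}: for an $I$-terminating program, local confluence modulo $=$ is equivalent to confluence modulo $=$, which here is ordinary confluence; Part~\ref{thm:Icorner-partI} identifies local confluence with joinability of all minimal $I$-corners, giving the stated equivalence. Part~\ref{thm:Icorner-partIII} I would settle by exhibiting a counterexample, namely the collecting program of Example~\ref{ex:collect-corners} under the groundedness invariant: the displayed minimal $I$-corner is not least, and replacing its list argument by any other ground term yields a pairwise incomparable family of minimal $I$-corners, so the set of minimal $I$-corners is infinite and has no least element.

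The main obstacle is the subsumption principle $(*)$ used in the ``if'' direction of Part~\ref{thm:Icorner-partI}: one must show that joinability of $\lambda_0$ genuinely transfers to every $\lambda$ with $\lambda_0\preceq\lambda$. The delicate point is the bookkeeping of the propagation history under $\preceq$; one has to check that adding the records $t_k^+$ and deleting $t_k^\div$ neither blocks a step used in the joining derivation of $\lambda_0$ nor enables a spurious step that would disrupt the replay. Everything else is a routine but lengthy verification resting on instantiation monotonicity and partial correctness of the logical and complete built-ins, and on the triviality of the non-$\alpha_1$ corner categories established in the first paragraph.
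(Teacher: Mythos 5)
Your proposal follows essentially the same route as the paper: Part~\ref{thm:Icorner-partI} ``$\Leftarrow$'' via Lemma~\ref{lem:critICorner}, Part~\ref{thm:Icorner-partII} via Newman's Lemma (equivalently Theorem~\ref{thm:I-joinable-corners}), and Part~\ref{thm:Icorner-partIII} by exhibiting an infinite family of pairwise incomparable minimal corners (the paper uses Example~\ref{ex:infinite-minimal-corners} rather than the collecting program, but either works). You are in fact more explicit than the paper about the one delicate step — transferring joinability from the subsuming minimal corner $\lambda_0$ up to $\lambda$ along $\preceq$, including the propagation-history bookkeeping — which the paper's own proof leaves implicit, so this is a faithful (and slightly more careful) reconstruction.
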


\begin{proof}
Part \ref{thm:Icorner-partII} follows  from Newman's Lemma and Part \ref{thm:Icorner-partI}. Proof of Part~\ref{thm:Icorner-partI}: 
``$\Rightarrow$'': It follows directly from the assumption of local confluence.
``$\Leftarrow$'': Assume that all minimal $I$-corners are joinable, but the program is not locally confluent, i.e., there exists an $I$-corner $\lambda'$ that is not joinable. From Lemma~\ref{lem:critICorner} we have that any
$\lambda'$ is subsumed by a minimal $I$-corner $\lambda$ which by assumption is joinable.
Part~\ref{thm:Icorner-partIII} is demonstrated by the following Example~\ref{ex:infinite-minimal-corners}.
\end{proof}
In the general case, Theorem~\ref{thm:Icorner} does not provide an immediate recipe for proving local confluence
due to the potentially infinite number of cases.
The following example demonstrates two ways that this may appear.

\begin{example}\label{ex:infinite-minimal-corners}
Consider a  program that includes the following rules.

\medskip\noindent
\hbox to 2em{$r_1$:\hfil}\verb"p(X) <=> q(X)."\\
\hbox to 2em{$r_2$:\hfil}\verb"p(X) <=> r(X)."\\
\hbox to 2em{$r_3$:\hfil}\verb"p(X) <=> X >= 1 | s(X)."

\medskip\noindent
 We assume the invariant 
$$I(\langle S, T\rangle)\Leftrightarrow S \text{ is ground.}$$ 
There are no propagation rules, so we ignore the propagation history and consider a state as a multiset of constraints.
Rules $r_1$ and $r_2$ give rise to a most general critical $\alpha_1$-$I$-pre-corner
$( \circ\stackrel{R_1^{\texttt{X}}}\mapsfrom
            \{\texttt{p(X)}\} \stackrel{R_2^{\texttt{X}}}\ourmapsto\circ)$.
It has the following infinite set of minimal $\alpha_1$-$I$-corners.
$$
\bigl\{\bigl(\{\texttt{q($t$)}\} \stackrel{R_1^{t}}\mapsfrom
            \{\texttt{p($t$)}\} \stackrel{R_2^{t}}\ourmapsto \{\texttt{q($t$)}\}\bigr) \bigm\vert \mbox{$t$ is a ground term} \bigr\}
$$
Obviously, there is no least $\alpha_1$-$I$-corner for this $\alpha_1$-$I$-pre-corner.
This problem was also noticed by Duck et al in their paper on observable confluence~\cite{DBLP:conf/iclp/DuckSS07}.
An additional consequence of our definitions is that a guard with an incomplete predicate may also give rise
to an infinite set of minimal $\alpha_1$-$I$-corners, even when we relax the invariant to equality.
Now, rules $r_1$ and $r_3$ give rise to a most general critical $\alpha_1$-$I$-pre-corner
$( \circ\stackrel{R_1^{\texttt{X}}}\mapsfrom
            \{\texttt{p(X)}\} \stackrel{R_3^{\texttt{X}}}\ourmapsto\circ)$.
It has the following infinite set of minimal $\alpha_1$-$I$-corners.
$$
\bigl\{\bigl(\{\texttt{q($t$)}\} \stackrel{R_1^t}\mapsfrom
            \{\texttt{p($t$)}\} \stackrel{R_3^t}\ourmapsto \{\texttt{r($t$)}\}\bigr) \bigm\vert \mbox{$t$ is a ground term that can be read as a numeral $\ge1$} \bigr\}
$$
\end{example}
The solution that we describe in Section~\ref{sec:abstract}, and which has no counterpart in~\cite{DBLP:conf/iclp/DuckSS07},
is to consider each most general critical pre-corner (of which there are only finitely many) one at a time,
lifted to a meta-level where we can reason about their joinability properties without having
to expand them to a set of minimal $I$-corners. 

A partial version of the classical results by Abdennadher et al~\cite{DBLP:conf/cp/Abdennadher97,DBLP:conf/cp/AbdennadherFM96,fruehwirth-98} 
can be described as a special case of
Theorem~\ref{thm:Icorner} with trivial invariant.

\begin{lemma}[Least I-corner Lemma; logical case with trivial invariant and $\approx$]\label{lem:leastIcorner}
Assume a program $\Pi$ with logical and complete built-ins, 
invariant $I(\cdot)\Leftrightarrow\true$
and state equivalence relation $=$.
The  set of minimal $\alpha_1$-corners is finite and consists of least $\alpha_1$-corners,
each of which is produced from an $\alpha_1$-pre-corner as follows:

\begin{itemize}
  \item for each  $\alpha_1$-pre-corner of the form
$( \circ\stackrel{R_1}\mapsfrom
            \Sigma \stackrel{R_2}\ourmapsto\circ)$
construct the unique $\alpha_1$-corner, $(\Sigma_1\stackrel{R_1}\mapsfrom
            \Sigma \stackrel{R_2}\ourmapsto\Sigma_2)$.
\end{itemize}
\end{lemma}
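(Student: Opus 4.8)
The plan is to exploit that a trivially true invariant removes the only obstruction that, in the general case of Theorem~\ref{thm:Icorner}, forces instantiation and destroys leastness: when $I(\cdot)\Leftrightarrow\true$, every state is an $I$-state, so no corner is ruled out for failing the invariant, and one may keep the common ancestor as general as the guards permit. First I would fix a most general critical $\alpha_1$-pre-corner $\Lambda=(\circ\stackrel{R_1}\mapsfrom\Sigma\stackrel{R_2}\ourmapsto\circ)$ and extract, from the defining condition of Definition~\ref{def:pre-corners}, a substitution $\theta_0$ under which both guards $g_1,g_2$ evaluate to proper, head-preserving substitutions. Using that the built-ins are logical and complete, so that guard evaluation never yields $\error$ and, by instantiation monotonicity together with partial correctness, the head-preserving solutions of a guard persist under further instantiation, I would argue that among all such $\theta$ there is a most general one $\theta_0$, unique up to renaming. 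Applying both rules at $\Sigma\theta_0$ then yields two genuine derivation steps, and by the functional dependency of Proposition~\ref{prop:deriv-step-label-and-subst} their wing states $\Sigma_1,\Sigma_2$ are uniquely determined, giving the $\alpha_1$-corner $\lambda_\Lambda=(\Sigma_1\stackrel{R_1}\mapsfrom\Sigma\theta_0\stackrel{R_2}\ourmapsto\Sigma_2)$ that the lemma constructs (with $\theta_0=\epsilon$, hence common ancestor exactly $\Sigma$, whenever the guards already hold at the most general instance, as in Example~\ref{ex:collect-corners}).

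Second, I would verify $\Lambda<\lambda_\Lambda$ directly from Definition~\ref{def:subsumption-by-gcpc}, taking the witnessing substitution to be $\theta_0$, empty $s^{+}$, and the history sets dictated by the construction of $T$. The central step is then to prove that $\lambda_\Lambda$ is not merely minimal but \emph{least}: for every $\alpha_1$-corner $\lambda'$ with $\Lambda<\lambda'$ I must show $\lambda_\Lambda\preceq\lambda'$ in the sense of Definition~\ref{def:subsumption-by-a1icorner}. Here I would use that $\lambda'$, being a genuine corner subsumed by $\Lambda$, has a common ancestor of the form $\Sigma\theta'\uplus s^{+}$ in which both rule instances really apply; hence $\theta'$ makes both guards proper and head-preserving, so by most-generality of $\theta_0$ it factors as $\theta'=\theta_0\circ\rho$. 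Invoking the subsumption principle for logical and complete built-ins, namely that applicability of a rule and the state it produces are preserved under instantiation and under the addition of constraints, with the propagation history adjusted accordingly, I would show that the two steps of $\lambda'$ are exactly the $\rho$-instances of the two steps of $\lambda_\Lambda$ extended by $s^{+}$, which is precisely the data required by Definition~\ref{def:subsumption-by-a1icorner} to conclude $\lambda_\Lambda\preceq\lambda'$.

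Finally, finiteness and the characterisation follow by bookkeeping: Proposition~\ref{prop:precorners-finite} gives finitely many pre-corners, each producing exactly one corner $\lambda_\Lambda$, so $\{\lambda_\Lambda\mid\Lambda\text{ a pre-corner}\}$ is finite; conversely, since each $\lambda_\Lambda$ is a least corner for its $\Lambda$, it is the unique minimal corner for $\Lambda$, and by Lemma~\ref{lem:icorner-precorner} every subsumed $\alpha_1$-corner lies above some minimal corner, so the set of all minimal $\alpha_1$-corners coincides with this finite family of least corners.

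I expect the main obstacle to be the leastness argument, and within it the two interlocking facts that lean essentially on the logical-and-complete hypothesis: that the head-preserving guard-satisfying substitutions admit a most general element $\theta_0$, so that every competing corner's instantiation factors through it, and that derivation steps behave monotonically under instantiation and extension (the subsumption principle that the paper notes breaks for non-logical and incomplete predicates). Care is also needed with the propagation-history components $t^{+},t^{\div}$ and with the swapping and $r_1=r_2$ provisos of Definition~\ref{def:pre-corners}, but these are routine compared with securing monotonicity.
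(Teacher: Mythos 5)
Your overall skeleton matches the paper's: one corner per most general critical $\alpha_1$-pre-corner, a subsumption check against Definition~\ref{def:subsumption-by-gcpc}, a leastness argument, and finiteness from Proposition~\ref{prop:precorners-finite}. The genuine divergence, and the gap, is your substitution $\theta_0$. The lemma asserts that the corner sits at the pre-corner's own ancestor state $\Sigma$ with labels exactly $R_1,R_2$; the paper's proof accordingly argues that $R_1$ and $R_2$ are \emph{already} application instances there (in the logical-and-complete setting the guard condition built into Definition~\ref{def:pre-corners} already yields proper, head-preserving guard evaluations, so no further instantiation is needed), then establishes minimality by the direct computation $s=s\theta\theta'\uplus s^{\prime+}\theta\uplus s^{+}$, which forces $s^{+}=s^{\prime+}=\emptyset$ and $\theta,\theta'$ to be renamings, with analogous bookkeeping for $t^{+},t^{\div}$; leastness is then a syntactic unfolding of the two subsumption statements over the same ancestor. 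By contrast you instantiate to $\Sigma\theta_0$ for a ``most general'' guard-solving, head-preserving $\theta_0$. This is doubly problematic: (i) whenever $\theta_0\neq\epsilon$ your constructed corner is not the one the lemma names, so you are proving a different statement --- and if a nontrivial $\theta_0$ were genuinely necessary, the lemma's own construction would be vacuous, so what you actually need to argue is that $\theta_0=\epsilon$ suffices, not to work around it; and (ii) the existence of a most general element of the set of head-preserving guard-solving substitutions, ``unique up to renaming,'' is asserted rather than proved, and for an arbitrary logical theory $\mathcal{B}$ the satisfying substitutions of a guard need not admit a most general one, so this step cannot be waved through.

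A secondary point: you never prove minimality as such, only leastness. That suffices only once you know $\preceq$ is antisymmetric on corners above $\Lambda$, which is precisely the content of the paper's $s=s\theta\theta'\uplus\cdots$ computation, so you still owe that argument somewhere. Your appeal to Lemma~\ref{lem:icorner-precorner} to identify the set of all minimal corners with your family of least corners is fine and is implicit in the paper as well.
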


\begin{proof}
Let us consider an $\alpha_1$-pre-corner
$\Lambda=( \circ\stackrel{R_1}\mapsfrom
            \langle S, T\rangle \stackrel{R_2}\ourmapsto\circ)$.
The indicated $\alpha_1$-corners do exist as the indicated derivation steps exist:
by construction of $\Lambda$, the two application instances $R_1,R_2$ exist
(cf.\ Definition~\ref{def:rules}, p.~\pageref{def:rules}: trivial guards and the
condition of not modifying head constraints guaranteed); and they can apply to $\langle S, T\rangle$
as their head constraints are in $S$, and no application record for $R_1$ or $R_2$ is in $T$.
Referring to Proposition~\ref{prop:deriv-step-label-and-subst}, p.~\pageref{prop:deriv-step-label-and-subst}
(functional dependency $\textit{Ancestor-state}\times\textit{Application-instance}\rightarrow\textit{Result-state}$),
it is sufficient only to consider the common ancestor states of the involved (pre-) corners.

Let now $\lambda$ be an $\alpha_1$-corner as stated in the lemma. First, we show that $\lambda$ is minimal by contradiction,
so assume the opposite, namely that there exists a $\lambda\succ\lambda'>\Lambda$;
let $\langle s',t'\rangle$ refer to the common ancestor state of $\lambda'$.
From $\lambda'>\Lambda$ it follows that $s'=s\theta'\uplus s^{\prime+}$ for some $\theta', s^{\prime+}$,
and from $\lambda\succ \lambda'$ that $s=s'\theta\uplus s^{+}$  for some $\theta, s^{+}$;
thus $s=s\theta\theta' \uplus s^{\prime+}\theta\uplus s^{+}$
and hence $s^{\prime+}= s^{+}=\emptyset$ and $\theta,\theta'$ are renaming substitutions.

In a similar way, we obtain
\begin{itemize}
  \item $t'=t\uplus t^{\prime+} \setminus t^{\prime\div}$ where any index of $t^{\prime+}$ is in $s^{\prime+}=\emptyset$, and thus $t^{\prime+}=\emptyset$, and $t^{\prime\div}\in t$ (cf.~Definitions~\ref{def:pre-corners},~\ref{def:subsumption-by-gcpc}); hence $t'=t \setminus t^{\prime\div}$,
  \item $t=t'\uplus t^+ \setminus t^\div$ where any index of $t^{+}$ is in $s^{+}=\emptyset$, and thus $t^{+}=\emptyset$, and $t^{\div}\in t'$ as above; hence $t=t'\setminus t^\div = t\setminus  t^{\prime\div}\setminus t^\div$.
\end{itemize}
It follows now that $t^\div=t'^\div=\emptyset$ and thus $\lambda=\lambda'$. Contradiction.

It remains to show that $\lambda$ is a least corner for $\Lambda$,
i.e., for any $\lambda'>\Lambda$ it holds that $\lambda'\succ\lambda$.
This follows from the fact that an unfolding of these two statements according to their respective definitions,
inserting the same common ancestor state $\langle s,t\rangle$ of $\Lambda$ and $\lambda$,
yields identical results.
\end{proof}
The most significant difference in the confluence results with the different 
semantics 
appears when guards contain incomplete built-ins.
This implies cases where our semantics cannot apply,
but the previous ones can, and thus local confluence is a stronger property with
those semantics.
\begin{example}\label{ex:different-confluence}
Consider a program consisting of the following rules;
invariant and equivalence are trivial and not considered.

\medskip\noindent
\hbox to 2em{$r_1$:\hfil}\verb"p(X) <=> 1 >= X, X >= -1 | q(X)"\\
\hbox to 2em{$r_2$:\hfil}\verb"p(X) <=> r(X)"\\
\hbox to 2em{$r_3$:\hfil}\verb"q(X) <=> 1 >= X, X >= 0  | r(x)"\\
\hbox to 2em{$r_4$:\hfil}\verb"q(X) <=> 0 >= X, X >= -1 | r(x)"

\medskip\noindent
As discussed in Section~\ref{sec:commentOpSem} and Example~\ref{ex:motivate-pre-corners} above, the semantics of~\cite{DBLP:conf/cp/Abdennadher97,DBLP:conf/cp/AbdennadherFM96,DBLP:conf/iclp/DuckSS07,fruehwirth-98}
include a built-in store in  the state, and a rule can fire when its guard is a consequence of the current
built-in store. The built-in store for the common ancestor state of a critical pair is formed by the conjunction of the guards of the involved rules; ignoring global variables and propagation history, we obtain in the mentioned semantics the following critical pair, here shown with the ancestor state for ease of comparison.
$$\lambda^*\;=\;\Bigl(\bigl\langle \{\texttt{q(X)}\}, (\texttt{1>=X}\land \texttt{X>=-1})\bigr\rangle\stackrel{r_1}\mapsfrom
     \bigl\langle \{\texttt{p(X)}\}, (\texttt{1>=X}\land \texttt{X>=-1})\bigr\rangle
     \stackrel{r_2}\ourmapsto   \bigl\langle \{\texttt{r(X)}\}, (\texttt{1>=X} \land \texttt{X>=-1})\bigr\rangle\Bigr)
$$
This critical pair is not joinable as neither $r_3$ nor $r_4$ can apply to the left wing state
since their respective guards are not consequences of the current built-in store.
It follows that the program is not confluent when derivations are defined as by~\cite{DBLP:conf/cp/Abdennadher97} and others.

With our semantics, the program is confluent. There are no corners similar to $\lambda^*$
with \texttt{p}/1 having an uninstantiated variable as its argument (the guard of $r_1$ evaluates to $\error$ so $r_1$ cannot apply). Instead we notice an infinite of family minimal corners for $r_1$ and $r_2$, one for each numeral in the
interval $[-1,1]$; for example:
$$\lambda^{\texttt{0.5}}=\bigl(\{\texttt{q(0.5)}\} \stackrel{R_1^{\texttt{0.5}}}\mapsfrom
     \{\texttt{p(0.5)}\}
     \stackrel{R_2^{\texttt{0.5}}}\ourmapsto  \{\texttt{r(0.5)}\}\bigr)
     \quad\!\mbox{and}\quad
     \lambda^{\texttt{-0.5}}=\bigl(\{\texttt{q(-0.5)}\} \stackrel{R_1^{\texttt{-0.5}}}\mapsfrom
     \{\texttt{p(-0.5)}\}
     \stackrel{R_2^{\texttt{-0.5}}}\ourmapsto  \{\texttt{r(-0.5)}\}\bigr)
$$
We see that $\lambda^{\texttt{0.5}}$ can be extended to a joinability diagram by an application of $r_3$
and the same for  $\lambda^{\texttt{-0.5}}$ by $r_4$. Obviously, the entire
family of minimal corners is joinable, and the program is confluent under our semantics.
\end{example}
As mentioned, we do not intend to reason about infinite sets of minimal corners when it can be avoided. The methods introduced in the following section allows reasoning about abstract corners
that visually resemble $\lambda^*$ in Example~\ref{ex:different-confluence} above,
but in which the combined guard constraints are interpreted as meta-level restrictions
on the intended instantiations of the states involved.
(Such abstract corners are allowed to split, so our abstract version of $\lambda^*$
in the example can split into two halves, one shown joinable using $r_3$ and the other by $r_4$.)

%
%
\section{Proving Confluence Modulo Equivalence using Abstract and Meta-level Constrained Corners and Diagrams}\label{sec:abstract}

The classical approach to proving local confluence for CHR~\cite{DBLP:conf/cp/Abdennadher97,DBLP:conf/cp/AbdennadherFM96} is distinguished by having to consider only a finite number of cases,
each characterized by a critical pair of proper CHR states.
Joinability of each critical pair is then shown by applying CHR rules directly.



As shown above, this does not generalize directly to the more general context with non-logical/incomplete built-in predicates and invariants. 
We introduce abstract states, that embed meta-level constraints derived from the invariant and rule guards, representing exactly the permissible states satisfying these constraints. 
Applications of CHR rules to abstract states are simulated with the meaning that they go only for these permissible states.
%
This makes it possible to describe proofs of local confluence in terms of finitely many proof cases, also for 
 examples where \cite{DBLP:conf/iclp/DuckSS07}
  requires infinitely many. 
  Occasionally, we may need to split a case into sub-cases, each requiring different combinations of CHR rules for showing joinability.

Section~\ref{sec:metachr}  introduces the language \metachr{}, and Section~\ref{sec:abstract-corners-and-confluence} provides our central results
on how confluence modulo equivalence may be shown by considering abstract corners constructed from the  most general critical pre-corners.

\subsection{A Meta-Language for CHR and its Semantics}\label{sec:metachr}
In the following, we assume fixed sets of built-in and constraint predicates, invariant $I$ and state equivalence $\approx$.
The following definition introduces the basic elements of  \metachr\ giving a parameterized representation
for CHR and notions related to its semantics.
Built-in predicates of CHR are lifted into \metachr\ in two ways, firstly by a lifted version of the $\exe$ function
(Section~\ref{sec:preliminaries}, p.~\pageref{text-def:exe}) that is extended with an extra argument intended to hold the entire head of the actual rule instance, so the condition can be checked that a guard of a CHR rule cannot modify variables
in that head.
Secondly, each built-in predicate is represented as a predicate of the same name in \metachr\ expressing
satisfiability of a given atom.
The \metachr\ predicates $\allrars$ and $\commonvars$ introduced below will be used to simulate details of CHR's derivation steps.

Two denotation functions will be defined, first $\denotesgr-$ that maps a ground \metachr{} term
into a specific CHR related object, and next $\denotes-$ that maps a  \metachr{} term parameterized 
by meta-variables into all the objects that it covers (analogous to subsumption above).
Be aware that when  a \metachr\ term is ground, it means that it contains no
\metachr\ variables, although it may denote a non-ground CHR related object. 
\begin{definition}\label{def:metachr}
\metachr{} is a typed logical language; for given type $\tau$,  $\meta_\tau$ ($\metagr_\tau$)
refers to the set of (ground) terms of type $\tau$.
The \emph{(ground) denotation function}  for each type $\tau$
is a function
$$\denotesgr-\colon \metagr_\tau\rightarrow\chr_\tau$$
where  $\chr_\tau$ a suitable set of CHR related objects.

The types and terms of \metachr{} are assumed sufficiently rich
such that any relevant object related to CHR is \emph{denotable}, e.g.,
for any CHR state $s$, there exists a ground term $t$
of  \metachr{} with $\denotesgr t=s$.

Whenever $S, S_1, S_2$ are ground \metachr{}
terms of type \textit{state}, \metachr{} includes the following
atomic formulas: invariant statements of the form $I(S)$, equivalence (statement)s of the form
$S_1\approx S_2$.
We assume similarly polymorphic operators for equality and various operations related to sets
such as $\in$, $\subseteq$ etc.
Each such predicate  has a fixed meaning
defined as follows; 
for any sequence of ground \metachr{} $t_1,\ldots,t_n$ of \metachr{} terms of relevant types,
$$ p(t_1,\ldots,t_n)\qquad\mbox{if and only if}\qquad p(\denotesgr{t_1},\ldots,\denotesgr{t_n})$$
Whenever $H$ is some ground term and $G$ a ground term of type \textit{guard},
the formula $\mexe HG$ holds if and only if
\begin{itemize}
  \item    $\exe(\denotesgr G)$ is a proper substitution $\theta$,
     \item  $\vars(\denotesgr H)\cap \mathit{domain}(\exe{\denotesgr E})=\emptyset$.
\end{itemize}   
For each built-in predicate $\texttt{p}/n$ of CHR, \metachr\ includes a \emph{lifted} predicate
$\texttt{p}/n$ whose arguments are of type \textit{term}, and which has a fixed meaning
defined as follows; for ground terms $T_1,\ldots,T_n$,
$\texttt{p}(t_1,\ldots,t_n)$ holds if and only if 
\begin{itemize}
  \item $\exe(\texttt{p}(\denotesgr{t_1},\ldots,\denotesgr{t_n}))$ is a proper substitution.
\end{itemize}
Wheneover $t_1$ and $t_2$ are ground terms, the predicate $\commonvars(t_1, t_2)$ holds if and only
if
\begin{itemize}
  \item $\vars(\denotesgr{t_1})\cap\vars(\denotesgr{t_2})\neq\emptyset$.
\end{itemize}
\metachr{} includes a lifted version of the function $\allrars$ (Def.~\ref{def:state}) from terms of type \textit{constraint-store} to sets of terms of type \textit{application-record}
defined such that $\allrars(s)=t$ if and only if $\allrars(\denotesgr s)=\denotesgr t$.
\end{definition}
For simplicity of notation, we assume
for each predicate and function symbol, a function
symbol in  \metachr{} of similar arity and type \textit{term} for its arguments,
 written with the same symbols.
For example \texttt{p(a,X)} can be read as a ground \metachr{} term, and $\denotesgr{\texttt{p(a,X)}}=\texttt{p(a,X)}$
is a non-ground CHR term.
To avoid ambiguity, \metachr{} variables are written by \textit{italic} letters; this may occasionally clash
the traditional use such letters for mathematical placeholders, and we add explanations when necessary to avoid confusion.

We extend the notational principle of indicating a state by one of its representations to \metachr{} as demonstrated in the following example.
\begin{example}
Assume a  CHR constraint predicate $\texttt{p}/2$. The following equality between
CHR states holds,
$$\bigdenotesgr{\bigl\langle\{1{:} \mathtt{p(X,a)}\}, \{r@1\}\bigr\rangle} = \bigdenotesgr{\bigl\langle\{2{:} \mathtt{p(Y,a)}\}, \{r@2\}\bigr\rangle}$$
and thus the  \metachr{} formula  $$\bigl\langle\{1{:} \mathtt{p(X,a)}\}, \{r@1\}\bigr\rangle = \bigl\langle\{2{:} \mathtt{p(Y,a)}\}, \{r@2\}\bigr\rangle$$
is true.
\end{example}
The following notion of templates will be used for mapping specific CHR related objects, possibly containing variables,
into a representation in $\metachr$ with new \metachr\ variables, so that application of CHR substitutions are simulated by \metachr\ substitutions.
\begin{definition}\label{def:template}
A \emph{template} $T'$ for a CHR related object $t$ (e.g., term, constraint, rule, state, etc.) is a
\metachr\ term formed as follows: 1) find a \metachr{} term $T$
such that $\denotesgr T= t$, and 2) form $T'$ as a copy of $T$ in which all subterms that are names of CHR variables
are replaced systematically by new and unused \metachr{} variables.
\end{definition}
For example, the \metachr{} term \texttt{p($X$,a)} is a template for the CHR atom \texttt{p(X,a)}.
Similar templates have been used in meta-interpreters for logic programs~\cite{ChristiansenTPLP2005,Hill94meta-programming-in-logic},  based on a lifting of the Prolog
text into a meta-level representation in which Prolog unification is simulated by unification at the meta-level.
The following definition is central. It is the basis for defining
meta-level versions of derivations, corners and diagrams parameterized by  \metachr{}
variables that are constrained in suitable ways.

\begin{definition}\label{def:denotation-and-abstraction}
An \emph{abstraction}  of type $\tau$
is a structure of the form
$$A_\tau\where \Phi,$$
where $A_\tau\in\meta_\tau$ and
$\Phi$ is a formula of \metachr{} referred to as a \emph{meta-level constraint}.
The abstraction is  \emph{ground} if and only if $A_\tau$ is ground and $\Phi$ contains no free variables.
In cases where the meta-level constraint is $\true$, we may leave it out to simplify notation,
i.e., $(A_\tau\where\true)$ is written  as $A_\tau$.

The denotation function $\denotesgr-$ is extended to ground abstractions and arbitrary structures
(e.g., application instances, corners and diagrams) containing such,
in the following way.
\begin{itemize}
  \item For any ground abstraction $A_\tau\where \Phi$,
   $$\denotesgr{A_\tau\where \Phi}=
                  \begin{cases}
                     \denotesgr A & \text{whenever $\Phi$ is satisfied}, \\
                  \bot   & \text{otherwise}.
                \end{cases}
    $$
  \item For any structure $s(A_1,\ldots,A_n)$ including ground abstraction $A_1,\ldots,A_n$,
  $$ \denotesgr{s(A_1,\ldots,A_n)} =
                    \begin{cases}
                      \bot  & \text{if, for some $i$, $\denotesgr{A_i}=\bot$}, \\
                       s(\denotesgr{A_1},\ldots,\denotesgr{A_n}) & \text{otherwise}.
                  \end{cases}
  $$
\end{itemize}
An abstraction or structure with abstractions $\mathbf A$ is said to \emph{cover} a concrete object or structure $C$, whenever
there is a grounding \metachr{} substitution $\sigma$ for which $\denotesgr{\mathbf A\sigma}=C\neq\bot$.
The set of all concrete objects or structures covered by $\mathbf A$ is written
$\denotes{\mathbf A}$.

An abstraction or structure with abstractions $\mathbf A$ is \emph{consistent} whenever $\denotes{\mathbf A}\neq\bot$.

Two abstractions or structures with abstractions, $\mathbf S, \mathbf S'$ are \emph{semantically equivalent}
whenever, for any grounding substitution $\sigma$ that $\denotesgr{\mathbf S\sigma}= \denotesgr{\mathbf S'\sigma}$.
An abstraction of type state is referred to as an \emph{abstract state.}
\end{definition}

\begin{example}[Abstract States]
The abstract objects shown below include lifted versions  of the
CHR built-ins \texttt{constant}/1 and \texttt{var}/1 introduced 
in Definition~\ref{def:built-ins} above.
Notice in the lefthand sides that  $a,x$ are variables of \metachr{}.
\begin{eqnarray*}
\bigdenotes{\bigl\langle \texttt{\{p($a$,$x$)}\}, \emptyset \bigr\rangle\where\mexe{\textit{-}}{(\texttt{constant($a$)},\texttt{var($x$)} }}&=&
    \bigdenotes{\bigl\langle \{\texttt{p($a$,$x$)}\}, \emptyset \bigr\rangle \where \texttt{constant($a$)}\land\texttt{var($x$)}}\\
 & = & \bigl\{\bigl\langle\{\texttt{p(a,X)}\}, \emptyset \bigr\rangle , \ldots, \langle \{\texttt{p(b,Y)}\}, \emptyset \rangle , \ldots\bigr\} \\
  & = & \bigl\{\bigl\langle \{\texttt{p($a,x$)}\} , \emptyset \bigr\rangle \mid \mbox{$a$ is a constant, $x$ a variable}\bigr\}\\
\bigdenotes{\bigl\langle \{\texttt{p($a$)}\} , \emptyset \bigr\rangle \where \texttt{var($a$)} \land \texttt{const($a$)}} & = & \emptyset 
\end{eqnarray*}
In the example above, it was possible to 
turn a sequence of built-ins in a guard (the second argument of $\mexe--$)  into a conjunction.
However, this does not hold in general since different orders in a guard with non-logical or incomplete predicates may give different results.
\end{example}
Next, we introduce various building blocks, leading to abstract corners and joinability diagrams.
\begin{definition}[Abstract $=$, $\approx$ and $I$]\label{def:abstract-eqs-etc}
Let $T, T'$ be abstractions of the same type and $S, S'$ abstract states.
An \emph{abstract equality} is a  formula $T=T'$, an \emph{abstract invariant} a formula $I(S)$
and an \emph{abstract equivalence} a formula $S\approx S'$.
Let $e(T_1,\ldots,T_n)$, $n=1,2$ be an arbitrary such formula;
$e(T_1,\ldots,T_n)$ is defined to be true if and only if it 
the following properties hold.
\setbox3=\hbox{\textbf{(Completeness)\hbox to 0.5em{}}}
\begin{itemize}
\item \noindent\hbox to 1\wd3{\textbf{(Soundness)}\hfil}For any $e(t_1,\ldots,t_n)\in\denotes{e(T_1,\ldots,T_n)}$, it holds that
            $e(t_1,\ldots,t_n)$ is true.
\item \noindent\hbox{\textbf{(Completeness)\hbox to 0.5em{}}}For any $i=1,\ldots,n$ and any $t_i\in\denotes{T_i}$
there exists a true atom $e(t_1,\ldots,t_n)$ with\\
\noindent\hbox to 1\wd3{} $e(t_1,\ldots,t_n)\in\denotes{e(T_1,\ldots,T_n)}$.
\end{itemize}
An abstract state $A$ for which $I(A)$ holds is called an \emph{abstract $I$-state.}
\end{definition}
Notice that we do not require the constituents of abstract statements to be consistent, which means that two inconsistent abstract states 
will satisfy an abstract $\approx$ statement. This is convenient for the formulation of the central Theorems~\ref{thm:abstract-CC-theorem} and~\ref{thm:termination+abstract-joinability=confluence}, below.

The following property is useful when we want to build an abstract $\beta$-corner (defined below).
For any  abstract state, we can always produce an equivalence statement that covers all relevant equivalences
at the level of CHR; this is made precise as follows.
\begin{proposition}\label{prop:equiv-state-exists}
For any
abstract state $A$ there exist an abstract equivalence
$A\approx A'$ such that $\denotes{A'} = \{ s'\mid \mbox{$s'\approx s$ for any $s\in\denotes A$}\}$.\end{proposition}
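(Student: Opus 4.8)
The plan is to construct $A'$ explicitly, pushing the equivalence requirement into a meta-level constraint while keeping the term part a single fresh, unconstrained state variable. Write the given abstract state as $A = (T_A \where \Phi_A)$, where $T_A$ is a \metachr{} term of type \textit{state} and $\Phi_A$ is its meta-level constraint. I would introduce a fresh \metachr{} variable $w$ of type \textit{state}, not occurring in $A$, and define
$$A' \;=\; \bigl(w \where \Phi_A \wedge (w \approx T_A)\bigr).$$
The claimed witness is then the abstract equivalence $A \approx A'$. Intuitively, $\Phi_A$ forces the variables shared with $A$ to range exactly over the groundings that make $T_A$ denote a state of $\denotes A$, while $w$ is free to denote any CHR state that the extra conjunct $w \approx T_A$ certifies to be $\approx$-equivalent to that state.

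First I would compute $\denotes{A'}$. By Definition~\ref{def:denotation-and-abstraction}, a concrete state $s'$ lies in $\denotes{A'}$ iff there is a grounding substitution $\sigma$ with $\denotesgr{A'\sigma} \neq \bot$ and $\denotesgr{A'\sigma} = s'$; this holds iff $\Phi_A\sigma$ is satisfied and $\denotesgr{w\sigma} \approx \denotesgr{T_A\sigma}$, in which case $\denotesgr{A'\sigma} = \denotesgr{w\sigma} = s'$. Since $w$ is fresh, $\sigma$ may fix $w\sigma$ independently of the remaining variables, and by the denotability assumption of Definition~\ref{def:metachr} every CHR state is of the form $\denotesgr{w\sigma}$ for a suitable choice. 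Moreover $\Phi_A\sigma$ holding together with $\denotesgr{T_A\sigma} = s$ is precisely the condition $s \in \denotes A$. Hence $s' \in \denotes{A'}$ iff $s' \approx s$ for some $s \in \denotes A$, which is the set claimed in the proposition (reading ``for any'' existentially, i.e.\ as the union of the $\approx$-classes of the states covered by $A$).

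Next I would verify that $A \approx A'$ is a \emph{true} abstract equivalence in the sense of Definition~\ref{def:abstract-eqs-etc}, for which I first determine $\denotes{A \approx A'}$. Grounding the structure $A \approx A'$ by $\sigma$ yields $\bot$ unless both $\denotesgr{A\sigma} \neq \bot$ and $\denotesgr{A'\sigma} \neq \bot$; unfolding as above, the surviving pairs are exactly those of the form $(s \approx s')$ with $s \in \denotes A$ and $s' \approx s$. Soundness is then immediate, since every such pair satisfies $s' \approx s$, hence $s \approx s'$ by symmetry of $\approx$. For completeness I would treat the two argument positions separately: for the $A$-position, given $s \in \denotes A$, reflexivity of $\approx$ supplies the true atom $s \approx s$, which lies in $\denotes{A \approx A'}$; for the $A'$-position, given $s' \in \denotes{A'}$, the computation of $\denotes{A'}$ above furnishes some $s \in \denotes A$ with $s \approx s'$, and $(s \approx s')$ is then a true atom of $\denotes{A \approx A'}$. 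Both completeness conditions therefore hold.

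The only real obstacle is conceptual rather than technical: recognising that one may take the term part of $A'$ to be a single fresh state variable and relocate the entire equivalence requirement into the meta-level constraint $w \approx T_A$; once this construction is in place, every remaining step is a routine unfolding of the denotation functions of Definitions~\ref{def:metachr} and~\ref{def:denotation-and-abstraction}. Two small points warrant attention during the write-up: that $w \approx T_A$ is a legitimate meta-level constraint even though $w$ and $T_A$ are non-ground, since by Definition~\ref{def:metachr} the constraint is evaluated only after the grounding $\sigma$ has been applied; and that retaining $\Phi_A$ inside $A'$ is essential, since dropping it would enlarge $\denotes{A'}$ to include equivalents of groundings of $T_A$ that violate $\Phi_A$ and hence do not belong to $\denotes A$.
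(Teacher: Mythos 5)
Your construction $A' = (w \where \Phi_A \wedge (w \approx T_A))$ with a fresh state variable $w$ is exactly the witness the paper gives, namely $(S' \where S'\approx S\land \Phi)$ for a new and unused \metachr{} variable $S'$; the paper simply states the construction without the explicit verification of the denotation and of the soundness/completeness conditions that you supply. Your proposal is correct and follows essentially the same approach, including the (correct) existential reading of ``for any'' that the paper's construction implicitly adopts.
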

\begin{proof}
Assume an abstract state of the form $S\where\Phi$, $S$ some \metachr\ term of type \textit{state} and $\Phi$ a \metachr\ formula.
The  following abstract equivalence satisfies the proposition, where $S'$ is a new and unused \metachr\ variable.
$$(S\where\Phi) \approx (S' \where S'\approx S\land \Phi).$$
\end{proof}
As seen above, we have overloaded $\approx$ to simplify notation.
In the following example, we will elucidate the different levels of equivalence.
\begin{example}[Abstract Equivalence; Examples~\ref{ex:collect} and~\ref{ex:collect-inv}, continued]
We consider again the program that collects a set of items into a list with the suggested invariant and equivalence.
For simplicity, we consider here only states containing a single \texttt{set} constraint whose argument is a list of constants.
The propagation history is always empty, and we can ignore both that and the indices.
In this example, and only here, we add  subscripts to distinguish
the different versions of the equivalence symbol $\approx$:
$\approx_\textrm{CHR}$,
$\approx_\textrm{\metachr}$, $\approx_\textrm{ABSTRACT}$.
With these remarks, we can specify the equivalence at the level of CHR as follows:
$$\{\texttt{set($\ell_1$)}\} \approx_\textrm{CHR} \{\texttt{set($\ell_2$)}\} \quad\Leftrightarrow\quad \mathit{perm}(\ell_1,\ell_2),$$ where
$\mathit{perm(\ell_1,\ell_2)}$ is an auxiliary predicate that holds if and only of $\ell_1$ and $\ell_2$ are lists
of constants that are permutations of each other.
We will now demonstrate the construction given by the proof of Proposition~\ref{prop:equiv-state-exists}
for an abstract state $(\{\texttt{set([a,b]}\}\where \true)$ that we will write in the short form $\{\texttt{set([a,b]}\}$.
The proposition suggests the following abstract equivalence that holds between the indicated abstract states.
$$\{\texttt{set([a,b])}\} \approx_\textrm{ABSTRACT} (S' \where S'\approx_\textrm{\metachr} \{\texttt{set([a,b])}\})$$
To clarify the meaning of $\approx_\textrm{ABSTRACT}$, we unfold the right and innermost equivalence $\approx_\textrm{\metachr}$ 
according to the definition, assuming a lifting of
$\mathit{perm}/2$ to \metachr, and we get the following.
 $$\{\texttt{set([a,b])}\} \approx_\textrm{ABSTRACT} (\{\texttt{set($\ell$)}\} \where \mathit{perm}(\ell,\texttt{[a,b]}\})
 $$
The right-hand side, is now in a form that makes it easier to apply abstract derivations.
\end{example}
With the abstract states at hand, we can now define abstract derivation steps. 
\begin{definition}[Abstract Derivation Step]\label{def:abstract-derivation-step}
An \emph{abstract derivation step} is an
abstraction of the form $A \stackrel{D}\ourmapsto A'$ where
$A$ and $A'$ are abstract $I$-states, $D$ an abstract label (i.e., abstract built-in atom or abstract application instance),
with $\vars(A')\cup\vars(D)\setminus\vars(A)$ being fresh and unused variables,
such that the following properties hold.
\setbox3=\hbox{\textbf{(Completeness)\hbox to 0.5em{}}}
\begin{itemize}
\item \noindent\hbox to 1\wd3{\textbf{(Soundness)}\hfil}For any $( a\stackrel{d}\ourmapsto a')\in\denotes{ A \stackrel{D}\ourmapsto  A'}$, it holds that
         $a\stackrel{d}\ourmapsto a'$ is a concrete derivation step.
\item \noindent\hbox{\textbf{(Completeness)\hbox to 0.5em{}}}For any $a\in\denotes{ A}$ (for any $a'\in\denotes{A'}$) there exists a concrete derivation step\\
\noindent\hbox to 1\wd3{}$( a\stackrel d\ourmapsto a')\in\denotes{ A\stackrel D\ourmapsto A'}$.
\end{itemize}
\emph{Abstract $I$-derivations} are defined in the usual way as a sequence of zero or more abstract $I$-derivation steps.
\end{definition}
An abstract derivation step is intended to cover a set of concrete derivation steps, but 
unintended variable clashes in the abstract derivation step can cause undesired limitations on those. This is avoided with the requirement of fresh and unused variables.
The second part of the completeness condition is relevant when we compose derivations and diagrams.
If $\denotes{A'}$ includes an element $a'$ for which the indicated step does not exist, $A'$ is so to speak
too big, and the next step (or equivalence statement) from $A'$  would have to take care of too many
irrelevant concrete states.

The following property follows immediately from the Definition~\ref{def:abstract-derivation-step}.

\begin{proposition}[Abstract Derivation]\label{prop:abstract-derivation-covering}
For any 
abstract  $I$-derivation 
$$\Xi  =(A_0\stackrel{D_1}\ourmapsto A_1\stackrel{D_2}\ourmapsto\cdots\stackrel{D_n}\ourmapsto A_n) \quad,\; n\geq0.$$
the following
properties hold.\setbox3=\hbox{\textbf{(Completeness)\hbox to 0.5em{}}}
\begin{itemize}
\item \noindent\hbox to 1\wd3{\textbf{(Soundness)}\hfil}Any element of $\denotes{\Xi}$ is a concrete derivation.
\item \noindent\hbox{\textbf{(Completeness)\hbox to 0.5em{}}}For any $a_0\in\denotes{ A_0}$ (for any $a_n\in\denotes{ A_n}$)  there exists a concrete derivation\\
 \noindent\hbox to 1\wd3{}$( a_0\stackrel{d_1}\ourmapsto a_1\stackrel{d_2}\ourmapsto\cdots\stackrel{d_n}\ourmapsto a_n)\in\denotes{ \Xi}$.
\end{itemize}
\end{proposition}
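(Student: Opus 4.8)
The plan is to establish the two clauses separately, in each case reducing the claim about the whole derivation $\Xi$ to the single-step guarantees of Definition~\ref{def:abstract-derivation-step}. I would lean on two facts from Definition~\ref{def:denotation-and-abstraction}: that $\denotesgr{-}$ acts \emph{structurally} (so the denotation of $\Xi$ decomposes into the denotations of its constituent steps, being $\bot$ as soon as any constituent is), and that the variables introduced along $\Xi$ are \emph{staged}, since the freshness requirement on each abstract step forces $\vars(A_i)\cup\vars(D_i)\setminus\vars(A_{i-1})$ to be fresh and unused.

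For \textbf{soundness}, I would take an arbitrary element $d\in\denotes{\Xi}$, witnessed by a grounding substitution $\sigma$ with $\denotesgr{\Xi\sigma}=d\neq\bot$. Because $\Xi$ is the structure composed of the abstract steps $A_{i-1}\stackrel{D_i}\ourmapsto A_i$, the structural clause gives $d=(\denotesgr{A_0\sigma}\stackrel{\denotesgr{D_1\sigma}}\ourmapsto\cdots\stackrel{\denotesgr{D_n\sigma}}\ourmapsto\denotesgr{A_n\sigma})$ with every constituent distinct from $\bot$. For each $i$, the triple $\denotesgr{A_{i-1}\sigma}\stackrel{\denotesgr{D_i\sigma}}\ourmapsto\denotesgr{A_i\sigma}$ is, via the same $\sigma$, an element of $\denotes{A_{i-1}\stackrel{D_i}\ourmapsto A_i}$, hence a concrete derivation step by the soundness clause for abstract steps. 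Since consecutive steps share the abstraction $A_i$, and thus the same concrete state $\denotesgr{A_i\sigma}$, these steps chain into a single concrete derivation; so $d$ is a concrete derivation.

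For \textbf{completeness} I would induct on $n$. The case $n=0$ is immediate, as $\denotes{\Xi}=\denotes{A_0}$. For the inductive step I would write $\Xi=\Xi'\stackrel{D_n}\ourmapsto A_n$ with $\Xi'$ of length $n-1$; given $a_0\in\denotes{A_0}$, the induction hypothesis supplies a grounding substitution $\sigma'$ of $\Xi'$ with $\denotesgr{A_0\sigma'}=a_0$ and $\denotesgr{\Xi'\sigma'}$ a concrete derivation ending in $a_{n-1}=\denotesgr{A_{n-1}\sigma'}$. Then $a_{n-1}\in\denotes{A_{n-1}}$, so the completeness clause for the last abstract step yields a concrete step out of $a_{n-1}$. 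Here the freshness condition does the work: the variables $\vars(A_n)\cup\vars(D_n)\setminus\vars(A_{n-1})$ are disjoint from $\vars(\Xi')$, so $\sigma'$ can be extended over exactly those to a grounding $\sigma$ of $\Xi$ with $\denotesgr{\Xi\sigma}$ equal to the prefix derivation prolonged by the final step, an element of $\denotes{\Xi}$ starting at $a_0$. The parenthetical ``for any $a_n\in\denotes{A_n}$'' variant is symmetric, using the backward completeness clause and inducting from the final state.

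The hard part will be the reconciliation of grounding substitutions on the shared abstraction $A_{n-1}$. The prefix fixes one particular \metachr{} representation $A_{n-1}\sigma'$ of $a_{n-1}$, whereas the single-step completeness clause only asserts that \emph{some} grounding realizes a concrete step from $a_{n-1}$, a priori using a different representation of the shared state; naively unioning the two groundings could corrupt $\denotesgr{A_n\sigma}$ or $\denotesgr{D_n\sigma}$. The freshness requirement isolates the genuinely new variables, so the only overlap with the prefix is on $\vars(A_{n-1})$, where both representations denote the same concrete state $a_{n-1}$. Because \metachr{} templates (Definition~\ref{def:template}) simulate CHR variables rigidly, the shared representation can be retained from the prefix while the last step is extended over the fresh variables alone. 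Making this reconciliation explicit is the crux; once granted, the inductive gluing, and hence both properties, follow directly from Definition~\ref{def:abstract-derivation-step}.
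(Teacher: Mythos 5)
Your proposal is correct and follows essentially the same route as the paper, whose entire proof reads ``By induction. Base case $n=0$ is trivial; the step follows directly from Definition~\ref{def:abstract-derivation-step}.'' The substitution-reconciliation issue you flag as the crux is real but is exactly what the freshness condition in Definition~\ref{def:abstract-derivation-step} was designed to dispose of, so your elaboration is a faithful expansion of what the paper leaves implicit rather than a different argument.
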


\begin{proof}
By induction. Base case $n=0$ is trivial; the step follows directly from Definition~\ref{def:abstract-derivation-step}.
\end{proof}
Analogously to concrete corners, abstract corners are constructed using abstract derivations, abstract equivalence and abstract invariants, as follows.
%
\begin{definition}[Abstract Corners]\label{def:variable-regular-beta-corner}
An \emph{abstract $I$-corner} is a structure of the form $(A_1 \mathrel{\mathit{Rel}_1} A \mathrel{\mathit{Rel}_2} A_2)$ 
where $(A_1 \mathrel{\mathit{Rel}_1} A)$ is an abstract $I$-derivation step or abstract equivalence, and
$(A \mathrel{\mathit{Rel}_2} A_2)$ an abstract $I$-derivation step such that
$\vars(A_1)\cap\vars(A_2)\subseteq\vars(A)$.
\end{definition}
We will refer to an abstract corner as  an abstract $\alpha_1$-,  $\alpha_2$-,  $\alpha_3$-, $\beta_1$- or $\beta_2$-corner
according to the  relationships  involved, analogous to what we have done for concrete corners
(Definition~\ref{def:critical}, above).
The following property
is a consequence of what we have shown so far.
\begin{proposition}\label{prop:abstract-corner-covering}
For any 
abstract $I$-corner 
$$\mathbf{\Lambda}  =(A_1 \mathrel{\mathit{Rel_1}} A \mathrel{\mathit{Rel_2}} A_2),$$
the following
properties hold.
\setbox3=\hbox{\textbf{(Completeness)\hbox to 0.5em{}}}
\begin{itemize}
\item \noindent\hbox to 1\wd3{\textbf{(Soundness)}\hfil}Any element of $\denotes{\mathbf{\Lambda}}$ is a concrete corner.
\item \noindent\hbox{\textbf{(Completeness)\hbox to 0.5em{}}}For any $a\in\denotes{ A}$ (for any $a_1\in\denotes{ A_1}$) (for any $a_2\in\denotes{ A_2}$) there exists a concrete corner\\
\noindent\hbox to 1\wd3{}$( a_1 \mathrel{\mathit{rel}_1} a \mathrel{\mathit{rel}_2}  a_2)\in\denotes{\mathbf{\Lambda}}$
\end{itemize}
\end{proposition}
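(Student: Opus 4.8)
The plan is to derive both properties directly from the soundness and completeness clauses already established for the two constituents of the corner — the abstract derivation step(s) and, in the $\beta$-case, the abstract equivalence — gluing their witnessing substitutions along the shared ancestor $A$. Writing $V_A = \vars(A)$ and letting $V_1, V_2$ denote the variables of the two relations not in $V_A$, the structural freshness built into Definition~\ref{def:abstract-derivation-step} together with the corner condition $\vars(A_1)\cap\vars(A_2)\subseteq\vars(A)$ from Definition~\ref{def:variable-regular-beta-corner} will make $V_A, V_1, V_2$ pairwise disjoint, which is exactly what makes the gluing possible.

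For \textbf{Soundness}, I would take any $c\in\denotes{\mathbf{\Lambda}}$, say $c=\denotesgr{\mathbf{\Lambda}\sigma}=(a_1\mathrel{\mathit{rel}_1}a\mathrel{\mathit{rel}_2}a_2)\neq\bot$ for a grounding $\sigma$. By the denotation rule for structures (Definition~\ref{def:denotation-and-abstraction}), $c\neq\bot$ forces each constituent to denote a non-$\bot$ object, so restricting $\sigma$ gives $(a_1\mathrel{\mathit{rel}_1}a)\in\denotes{A_1\mathrel{\mathit{Rel}_1}A}$ and $(a\mathrel{\mathit{rel}_2}a_2)\in\denotes{A\mathrel{\mathit{Rel}_2}A_2}$. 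Soundness of the abstract derivation step (Definition~\ref{def:abstract-derivation-step}) and of the abstract equivalence (Definition~\ref{def:abstract-eqs-etc}) then make $a_1\mathrel{\mathit{rel}_1}a$ a concrete step or equivalence and $a\mathrel{\mathit{rel}_2}a_2$ a concrete step out of the same $a$; by Definition~\ref{def:corners-etc} this is precisely a concrete $\alpha$- or $\beta$-corner. The variable condition plays no role here, since concrete corners carry no such constraint.

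For \textbf{Completeness}, the three parenthetical cases are handled uniformly. Given $a\in\denotes{A}$, I would apply the forward completeness clause of each constituent (for the equivalence, the $i=2$ instance of the clause in Definition~\ref{def:abstract-eqs-etc}) to obtain witnesses $\sigma_1$ on $V_A\cup V_1$ and $\sigma_2$ on $V_A\cup V_2$, each mapping $A$ to $a$ and each yielding a valid concrete relation; I then fix one $V_A$-grounding $\sigma_A$ of $a$ and set $\sigma=\sigma_A\cup\sigma_1|_{V_1}\cup\sigma_2|_{V_2}$, which is well-defined by the disjointness noted above, whence $\denotesgr{\mathbf{\Lambda}\sigma}=(a_1\mathrel{\mathit{rel}_1}a\mathrel{\mathit{rel}_2}a_2)$ is the required concrete corner. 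The case starting from $a_2\in\denotes{A_2}$ (resp.\ $a_1\in\denotes{A_1}$) reduces to this one: the backward completeness clause of the second step (resp.\ of the first step, or the $i=1$ clause of the equivalence when $\mathit{Rel}_1={\approx}$) first produces a matching ancestor value $a\in\denotes{A}$, and the remaining relation is then supplied by its forward clause exactly as above.

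The step needing care — and the main obstacle — is the gluing itself: the two witnesses $\sigma_1,\sigma_2$ are only guaranteed to send $A$ to the \emph{same value} $a$, not to \emph{agree as substitutions} on $V_A$, whereas the combined $\sigma$ must commit to a single $V_A$-grounding $\sigma_A$. I would dispose of this by observing that a concrete derivation step is determined by the value of its ancestor state and the grounding of the fresh label and body variables: by Proposition~\ref{prop:deriv-step-label-and-subst} the result is fixed by the state and the label, and in both an \textbf{Apply} and a \textbf{Built-in} step the matched or executed part of the label is a subterm of $A$ whose denotation is already pinned down by $a$; likewise the truth of an abstract equivalence depends only on the two denoted states. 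Hence replacing $\sigma_i|_{V_A}$ by the common $\sigma_A$ — which still maps $A$ to $a$ — preserves validity and the denoted wing, so the independent fresh parts $\sigma_1|_{V_1}$ and $\sigma_2|_{V_2}$ may be combined freely. The overall argument mirrors the inductive proof of Proposition~\ref{prop:abstract-derivation-covering}, now with a single branch point at the shared ancestor.
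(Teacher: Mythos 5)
Your proof is correct and follows essentially the same route as the paper, which simply declares the proposition an immediate consequence of the soundness and completeness conditions in Definitions~\ref{def:abstract-eqs-etc} and~\ref{def:abstract-derivation-step}. Your explicit gluing of the two witnessing substitutions along the shared ancestor, justified by the freshness and $\vars(A_1)\cap\vars(A_2)\subseteq\vars(A)$ conditions, is precisely the detail the paper leaves implicit.
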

\begin{proof}
The proposition is an immediate consequence of the soundness and completeness conditions in Definitions~\ref{def:abstract-eqs-etc} and~\ref{def:abstract-derivation-step}.
\end{proof}
Finally, we introduce abstract joinability diagrams for abstract corners, allowing us to treat a perhaps infinite set of corners in a single proof case.
%
\begin{definition}
An \emph{abstract joinability diagram} (modulo $\approx$) for an abstract $I$-corner is a structure of the form
$$\mathbf{\Lambda}  =( A_1 \mathrel{\mathit{Rel_1}} A \mathrel{\mathit{Rel_2}} A_2),$$
is a  structure
of the form
$$\boldsymbol{\Delta}=(A'_1 \mapsfromstar A_1 \mathrel{\mathit{Rel_1}} A \mathrel{\mathit{Rel_2}} A_2 \mapstostar A_2')$$
where
$A_1'\mapsfromstar   A_1$ and $A_2   \mapstostar A_2'$  are abstract derivations  such that the abstract equivalence
 $A_1'\approx A_2'$ holds.
A given abstract corner is \emph{(abstractly) joinable modulo $\approx$} whenever there exists an abstract joinability diagram for it.
\end{definition}

\begin{proposition}\label{prop:abstract-diagram-covering}
Let $\boldsymbol{\Delta}$ be an
abstract joinability diagram for an
abstract $I$-corner $\mathbf{\Lambda}$.
Then the following
properties hold.
\setbox3=\hbox{\textbf{(Completeness)\hbox to 0.5em{}}}
\begin{itemize}
\item \noindent\hbox to 1\wd3{\textbf{(Soundness)}\hfil}Any element of $\denotes{\boldsymbol{\Delta}}$ is a concrete joinability diagram.
\item \noindent\hbox{\textbf{(Completeness)\hbox to 0.5em{}}}For any $\lambda\in\denotes{{\mathbf{\Lambda}}}$
 there exists a concrete joinability diagram for $\lambda$ in $\denotes{\boldsymbol{\Delta}}$.
\end{itemize}
\end{proposition}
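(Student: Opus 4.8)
The plan is to prove the two directions separately, in each case decomposing the diagram $\boldsymbol{\Delta}$ into its three constituents — the left wing derivation $A_1'\mapsfromstar A_1$, the central corner $\mathbf{\Lambda}=(A_1\mathrel{\mathit{Rel_1}}A\mathrel{\mathit{Rel_2}}A_2)$, and the right wing derivation $A_2\mapstostar A_2'$ — together with the abstract equivalence $A_1'\approx A_2'$ that the definition of an abstract joinability diagram requires to hold.

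For \textbf{soundness}, I would take an arbitrary $\delta\in\denotes{\boldsymbol{\Delta}}$, say $\delta=\denotesgr{\boldsymbol{\Delta}\sigma}$ for a grounding substitution $\sigma$ with $\denotesgr{\boldsymbol{\Delta}\sigma}\neq\bot$. Restricting $\sigma$ to each constituent and applying the soundness halves of Proposition~\ref{prop:abstract-derivation-covering} (to the two wings) and Proposition~\ref{prop:abstract-corner-covering} (to the corner) shows that $\denotesgr{(A_1'\mapsfromstar A_1)\sigma}$ and $\denotesgr{(A_2\mapstostar A_2')\sigma}$ are concrete derivations and that the central part is a concrete corner. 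Since $A_1'\approx A_2'$ holds, soundness of abstract equivalence (Definition~\ref{def:abstract-eqs-etc}) gives $\denotesgr{A_1'\sigma}\approx\denotesgr{A_2'\sigma}$. Gluing these pieces along their shared end states yields exactly a concrete joinability diagram, which is the claim.

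For \textbf{completeness}, I would start from an arbitrary concrete corner $\lambda=(a_1\mathrel{\mathit{rel_1}}a\mathrel{\mathit{rel_2}}a_2)\in\denotes{\mathbf{\Lambda}}$, fix a grounding $\sigma_0$ with $\denotesgr{\mathbf{\Lambda}\sigma_0}=\lambda$, and extend it to a grounding of the whole diagram. Because $a_1\in\denotes{A_1}$, the completeness half of Proposition~\ref{prop:abstract-derivation-covering} applied to the left wing produces a concrete derivation $a_1\mapstostar a_1'$ lying in $\denotes{A_1'\mapsfromstar A_1}$; by Definition~\ref{def:abstract-derivation-step} the variables introduced along this wing are fresh, so its grounding may be chosen to extend $\sigma_0$. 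The same argument applied to the right wing yields $a_2\mapstostar a_2'$. The corner condition $\vars(A_1)\cap\vars(A_2)\subseteq\vars(A)$ of Definition~\ref{def:variable-regular-beta-corner}, together with this freshness, guarantees that the two wing-extensions touch disjoint sets of variables and hence combine with $\sigma_0$ into a single consistent grounding $\sigma$ of $\boldsymbol{\Delta}$.

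The remaining, and principal, obstacle is to secure the closing equivalence $a_1'\approx a_2'$ for the specific endpoints produced above, since the two wings are instantiated independently. Here I would use that the diagram carries $A_1'\approx A_2'$ as a meta-level constraint, so the combined grounding $\sigma$ must additionally be chosen with $\denotesgr{(A_1'\approx A_2')\sigma}\neq\bot$. The available freedom is exactly in the grounding of the fresh wing-label variables, whose choice — by the functional dependency of Proposition~\ref{prop:deriv-step-label-and-subst} — fixes $a_1'$ and $a_2'$; completeness of abstract equivalence (Definition~\ref{def:abstract-eqs-etc}) ensures that a matching $A_2'$-state exists for each $A_1'$-state, and one must argue that these choices can be made simultaneously so that $\sigma$ is consistent on the variables shared through $A_1'\approx A_2'$. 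Soundness of abstract equivalence then delivers $\denotesgr{A_1'\sigma}\approx\denotesgr{A_2'\sigma}$, making $\denotesgr{\boldsymbol{\Delta}\sigma}$ the required concrete joinability diagram for $\lambda$ in $\denotes{\boldsymbol{\Delta}}$. I expect the delicate bookkeeping to lie precisely in showing that the per-wing groundings are mutually consistent and respect the equivalence constraint at once; everything else is a direct composition of the already-established soundness and completeness results.
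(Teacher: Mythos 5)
Your proposal is correct and follows the same route as the paper, whose own proof simply declares the statement an immediate consequence of Propositions~\ref{prop:abstract-derivation-covering} and~\ref{prop:abstract-corner-covering} together with Definition~\ref{def:abstract-eqs-etc}; you have merely spelled out the decomposition into the two wings, the central corner, and the closing equivalence. The one point you leave as an open worry --- that the two independently instantiated wings might not admit a mutually consistent grounding respecting $A_1'\approx A_2'$ --- in fact dissolves: the freshness condition of Definition~\ref{def:abstract-derivation-step} forces the variables introduced along the two wings to be disjoint (any shared variables of $A_1'$ and $A_2'$ are inherited from $A$ and already fixed by $\sigma_0$), so the wing groundings merge trivially, and then the \emph{soundness} half of Definition~\ref{def:abstract-eqs-etc} applied to the abstract equivalence $A_1'\approx A_2'$ guarantees the concrete equivalence for whatever endpoints the merged grounding yields --- no simultaneous choice of ``matching'' states via the completeness half is needed.
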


\begin{proof}
The proposition is an immediate consequence of soundness and completeness properties
given by Propositions~\ref{prop:abstract-derivation-covering} and~\ref{prop:abstract-corner-covering}
and
Definition~\ref{def:abstract-eqs-etc}.
\end{proof}
Combining this with the Critical Corner Theorem, Theorem~\ref{thm:critCorner}, p.~\pageref{thm:critCorner},
we get immediately the following.

\begin{lemma}[Abstract Corner Lemma]\label{thm:abstractCornerLemma}
Assume a program $\Pi$ with invariant $I$ and state equivalence relation $\approx$,
and let $\mathcal D$ be a family of 
abstract $I$-corner that together covers all
concrete corners that are subsumed by some general critical pre-corner for $\Pi$.
Then
  $\Pi$ is locally confluent modulo $\approx$
if and only if
  all diagrams in $\mathcal D$ is joinable.
\end{lemma}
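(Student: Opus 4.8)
The plan is to prove both implications through the Critical Corner Theorem (Theorem~\ref{thm:critCorner}), which already characterises local confluence modulo $\approx$ as joinability of every concrete $I$-corner subsumed by some most general critical pre-corner. All that remains is to move joinability back and forth between the concrete and the abstract level, and for this the soundness and completeness of abstract diagrams recorded in Proposition~\ref{prop:abstract-diagram-covering} are exactly the right tools. Throughout, $\mathcal D$ is fixed and, by hypothesis, its members jointly cover all such subsumed concrete corners, i.e.\ every such $\lambda$ lies in $\denotes{\mathbf\Lambda}$ for some $\mathbf\Lambda\in\mathcal D$.

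For the ``if'' direction I would assume that every $\mathbf\Lambda\in\mathcal D$ is abstractly joinable and derive local confluence. Let $\lambda$ be an arbitrary concrete $I$-corner subsumed by some most general critical pre-corner. By the covering hypothesis there is a $\mathbf\Lambda\in\mathcal D$ with $\lambda\in\denotes{\mathbf\Lambda}$, and by abstract joinability $\mathbf\Lambda$ carries an abstract joinability diagram $\boldsymbol\Delta$. The completeness clause of Proposition~\ref{prop:abstract-diagram-covering} then hands me a concrete joinability diagram for $\lambda$ inside $\denotes{\boldsymbol\Delta}$, so $\lambda$ is joinable modulo $\approx$. Since $\lambda$ was arbitrary, Theorem~\ref{thm:critCorner} yields that $\Pi$ is locally confluent modulo $\approx$. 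This direction is clean and uses only completeness of abstract diagrams together with the covering assumption.

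For the ``only if'' direction I would start from local confluence and, via Theorem~\ref{thm:critCorner}, obtain that every concrete $I$-corner subsumed by a most general critical pre-corner is joinable. Combining this with the Critical Corner Lemma (Lemma~\ref{lem:critCorner}), which says that each concrete corner is either already joinable or subsumed, I can upgrade this to the statement that every concrete corner in $\denotes{\mathbf\Lambda}$, for every $\mathbf\Lambda\in\mathcal D$, is joinable modulo $\approx$. The remaining task --- and the step I expect to be the main obstacle --- is to promote this pointwise joinability of all covered concrete corners into a single abstract joinability diagram for each $\mathbf\Lambda$. This is delicate because an abstract diagram has one fixed shape, namely a fixed sequence of abstract labels, and by the soundness/completeness of Proposition~\ref{prop:abstract-diagram-covering} its instances can only realise concrete joinability diagrams of that same shape; when the concrete corners covered by a single $\mathbf\Lambda$ join through structurally different derivations (for instance through different rules, as in Example~\ref{ex:different-confluence}), no single diagram can realise all of them. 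I therefore expect this direction to go through only after either invoking an expressibility/completeness property of \metachr{} that guarantees a uniform diagram for the covering families actually considered, or splitting $\mathbf\Lambda$ along disjoint meta-level constraints --- the manoeuvre that motivates the later notion of split-joinability and the full Abstract Corner Theorem.
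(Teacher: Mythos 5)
Your ``if'' direction is exactly the argument the paper intends: the paper's entire proof of this lemma is the single remark that it follows ``immediately'' from Theorem~\ref{thm:critCorner} combined with Proposition~\ref{prop:abstract-diagram-covering}, and your chain (covering hypothesis, then an abstract joinability diagram for the covering $\mathbf\Lambda$, then the completeness clause of Proposition~\ref{prop:abstract-diagram-covering} to extract a concrete joinability diagram for each subsumed concrete corner, then Theorem~\ref{thm:critCorner}) is precisely that derivation written out.

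Your suspicion about the ``only if'' direction is not a defect of your strategy but a genuine problem with the statement as literally given: under the natural universal reading of ``let $\mathcal D$ be a family \dots'', that direction is refuted by the paper itself. In Example~\ref{ex:different-confluence-abstract} the program of Example~\ref{ex:different-confluence} is locally confluent, the abstract critical corner $\mathbf{\Lambda}^{r_1,r_2}$ belongs to the canonical covering family of abstract critical corners (which covers all subsumed concrete corners by Lemma~\ref{lem:cover-subsume}), every concrete corner in $\denotes{\mathbf{\Lambda}^{r_1,r_2}}$ is joinable, and yet $\mathbf{\Lambda}^{r_1,r_2}$ itself is not joinable --- exactly the shape mismatch you describe, with $r_3$ joining one half of the covered corners and $r_4$ the other. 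The paper does not close this gap inside the lemma's proof; the repair is deferred to Definition~\ref{def:splitting} and Theorem~\ref{thm:abstract-CC-theorem}, where ``joinable'' is weakened to ``inconsistent, joinable, or split-joinable'', i.e.\ to the existence of \emph{some} refined covering family each of whose members is joinable. So the ``only if'' half must be read (or restated) existentially in $\mathcal D$, and your observation that pointwise joinability of the corners in $\denotes{\mathbf\Lambda}$ does not assemble into a single abstract diagram for $\mathbf\Lambda$ is correct; it is the substantive point that the paper's one-line proof glosses over.
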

In the following, we consider how to construct a family of abstract corners as required in Lemma~\ref{thm:abstractCornerLemma}.

\subsection{Proving Confluence Modulo Equivalence using Abstract Joinability Diagrams}\label{sec:abstract-corners-and-confluence}
Here we will show how a set of most general critical pre-corners can be lifted to a
set of abstract corners, and we identify necessary and sufficient conditions for
confluence modulo equivalence.
A program with invariant $I$ and equivalence $\approx$ is assumed.

A pre-corner (Definition~\ref{def:pre-corners}, p.~\pageref{def:pre-corners}) is a common ancestor state whose wing states are indirectly characterized 
by their relationships to the ancestor state. Our way to lift it, to be defined below, consists of first lifting
the common ancestor state, and then applying abstract versions of the indicated relationships (i.e., rule application, built-in
or equivalence) to obtain abstract wing states, constrained at the meta-level by restrictions
induced by guards and invariant.

As part of this, we need the following construction, which, for a given abstract ancestor state and type
of derivation step, provides the resulting abstract state.
For convenience, we combine derivation steps and $\approx$.
We recall Proposition~\ref{prop:deriv-step-label-and-subst}, p.~\pageref{prop:deriv-step-label-and-subst},
stating for the concrete case, that there is at most one resulting state for a derivation step.

\begin{definition}~\label{def:post}
Let $A$ be an abstract $I$-state and $\mathit{Rel}$ either an abstract derivation step
or $\approx$.
An \emph{abstract post state for $A$ with respect to $\mathit{Rel}$} is an abstract state $A'$ such that
$A \mathrel{\mathit{Rel}} A'$ holds.
Such a state $A'$ is indicated as $\post(A,\mathit{Rel})$.
\end{definition}
Proposition~\ref{prop:equiv-state-exists}, p.~\pageref{prop:equiv-state-exists},
shows that $\post(A,\approx)$ can be found in a straightforward manner,
although in practice it may be useful to unfold the definition of $\approx$.

For the definition to be useful
for derivation steps, we assume that  $\metachr$ is sufficiently rich as to express
an abstract state $\post(A,\mathit{Rel})$.
One way to obtain this is to include  $\post$ as a function in the language,
whose meaning were defined semantically as indicated in Definition~\ref{def:post},
but it will be more useful to define a procedure that produces an abstract state
in terms of plain \metachr{} predicates and terms.

A general $\post$ procedure that can handle all built-in predicates will be quite complex;
Drabent's~\cite{drabent-report-1997} analysis of a predicate transformer
for unification of arbitrary terms demonstrates this.
However, in many cases, the invariant and the selection state built-ins
reduce the complexity.
In all the examples we have considered, it has been straightforward to produce all necessary post
states by hand; see, e.g., the larger example in Section~\ref{sec:viterbi} below.
%
%

\begin{example}\label{ex:drabent}
Let $\Sigma$ be the abstract state $(\langle \{{\tt p}(x), x \; {\tt is }\; y\}, \emptyset \rangle \mathrel{\where} variable(x))$, and we will construct a state
$\post(\Sigma, x\; {\tt is }\; y)$.

This example is especially tricky as the built-in is incomplete and there are no restrictions on $y$,
so the post state should capture both the $\error$ and proper states.
We solve the problem, suggesting the following state;
we assume two auxiliary \metachr\ predicates
$\mathit{arithmetic}(t)$ being true for any ground $t$ for which $\denotesgr t$ can be evaluated as an arithmetic expression, and $\mathit{eval}(t_1,t_2)$ being true for any ground $t_1,t_2$
for which $\denotesgr{t_1}$ can be evaluated as an arithmetic expression with value $\denotesgr{t_2}$.

\begin{align*}
 \langle \mathit{S}, \emptyset \rangle \mathrel{\where} &\; \big( \mathit{S} = \{{\tt p}(y')\} \land \mathit{eval}(y,y') \land \mathit{arithmetic}(y) \big)\quad\lor \\
 &\; \big( \mathit{S} = \{\mathit{error}\} \land \neg \mathit{arithmetic}(y) \big)
\end{align*}
Here, the meta-variable $x$ has been replaced by $y'$, which represents the value of the arithmetic expression $y$.
Notice that no single rule can apply to this state, and if it happens to arise in
an attempt to produce a joinability diagram, we need to apply the notion of splitting
introduced below in Definition~\ref{def:splitting}.
\end{example}
The following lifting of a pre-corner into an abstract corner is straightforward, although quite detailed
as it includes meta-level versions of conditions for subsumption by pre-corner (Definition~\ref{def:subsumption-by-gcpc}, p.~\pageref{def:subsumption-by-gcpc}).
As we see in our examples, the detailed conditions often reduce to something much simpler,
so the definition below represents, so to speak, worst cases.

\begin{definition}[Lifting Most General Critical Pre-Corners into Abstract Critical $I$-corners]\label{def:lifting-pre-corners}~\\
An \emph{abstract critical $I$-corner for} a most general critical pre-corner $\Lambda = (\circ\mathrel{\mathit{rel}_1} \langle s_0, t_0 \rangle \mathrel{\mathit{rel}_2}\circ)$
is of the form
$$\mathbf{\Lambda}\; =\;
\bigl(\post(A,\mathit{Rel}_1) \;\;\mathrel{\mathit{Rel}_1} \;\; A \;\; \mathrel{\mathit{Rel}_2} \;\post(A,\mathit{Rel}_2)\bigr), $$ 
where $A$ is an abstract state, and $\mathit{Rel}_1, \mathit{Rel}_2$ abstract derivation steps or $\approx$,
specified as follows.
Let firstly
$(\circ\mathrel{\mathit{Rel}_1} \langle S_0, T_0 \rangle \mathrel{\mathit{Rel}_2}\circ)$ be a template (Def.~\ref{def:template}) for $\Lambda$.
The construction of $A$ depends on relationships ${\mathit{Rel}_1}$, ${\mathit{Rel}_2}$, that determine whether the corner is of type $\alpha_1$, $\alpha_2$, etc.

In case  ${\mathit{Rel}_k}$, $k=1,2$ is an application instance, we assume the notation
$ (r\colon\; H_k  \;\mathtt{<=>}\; G_k\mid C_k)$.
The symbols $S^+, T^+, T^\div$ are fresh and unused variables,
and let $S$ stand for $( S_0 \uplus S^+)$ and $T$ for  $T_0 \uplus T^+ \setminus T^{\div}$.
(For the reading of the following, keep in mind that $A, S, T, S_0, T_0, \mathit{Rel}_1, \mathit{Rel}_2, H_k,G_k,C_k$
are not \metachr\ variables, but mathematical placeholders. The predicates used below are elements of \metachr{} (Def.~\ref{def:metachr}))
The common ancestor state $A$ is given as follows for the different cases.
\begin{description}\def\pind{\vrule width 0pt height 2.5ex}\def\ind{\hbox to 8.5em{}}
  \item[$\alpha_1$:\phantom{$, \beta_1$}] $\langle S, T\rangle \where\;
  I(\langle S, T\rangle) \land \mexe{H_1}{G_1} \land \mexe{H_2}{G_2} \land \mbox{}$\\
  \ind
  \pind$T^+ \subseteq \allrars(S_0 \uplus S^+) \setminus \allrars(S_0) \land \mbox{}$\\  
   \ind  
   \pind$T^{\div} \subseteq \allrars(S_0)\setminus\{\mathrm{applied}(\mathit{Rel}_1), \mathrm{applied}(\mathit{Rel}_2) \} $   \\
  
  \item[$\alpha_2$:\phantom{$, \beta_1$}]
  $\langle S, T\rangle \where\; I(\langle S, T\rangle) \land \mexe{H_1}{G_1} \land\mbox{} $\\
  \ind
  \pind$T^+ \subseteq \allrars(S_0 \uplus S^+) \setminus \allrars(S_0) \land\mbox{} $\\
  \ind
   \pind$T^{\div} \subseteq \allrars(S_0)\setminus\{\mathrm{applied}(\mathit{Rel}_1)\} \land \mbox{}$ \\
  \ind
   \pind$\commonvars(G_1,\mathit{Rel}_2) $ \\
  
  \item[$\alpha_3, \beta_2$:]
  $ \langle S, T\rangle \where\; I(\langle S, T\rangle) \land\mbox{} $\\
\ind
  \pind$T^+ \subseteq \allrars(S_0 \uplus S^+) \setminus \allrars(S_0) \land \mbox{}$\\
\ind
   \pind$T^{\div} \subseteq \allrars(S_0) $ \\

  \item[$\beta_1$:\phantom{$, \beta_1$}]
  $\langle S, T\rangle \where\; I(\langle S, T\rangle) \land \mexe{H_1}{G_1} \land\mbox{} $\\
 \ind
  \pind$T^+ \subseteq \allrars(S_0 \uplus S^+) \setminus \allrars(S_0) \land \mbox{}$\\
  \ind
   \pind$T^{\div} \subseteq \allrars(S_0)\setminus\{\mathrm{applied}(\mathit{Rel}_1)\} $ \\

 \end{description}
Whenever $\mathbf{\Lambda}$ is constructed as above, a semantically equivalent
abstract corner $\mathbf{\Lambda}'$ may also be recognized as an
abstract critical $I$-corner for  $\Lambda$.
By \emph{a set of abstract critical corners for program $\Pi$} we mean
a set consisting of one and only one abstract critical corner for  each most general pre-corner for $\Pi$.
\end{definition}
We notice that the set of all abstract critical $I$-corners for a program $\Pi$ is finite  as there exist only a finite number of most general critical pre-corners for $\Pi$,
cf.~Proposition~\ref{prop:precorners-finite}.

\begin{proposition}\label{prop:abstract-critical-corners-finite}
For any given program $\Pi$ with invariant $I$ and equivalence $\approx$, a set of 
abstract critical corners for it is finite.
\end{proposition}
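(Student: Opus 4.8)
The plan is to reduce the claim directly to the finiteness of the most general critical pre-corners, already established as Proposition~\ref{prop:precorners-finite}. The crucial ingredient is the terminal stipulation in Definition~\ref{def:lifting-pre-corners}: \emph{a set of abstract critical corners for $\Pi$} is defined to contain one and only one abstract critical $I$-corner for each most general critical pre-corner for $\Pi$. Consequently the cardinality of any such set equals the number of most general critical pre-corners, and the proposition becomes a bookkeeping statement rather than a substantive one.

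First I would recall Proposition~\ref{prop:precorners-finite}, which asserts that for any program with invariant $I$ and equivalence $\approx$ the set of most general critical pre-corners is finite. Next I would observe that the assignment sending each most general pre-corner $\Lambda$ to its chosen abstract critical corner $\mathbf{\Lambda}$ (constructed as in Definition~\ref{def:lifting-pre-corners}) is a function whose domain is this finite set. Since a set of abstract critical corners is, by definition, exactly the image of such a function, and the image of a finite set under any function is finite, the set of abstract critical corners is finite. This is precisely the reasoning sketched in the remark immediately preceding the proposition.

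I do not anticipate a genuine obstacle. The only point worth a word is well-definedness of the count: Definition~\ref{def:lifting-pre-corners} permits some latitude, since any corner \emph{semantically equivalent} to the constructed $\mathbf{\Lambda}$ also qualifies as an abstract critical corner for $\Lambda$. This latitude is harmless for finiteness, because the definition of a set of abstract critical corners insists on \emph{exactly one} representative per pre-corner; whichever representative is selected, each of the finitely many pre-corners contributes a single element. Thus the result follows immediately from Proposition~\ref{prop:precorners-finite} together with the one-per-pre-corner convention, with no further analysis required.
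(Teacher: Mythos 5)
Your proof is correct and matches the paper's own justification, which is exactly the remark preceding the proposition: finiteness follows from Proposition~\ref{prop:precorners-finite} together with the ``one and only one abstract critical corner per most general pre-corner'' stipulation at the end of Definition~\ref{def:lifting-pre-corners}. Your additional observation about the harmlessness of the semantic-equivalence latitude is a sensible clarification but does not change the argument.
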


\begin{lemma}[Cover by Abstract Critical Corner $\Leftrightarrow$ Subsumed by Most Gen.~Crit.~Pre-Corner]\label{lem:cover-subsume}
For given program $\Pi$,
let $\mathbf{\Lambda}$ be an abstract critical $I$-corner for a most general critical pre-corner ${\Lambda}$.
Then the set of  $I$-corners covered by $\mathbf{\Lambda}$ is identical to the set
of  $I$-corners subsumed by ${\Lambda}$.
Furthermore,
$$\{\lambda \mid  \mbox{$\exists$ abs.\ crit.\ corner $\mathbf{\Lambda}$ for $\Pi$ .\ $\lambda\in\denotes{\mathbf{\Lambda}}$}\}
   = \{\lambda\mid  \mbox{$\exists$ most gen.\ critical corner $\Lambda$ for $\Pi $ .\ $\Lambda < \lambda$}\}
 $$
\end{lemma}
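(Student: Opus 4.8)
The plan is to reduce the "furthermore" equality to the per-pre-corner statement $\denotes{\mathbf{\Lambda}} = \{\lambda \mid \Lambda < \lambda\}$ and then take the union over all most general critical pre-corners, using that a set of abstract critical corners contains exactly one abstract corner per pre-corner (Definition~\ref{def:lifting-pre-corners}). So the heart of the argument is the single-pre-corner set equality. Since the lifting in Definition~\ref{def:lifting-pre-corners} was deliberately built as a meta-level transcription of the subsumption conditions of Definition~\ref{def:subsumption-by-gcpc}, I would prove the equality by unfolding the covering operator $\denotes{\cdot}$ (Definition~\ref{def:denotation-and-abstraction}) and matching witnesses in both directions, checking that the transcription is faithful.

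For the inclusion $\denotes{\mathbf{\Lambda}} \subseteq \{\lambda \mid \Lambda < \lambda\}$, I would take a concrete corner $\lambda \in \denotes{\mathbf{\Lambda}}$ with grounding substitution $\sigma$ satisfying $\denotesgr{\mathbf{\Lambda}\sigma} = \lambda \neq \bot$. The template construction (Definition~\ref{def:template}) replaced the names of CHR variables in $\langle S_0, T_0\rangle$ and in the rule parts $H_k,G_k,C_k$ by fresh \metachr{} variables, so I would read off the subsumption substitution $\theta$ from the restriction of $\sigma$ to these variables and set $s^+ = S^+\sigma$, $t^+ = T^+\sigma$, $t^\div = T^\div\sigma$. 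Each conjunct of the meta-level constraint $\Phi$ then yields one subsumption condition: $\denotesgr{\langle S,T\rangle\sigma}$ gives the store equation $s = S\theta \uplus s^+$ and the history equation $t = T \uplus t^+ \setminus t^\div$; the satisfied subset formulae give $t^+ \subseteq \allrars(S\theta\uplus s^+)\setminus\allrars(S\theta)$ and the bound on $t^\div$; $\mathit{rel}_k = \mathit{Rel}_k\theta$ because the wings are $\post(A,\mathit{Rel}_k)$; the satisfied $I(\langle S,T\rangle)$ certifies the ancestor is an $I$-state; each satisfied $\mexe{H_k}{G_k}$ certifies that the corresponding rule instance is a genuine application instance (proper guard not touching head variables, per Definition~\ref{def:rules}), so the step exists; and in the $\alpha_2$ case the satisfied $\commonvars(G_1,\mathit{Rel}_2)$ supplies the extra overlap $\vars(g\theta)\cap\vars(b\theta)\neq\emptyset$. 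The converse direction is symmetric: from subsumption witnesses I would assemble $\sigma$ as $\theta$ on the template variables together with $S^+\!\mapsto\! s^+$, $T^+\!\mapsto\! t^+$, $T^\div\!\mapsto\! t^\div$, and verify $\Phi\sigma$ (each conjunct being a subsumption clause, or else the $I$- and guard-wellformedness that hold because $\lambda$ is assumed an $I$-corner), whence $\denotesgr{\mathbf{\Lambda}\sigma}=\lambda$. Throughout, soundness and completeness of abstract steps (Definition~\ref{def:abstract-derivation-step}, Proposition~\ref{prop:abstract-corner-covering}) guarantee that covered objects are genuine $I$-corners, matching the domain over which subsumption quantifies, and Proposition~\ref{prop:deriv-step-label-and-subst} forces the wings to coincide once ancestor and labels do.

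The one place where the match is not a literal clause-by-clause identity, and where I expect the main obstacle, is the propagation-history bookkeeping. The lifting constrains the removed-records variable by $T^\div \subseteq \allrars(S_0)\setminus\{\mathrm{applied}(\mathit{Rel}_1),\mathrm{applied}(\mathit{Rel}_2)\}$, whereas Definition~\ref{def:subsumption-by-gcpc} asks only $t^\div \subseteq \allrars(S\theta)$. I would reconcile these by observing that the records $\mathrm{applied}(\mathit{Rel}_k)$ lie in $\allrars(S\theta)$ but are absent from $T$ (the pre-corner history was built to exclude exactly them, Definition~\ref{def:pre-corners}) and cannot reappear through $t^+$ (which omits all of $\allrars(S\theta)$); hence including or omitting them in the removed set leaves $t = T\uplus t^+\setminus t^\div$ unchanged, so the two bounds cut out the same family of resulting histories. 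With this observation the extra exclusion in the lifting is harmless and both definitions range over the same concrete corners. I would also record the degenerate treatment of $\approx$ for $\beta$-corners, where $\mathit{Rel}_1=\mathord{\approx}$ is left fixed by substitution and $\post(A,\approx)$ covers precisely the $\approx$-equivalent states by Proposition~\ref{prop:equiv-state-exists}, so that $\mathit{rel}_1=\mathord{\approx}$ in subsumption is met. The remaining verifications are routine unfolding, and the finitely many corner types ($\alpha_1,\alpha_2,\alpha_3,\beta_1,\beta_2$) are handled uniformly by the same witness correspondence, with only the presence or absence of the $\mexe{}{}$ and $\commonvars$ conjuncts distinguishing the cases.
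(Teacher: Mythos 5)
Your proposal is correct and follows essentially the same route as the paper's proof: reduce the ``furthermore'' to the single-pre-corner equality, then establish both inclusions by translating between the CHR-level subsumption witnesses $\theta, s^+, t^+, t^\div$ and a grounding \metachr{} substitution $\sigma$ acting on the template variables and $S^+, T^+, T^\div$, invoking the soundness and completeness clauses of the abstract-step definitions and the $\exe$/$\mexenoargs$ correspondence. Your explicit reconciliation of the $T^\div$ bound (the extra exclusion of $\mathrm{applied}(\mathit{Rel}_k)$ in the lifting being harmless because those records are absent from $T_0$ and cannot enter via $t^+$) is a detail the paper dismisses as ``verified by inspection,'' and it is a correct and worthwhile addition.
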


\begin{proof}
The second part is a direct consequence of the first part.

For the first part, consider firstly an $I$-corner $\lambda$ that is subsumed by a most general critical pre-corner $\Lambda$
(with  notation as
in Definition~\ref{def:lifting-pre-corners} for each case of $\alpha_1$-, $\alpha_2$-, etc.\ corners);
we prove that it is covered by the abstract critical corner $\mathbf{\Lambda}$ for $\Lambda$ as given by the lemma as follows.

Subsumption means (by Definition~\ref{def:subsumption-by-gcpc}, p.~\pageref{def:subsumption-by-gcpc}) that there exists a CHR substitution  $\theta$ and suitable sets  $s^{+}, t^{+}, t^{\div}$ and states $\mathit{post}_1, \mathit{post}_2$ such that
$\lambda = \bigl(\mathit{post}_1 \mathrel{(\mathit{rel}_1\theta)}   \langle s_0\theta \uplus s^+ , \; s_0 \uplus t^{+} \setminus t^{\div} \rangle 
 \mathrel{(\mathit{rel}_2\theta)}  \mathit{post}_2\bigr)$ and the following conditions hold.
\begin{itemize}
\item $s=s_0\theta\uplus s^+$
\item $t=t_0\uplus t^+ \setminus t^\div$
\item $t^+\subseteq\allrars(s_0\theta\uplus s^+)\setminus\allrars(s_0\theta)$
\item $t^\div\subseteq  \allrars(s_0\theta)$
\end{itemize}
Let now $\sigma$ be a \metachr\ substitution such that $\denotesgr{S^+\sigma} = s^+$,
$\denotesgr{T^+\sigma} = t^+$ and $\denotesgr{T^\div\sigma} = t^\div$.
Since $\langle S_0,T_0\rangle$ is a template for $\langle s_0,t_0\rangle$, and $S^+,T^+,T^\div$ do not occur in $\langle S_0,T_0\rangle$,
we can extend $\sigma$ such that
$\denotesgr{\langle S_0,T_0\rangle\sigma}=\langle s_0,t_0\rangle$.
It remains, for each possible case of the corners being $\alpha_1$,  $\alpha_2$,  $\alpha_3$, $\beta_1$ or $\beta_2$, to show
that $\Phi\sigma$ holds where $A = (\langle S,T\rangle\where\Phi)$ (i.e., $\Phi$ stands for the relevant of the alternative, detailed conditions in Definition~\ref{def:lifting-pre-corners}, above).
This can be verified by inspection in each case, referring to 
\begin{enumerate}
  \item the fact that $\lambda$ is an $I$-corner, meaning that the two relationships
$\bigl(\mathit{post}_1 \mathrel{(\mathit{rel}_1\theta)}   \langle s_0\theta \uplus s^+ , \; t_0 \uplus t^{+} \setminus t^{\div} \rangle\bigr)$
and $\bigr( \langle s_0\theta \uplus s^+ , \; t_0 \uplus t^{+} \setminus t^{\div} \rangle 
\mathrel{(\mathit{rel}_2\theta)}  \mathit{post}_2\bigr)$ do hold, i.e., their state arguments each satisfy the additional conditions,
being  the relevant of a derivation step (Definitions~\ref{def:rules} and~\ref{def:derivations}, p.~\pageref{def:rules}-\pageref{def:derivations}) or an equivalence statement, and 
  \item the completeness parts of Definitions~\ref{def:abstract-eqs-etc} and~\ref{def:abstract-derivation-step},
           and the relationship between $\exe$ and $\mexenoargs$ (Definition~\ref{def:metachr}, p.~\pageref{def:metachr}).
\end{enumerate}
The detailed arguments are left out as the $\Phi$ formula in each case is a straightforward lifting of the
similar conditions at the level of CHR. 

The other way round, consider  an $I$-corner $\lambda=(a_1\mathrel{\mathit{rel}_1'} \langle s, t \rangle \mathrel{\mathit{rel}_2'}a_2)$ covered by an abstract critical corner $\mathbf{\Lambda}$
(with  notation as
in Definition~\ref{def:lifting-pre-corners} for each case of $\alpha_1$-, $\alpha_2$-, etc.\ corners);
we prove that it is subsumed by the most general critical pre-corner $\Lambda$ given by the lemma (and notation as in Definition~\ref{def:lifting-pre-corners}) as follows.

Covering means that there exists a grounding $\metachr$ substitution $\sigma$ such that
$\denotesgr{\langle S,T\rangle\sigma}=\denotesgr{\langle S_0\uplus S^+,T\uplus T^+\setminus T^\div\rangle\sigma} = \langle s, t\rangle$,
and, for $k=1,2$, $\denotesgr{\mathit{Rel}_k}=\mathit{rel}'_k$ and $\denotesgr{\post(\mathit{Rel}_k)\sigma}=a_k$,
and the meta-level constraint part of $A\sigma$ holds.

Now $\langle S_0,T_0\rangle$ is defined as a template for $\langle s_0,t_0\rangle$, so there exists a \metachr\ substitution $\sigma'$ such that
$\denotesgr{\langle S,T\rangle\sigma'}= \langle s_0,t_0\rangle$, and thus we can find a CHR substitution $\theta$ such that
$\langle s_0,t_0\rangle\theta=\langle s, t \rangle$.

Let now $s^+=\denotesgr{S^+\sigma}$, $T_0=\denotesgr{t_0\sigma}$,
$T^+=\denotesgr{t^+\sigma}$ and $T^\div=\denotesgr{t^\div\sigma}$;
it follows that $s=s_0\theta\uplus s^+$, $t=t^0\uplus t^+\setminus t^\div$ and, by definition of the \metachr\ version of $\allrars$,
that $t^+\subseteq\allrars(s_0\theta\uplus s^+)\setminus\allrars(s_0)$ and $t^\div\subseteq  \allrars(s_0\theta)$.

We have now that $s=s_0\theta\uplus s^+$ and $t=t_0\uplus t^+\setminus t^\div$, and together with
the soundness parts of Definitions~\ref{def:abstract-eqs-etc} and~\ref{def:abstract-derivation-step},
and the relationship between $\exe$ and $\mexenoargs$ (Definition~\ref{def:metachr}, p.~\pageref{def:metachr}),
it follows that $\lambda$ is subsumed by $\Lambda$.
\end{proof}
We notice the following straight-forward property, indicating that we can use  existing, automatic confluence checkers
(e.g.,~\cite{Raiser-Langbein2010})
to classify further abstract corners as ``trivially joinable'', so only those abstract corners whose joinability critically depend on
$I$ and $\approx$ need to be considered.
\begin{proposition}~\label{prop:reuse-old-conf-checkers}
Consider a program with invariant $I$ and equivalence $\approx$, and with only logical and $I$-complete built-ins,
and let $\mathbf{\Lambda}$ be an abstract critical $\alpha_1$-$I$-corner lifted from a most general critical pre-corner
$\circ \mathrel{\stackrel{R_1}\mapsfrom} \Sigma \mathrel{\stackrel{R_2}\ourmapsto} \circ$.
If the concrete corner $\Sigma_1 \mathrel{\stackrel{R_1}\mapsfrom} \Sigma \mathrel{\stackrel{R_2}\ourmapsto} \Sigma_2$
exists and is joinable modulo $=$ with invariant $\true$,
$\mathbf{\Lambda}$ is joinable modulo $\approx$ with invariant $I$.
\end{proposition}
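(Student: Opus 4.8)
The plan is to exhibit an explicit abstract joinability diagram for $\mathbf{\Lambda}$ by transporting the joining derivation that witnesses the joinability of the concrete corner onto the abstract wing states. First I would unpack the hypothesis: joinability modulo $=$ of $\Sigma_1 \mathrel{\stackrel{R_1}\mapsfrom} \Sigma \mathrel{\stackrel{R_2}\ourmapsto} \Sigma_2$ (with trivial invariant) means there are concrete derivations $\Sigma_1 \mapstostar z$ and $\Sigma_2 \mapstostar z'$ with $z = z'$, since the equivalence is identity; so both wings of the least corner reach one and the same CHR state $z$. These two derivations consist solely of rule applications and built-in steps, and by assumption every guard and body built-in occurring in them is logical and $I$-complete.

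The core step is to replay these two derivations on the abstract wing states $\post(A,\mathit{Rel}_1)$ and $\post(A,\mathit{Rel}_2)$, where $A = (\langle S,T\rangle \where \Phi)$ is the abstract common ancestor of Definition~\ref{def:lifting-pre-corners} with $S = S_0 \uplus S^+$ and $T = T_0 \uplus T^+ \setminus T^\div$. By Lemma~\ref{lem:cover-subsume}, the corners covered by $\mathbf{\Lambda}$ are exactly the $I$-corners subsumed by $\Lambda$; each such covered corner has a common ancestor of the shape $\Sigma\theta$ extended by some indexed constraints $s^+$ with a suitably adjusted propagation history, and its wings are the matching extensions of $\Sigma_1\theta$ and $\Sigma_2\theta$. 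Here I would invoke the classical subsumption principle~(*): because the relevant built-ins are logical and $I$-complete, partial correctness together with instantiation monotonicity guarantees that no guard evaluation turns into $\error$ and that every guard true for the least corner remains true after applying $\theta$ and adding $s^+$. Hence each concrete joining step can be replayed on every instance-extension, and I can form a single sequence of abstract derivation steps mirroring the concrete ones, carrying $S^+$, the history variables $T^+,T^\div$, and the meta-level constraint $\Phi$ along unchanged. Invariant preservation (Definition~\ref{def:corners-etc}) then ensures that, since each covered wing is an $I$-state and $I$ is closed under derivation steps, every intermediate abstract state covers only $I$-states and may therefore be taken as an abstract $I$-state, so the soundness and completeness requirements of Definition~\ref{def:abstract-derivation-step} are met.

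Both lifted derivations end at abstract states that are semantically equivalent (by Definition~\ref{def:denotation-and-abstraction}): for each grounding substitution they cover the same concrete state, namely the store of $z$ under $\theta$ extended by $s^+$, with consistent propagation histories. Calling these endpoints $A_1'$ and $A_2'$, the abstract equivalence $A_1' \approx A_2'$ then holds by reflexivity of $\approx$ at the CHR level: every covered pair is in fact a single state equal to itself, so the soundness and completeness conditions of Definition~\ref{def:abstract-eqs-etc} are satisfied trivially. This produces the required abstract joinability diagram for $\mathbf{\Lambda}$, and hence $\mathbf{\Lambda}$ is joinable modulo $\approx$ with invariant $I$.

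I expect the main obstacle to be the lifting step of the second paragraph: one must verify, uniformly across the entire covered family rather than for one instance, that each concrete joining step genuinely lifts to an applicable abstract derivation step respecting the meta-level constraint and the bookkeeping components $S^+$, $T^+$, $T^\div$ and the histories. This amounts to re-establishing the subsumption principle at the abstract level. The logical and $I$-complete hypothesis is exactly what makes this succeed, as it excludes the failure of monotonicity that, for non-logical or $I$-incomplete guards (cf.\ Example~\ref{ex:motivate-pre-corners}), would otherwise prevent a step from being replayed on an arbitrary instance.
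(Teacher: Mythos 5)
The paper states this proposition without proof, introducing it only as a ``straight-forward property'', so there is no official argument to compare against; judged on its own, your proof is correct and is exactly the intended justification. You replay the joining derivations of the least concrete corner on all covered extensions via the subsumption principle~(*) for logical, $I$-complete built-ins, use invariance of $I$ to keep all intermediate states inside $I$, and close with reflexivity of $\approx$ on the common end state --- which is precisely the machinery the paper itself uses in Lemma~\ref{lem:leastIcorner} and the $\alpha_1$ case of the Critical Corner Lemma, so the bookkeeping you flag (propagation histories, $S^+$, $T^+$, $T^\div$) is real but already discharged by those results.
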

It does not hold that a program is confluent modulo $\approx$ if and only if all of its abstract critical pairs are joinable.
This is demonstrated by the following example.
\begin{example}[Continuing Example~\ref{ex:different-confluence}, p.~\pageref{ex:different-confluence}]\label{ex:different-confluence-abstract}
Consider the program of Example~\ref{ex:different-confluence}; its two first rules leads to the following
abstract critical corner $\mathbf{\Lambda}^{r_1,r_2}$ (there are no propagation rules, so we leave out the propagation history).
$$ \bigl(\{\texttt{q}(x)\} \where \texttt{1>=}x\land x\texttt{>=-1}\bigr) \;\stackrel{r_1}\mapsfrom\;
     \bigl( \{\texttt{p}(x)\} \where \texttt{1>=}x\land x\texttt{>=-1}\bigr)
    \; \stackrel{r_2}\ourmapsto  \; \bigl( \{\texttt{r}(x)\} \where\texttt{1>=}x, x\texttt{>=-1}\bigr)
$$
We recall the two remaining rules of this program, that may perhaps apply to a state
consisting of a single \texttt{q}  atom.

\medskip\noindent
\hbox to 2em{$r_3$:\hfil}\verb"q(X) <=> 1 >= x, x >= 0  | r(x)"\\
\hbox to 2em{$r_4$:\hfil}\verb"q(X) <=> 0 >= x, x >= -1 | r(x)"

\medskip\noindent
It appears that none of these rules can apply to the left wing state so $\mathbf{\Lambda}^{r_1,r_2}$ is not joinable,
although any concrete corner covered by it is joinable.
\end{example}
This phenomenon which is induced by the presence of non-logical and incomplete predicates motivates
the following.
   
\begin{definition}[Split-joinability]\label{def:splitting}
Assume a set of  \metachr{} formulas, $\{ \Phi_{i}\mid i\in \mathit{Inx}\}$,
for some finite or infinite index set $\mathit{Inx}$, such that
$$\Phi \Leftrightarrow    \bigvee_{i\in \mathit{Inx}}\Phi_{i}.$$
A \emph{splitting} of an abstract corner 
$$A'  \;\mathrel{\mathit{Rel}_1} \;(\Sigma \where \Phi_i) \; \mathrel{\mathit{Rel}_2} A''$$
is the set of  abstract corners
$$ \bigl\{\bigl(\post((\Sigma \where \Phi_i), \mathit{Rel}_1)\; \mathrel{\mathit{Rel}_1} \;(\Sigma \where \Phi_i) \; \mathrel{\mathit{Rel}_2} \; \post((\Sigma \where \Phi_i), \mathit{Rel}_2)\bigr)  \mid i\in \mathit{Inx} \bigr\}.$$
An abstract corner is \emph{split-joinable} modulo $\approx$ whenever it has a splitting $\{\mathbf{\Lambda}_i \mid i\in \mathit{Inx}\}$
such that each $\mathbf{\Lambda}_i$ is either inconsistent or joinable modulo $\approx$.
\end{definition}
The following property follows immediately from the definition.
\begin{proposition}\label{prop:splitting-with-recursion}
For any splitting of an abstract corner $\mathbf{\Lambda}$ into $\{ \mathbf{\Lambda}_i\mid i\in \mathit{Inx}\}$,
it holds that
$$
\denotes{\mathbf{\Lambda}} = \bigcup_{i\in \mathit{Inx}} \denotes{\mathbf{\Lambda}_i}.
$$
\end{proposition}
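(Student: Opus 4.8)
The plan is to establish the set equality $\denotes{\mathbf{\Lambda}} = \bigcup_{i\in\mathit{Inx}}\denotes{\mathbf{\Lambda}_i}$ by unfolding the denotation function of Definition~\ref{def:denotation-and-abstraction} and reducing everything to the logical equivalence $\Phi \Leftrightarrow \bigvee_{i}\Phi_i$ that defines the splitting. Throughout I write the corner being split with common ancestor $A = (\Sigma\where\Phi)$ and wings $\post(A,\mathit{Rel}_1)$, $\post(A,\mathit{Rel}_2)$, and each split corner $\mathbf{\Lambda}_i$ identically but with ancestor $A_i = (\Sigma\where\Phi_i)$; recall that the three components of each corner share exactly the \metachr\ variables of $\Sigma$, while the post-states and labels contribute only fresh variables (Definition~\ref{def:abstract-derivation-step}).

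First I would observe that, by Definition~\ref{def:denotation-and-abstraction}, a concrete corner $\lambda$ lies in $\denotes{\mathbf{\Lambda}}$ exactly when there is a grounding substitution $\sigma$ with $\denotesgr{\mathbf{\Lambda}\sigma}=\lambda\neq\bot$, and that a ground structure denotes $\bot$ as soon as one of its constituent ground abstractions does. Hence $\denotesgr{\mathbf{\Lambda}\sigma}\neq\bot$ forces the ancestor $A\sigma=(\Sigma\sigma\where\Phi\sigma)$ to be non-$\bot$, i.e.\ $\Phi\sigma$ to hold, and symmetrically $\denotesgr{\mathbf{\Lambda}_i\sigma}\neq\bot$ forces $\Phi_i\sigma$ to hold. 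The crucial linking step is that, once $\Phi\sigma$ holds, the concrete ancestor $\denotesgr{\Sigma\sigma}$ is fixed, and by the functional dependency of Proposition~\ref{prop:deriv-step-label-and-subst} together with the soundness and completeness of abstract derivation steps (Definition~\ref{def:abstract-derivation-step}), the concrete wing states are uniquely determined by this ancestor and by the ground labels $\mathit{Rel}_1\sigma,\mathit{Rel}_2\sigma$; thus $\denotesgr{\post(A,\mathit{Rel}_k)\sigma}$ does not depend on whether the carried constraint is $\Phi$ or $\Phi_i$. For the $\approx$-relations this determinacy is not automatic, but it is supplied by the canonical post-state construction of Proposition~\ref{prop:equiv-state-exists}, whose constraint has the shape $(S'\approx\Sigma)\land\Phi$: here $\sigma$ itself pins down the value of the fresh wing variable $S'$, so under any $\sigma$ with $\Phi_i\sigma$ true (hence $\Phi\sigma$ true) the wings of $\mathbf{\Lambda}$ and $\mathbf{\Lambda}_i$ either denote the same concrete state or both denote $\bot$.

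With this in hand both inclusions become routine. For $\supseteq$, take $\lambda\in\denotes{\mathbf{\Lambda}_i}$ witnessed by $\sigma$; then $\Phi_i\sigma$ holds, so $\Phi\sigma$ holds, the ancestor denotations agree, and by the determinacy just noted $\denotesgr{\mathbf{\Lambda}\sigma}=\denotesgr{\mathbf{\Lambda}_i\sigma}=\lambda\neq\bot$, giving $\lambda\in\denotes{\mathbf{\Lambda}}$. For $\subseteq$, take $\lambda\in\denotes{\mathbf{\Lambda}}$ witnessed by $\sigma$; then $\Phi\sigma$ holds, and since $\Phi\Leftrightarrow\bigvee_i\Phi_i$ there is an index $i$ with $\Phi_i\sigma$ true, whence the very same $\sigma$ yields $\denotesgr{\mathbf{\Lambda}_i\sigma}=\lambda\neq\bot$ and $\lambda\in\denotes{\mathbf{\Lambda}_i}$.

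The step I expect to be the only real obstacle is the middle one: verifying that replacing $\Phi$ by $\Phi_i$ in the ancestor changes the wing states in nothing beyond their gating condition. For honest derivation steps this is immediate from Proposition~\ref{prop:deriv-step-label-and-subst}, but for the $\approx$ wings the post-state is not functionally determined by the ancestor, so one must appeal explicitly to the form of the construction in Proposition~\ref{prop:equiv-state-exists} (or, more generally, insist that a splitting reuses the \emph{identical} post-procedure on $\Sigma$, so that the wings of $\mathbf{\Lambda}$ and $\mathbf{\Lambda}_i$ share their structural part and their step-induced constraints and differ only by having $\Phi$ versus $\Phi_i$ transported in). Once this determinacy is isolated as a small lemma, and using that each $\Phi_i$ implies $\Phi$ while $\Phi$ implies some $\Phi_i$, the proposition reduces entirely to the defining disjunction $\Phi\Leftrightarrow\bigvee_i\Phi_i$, matching the paper's remark that it \emph{follows immediately from the definition}.
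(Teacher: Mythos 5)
Your proposal is correct and follows the same route the paper intends: the paper offers no explicit proof, declaring the proposition to ``follow immediately from the definition,'' and your argument is precisely the unfolding of $\denotes{-}$ against the defining disjunction $\Phi\Leftrightarrow\bigvee_i\Phi_i$ that this remark presupposes. Your additional care about the non-functional character of $\post(A,\approx)$ for $\beta$-corner wings is a legitimate refinement of a point the paper glosses over (the equality genuinely requires that the split corners reuse a compatible post-state construction, e.g.\ the canonical one of Proposition~\ref{prop:equiv-state-exists}), not a deviation from its approach.
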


\begin{example}[continuing Example~\ref{ex:different-confluence-abstract}]
The non-joinable abstract critical corner $\mathbf{\Lambda}^{r_1,r_2}$ is split-joinable using the  disjunction
$(\texttt{1>=}x\land x\texttt{>=0})\;\lor\;
        (\texttt{0>=}x\land x\texttt{>=-1})$.
Notice that neither $\mathbf{\Lambda}^{r_1,r_2}$ nor any member of  its splitting covers
a concrete corner of the form $(\cdots\mapsfrom \{\texttt{p(X)}\} \ourmapsto\cdots)$,
where \texttt{X} is a CHR variable.
\end{example}
We notice that the invariant of groundedness does not in itself make splitting necessary,
see, e.g., the example in Section~\ref{sec:set-example-all-details}, below.
In Section~\ref{sec:infinite-split} below, we show an example of a program that needs an infinite splitting,
but still we can use the results in the present section to show confluence.

\begin{theorem}[Abstract Critical Corner Theorem]\label{thm:abstract-CC-theorem} 
A CHR program with invariant $I$ and equivalence $\approx$
is locally confluent modulo $\approx$
if and only if 
each of its abstract critical $I$-corners is either inconsistent, joinable modulo $\approx$, or split-joinable modulo $\approx$.
\end{theorem}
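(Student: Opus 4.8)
The plan is to reduce the statement to a purely concrete joinability condition and then discharge each direction with the covering machinery already established. First I would combine the Critical Corner Theorem (Theorem~\ref{thm:critCorner}) with Lemma~\ref{lem:cover-subsume}: the former says $\Pi$ is locally confluent modulo $\approx$ if and only if every $I$-corner subsumed by some most general critical pre-corner is joinable modulo $\approx$, while the latter identifies the set of such subsumed corners with the set of concrete $I$-corners lying in $\bigcup_{\mathbf{\Lambda}}\denotes{\mathbf{\Lambda}}$, the union taken over all abstract critical $I$-corners for $\Pi$. Thus it suffices to prove that the three-way condition on abstract corners (inconsistent, joinable, or split-joinable) is equivalent to: \emph{every} concrete $I$-corner covered by some abstract critical corner is joinable modulo $\approx$.

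For the ``if'' direction I would argue instance-wise. Let $\lambda$ be a concrete $I$-corner covered by some abstract critical corner $\mathbf{\Lambda}$; since $\lambda\in\denotes{\mathbf{\Lambda}}$, the corner $\mathbf{\Lambda}$ is consistent, so by hypothesis it is either joinable or split-joinable modulo $\approx$. If $\mathbf{\Lambda}$ is joinable, the completeness clause of Proposition~\ref{prop:abstract-diagram-covering} immediately yields a concrete joinability diagram for $\lambda$. If $\mathbf{\Lambda}$ is split-joinable with witnessing splitting $\{\mathbf{\Lambda}_i\}_{i\in\mathit{Inx}}$, then by Proposition~\ref{prop:splitting-with-recursion} we have $\lambda\in\denotes{\mathbf{\Lambda}_i}$ for some $i$; hence $\mathbf{\Lambda}_i$ is consistent and therefore joinable (not inconsistent), and Proposition~\ref{prop:abstract-diagram-covering} again produces a concrete joinability diagram for $\lambda$. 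In either case $\lambda$ is joinable modulo $\approx$, so by the reduction of the first paragraph $\Pi$ is locally confluent modulo $\approx$.

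For the ``only if'' direction, assume $\Pi$ is locally confluent modulo $\approx$ and fix an abstract critical corner $\mathbf{\Lambda} = (\cdots \mathrel{\mathit{Rel}_1} (\Sigma \where \Phi) \mathrel{\mathit{Rel}_2}\cdots)$. If $\mathbf{\Lambda}$ is inconsistent we are done, so suppose it is consistent. Every $\lambda\in\denotes{\mathbf{\Lambda}}$ is a concrete $I$-corner and hence joinable modulo $\approx$ by local confluence. The goal is to package this instance-wise joinability as split-joinability of $\mathbf{\Lambda}$. I would use the maximally refined splitting: for each grounding substitution $\tau$ with $\denotesgr{\mathbf{\Lambda}\tau}\neq\bot$, let $\Phi_\tau$ be the \metachr{} formula pinning the free variables of $\Sigma$ to their $\tau$-values (using the polymorphic equality of \metachr, Definition~\ref{def:metachr}) conjoined with $\Phi$, so that $\Phi\Leftrightarrow\bigvee_\tau\Phi_\tau$ and each resulting piece $\mathbf{\Lambda}_\tau$ covers a single concrete corner $\lambda_\tau$. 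Since $\lambda_\tau$ is joinable it possesses a concrete joinability diagram, which, by the denotability assumption on \metachr{} (Definition~\ref{def:metachr}), can be rendered as a ground abstract joinability diagram for $\mathbf{\Lambda}_\tau$; thus every $\mathbf{\Lambda}_\tau$ is joinable, and $\mathbf{\Lambda}$ is split-joinable modulo $\approx$ in the sense of Definition~\ref{def:splitting}.

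I expect the main obstacle to lie in this ``only if'' direction, specifically in justifying that instance-wise joinability of all covered corners can always be repackaged as a single splitting into joinable-or-inconsistent pieces. The delicate points are that there is in general no uniform bound on the shapes of the concrete joinability diagrams across the infinitely many instances — which is exactly why a finite abstract diagram need not exist and why the splitting is permitted to be infinite — and that a single concrete diagram must be lifted faithfully to a ground abstract diagram so that the soundness and completeness clauses of Proposition~\ref{prop:abstract-diagram-covering} hold for the singleton piece. Both are resolved by the richness and denotability hypotheses on \metachr, after which the only remaining work is the routine verification that the formulas $\{\Phi_\tau\}$ genuinely constitute a splitting in the sense of Definition~\ref{def:splitting}.
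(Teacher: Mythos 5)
Your proposal is correct and follows essentially the same route as the paper: reduce via Theorem~\ref{thm:critCorner} and Lemma~\ref{lem:cover-subsume} to joinability of all concrete $I$-corners covered by the abstract critical corners, and use Proposition~\ref{prop:splitting-with-recursion} together with Proposition~\ref{prop:abstract-diagram-covering} to pass between the abstract and concrete levels. The paper dismisses the whole argument as immediate from those three results, whereas you usefully make explicit the ``only if'' direction via the singleton splitting and the lifting of each concrete joinability diagram to a ground abstract one; the only cosmetic point is that the disjunction defining your splitting should range over all groundings satisfying $\Phi$ (with pieces whose full corner denotes nothing counted as inconsistent) so that $\Phi\Leftrightarrow\bigvee_\tau\Phi_\tau$ holds literally as Definition~\ref{def:splitting} requires.
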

\begin{proof}
Follows immediately from Critical Corner Theorem, i.e., Theorem~\ref{thm:critCorner}, p.~\pageref{thm:critCorner},
Lemma~\ref{lem:cover-subsume} and Proposition~\ref{prop:splitting-with-recursion}.
\end{proof}
Combining this result with Theorem~\ref{thm:conflu-and-critCorner}, p.~\pageref{thm:conflu-and-critCorner},
we arrive at our following central result.

\begin{theorem}\label{thm:termination+abstract-joinability=confluence}
 A terminating program with invariant $I$ and equivalence relation $\approx$ is confluent 
 if and only if 
 each of its abstract critical $I$-corners is either inconsistent, joinable modulo $\approx$, or split-joinable modulo $\approx$.
\end{theorem}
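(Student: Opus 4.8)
The plan is to obtain this statement purely by composing two equivalences that are already in place, so that no new combinatorial content is required. The right-hand condition --- that every abstract critical $I$-corner is inconsistent, joinable modulo $\approx$, or split-joinable modulo $\approx$ --- is exactly the hypothesis characterised by the Abstract Critical Corner Theorem (Theorem~\ref{thm:abstract-CC-theorem}), which asserts it to be equivalent to the program being locally confluent modulo $\approx$. So the first move is simply to invoke Theorem~\ref{thm:abstract-CC-theorem} to rewrite the right-hand side of the biconditional as ``$\Pi$ is locally confluent modulo $\approx$''. The remaining task is then to show that, under the termination hypothesis, local confluence modulo $\approx$ is equivalent to confluence modulo $\approx$.

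For that second step I would appeal to the CHR reformulation of Huet's theorem, Theorem~\ref{thm:I-joinable-corners}: an $I$-terminating derivation system is $I$-confluent modulo $\approx$ if and only if all of its $I$-corners (both $\alpha$- and $\beta$-type) are joinable modulo $\approx$, the latter being precisely the definition of local confluence modulo $\approx$ (Definition~\ref{def:corners-etc}). Chaining the two equivalences yields the claim directly. Equivalently --- and this is the route hinted at in the text --- one can combine Theorem~\ref{thm:abstract-CC-theorem} with Theorem~\ref{thm:conflu-and-critCorner}, since the latter already packages the termination argument by stating that a terminating $\Pi$ is confluent modulo $\approx$ if and only if every $I$-corner subsumed by some most general critical pre-corner is joinable; Lemma~\ref{lem:cover-subsume} then identifies those subsumed corners with the corners covered by the abstract critical corners, and Proposition~\ref{prop:splitting-with-recursion} accounts for the splitting case. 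Either presentation reduces the theorem to a one-line composition.

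The one point requiring care is the meaning of ``terminating'' and the soundness of the three escape-hatch cases. I would make explicit that ``terminating'' here is $I$-termination, i.e.\ termination of the derivation relation restricted to $I$-states, which is exactly what Theorem~\ref{thm:I-joinable-corners} (hence Huet's Theorem~\ref{thm:confLconfModEq}) requires; no stronger condition on the composition $\ourmapsto\cdot\approx$ is needed, because we are using the $\alpha$/$\beta$ formulation rather than the $\gamma$-variant of Lemma~\ref{lemma:alpga-and-gamma}. I would also verify that the three alternatives allowed for an abstract critical corner really do collapse into ``every covered concrete corner is joinable modulo $\approx$'': an inconsistent corner covers nothing (vacuously fine), a joinable one transfers its diagram to every covered corner via Proposition~\ref{prop:abstract-diagram-covering}, and a split-joinable one does so through its splitting, whose covered corners exhaust those of the original by Proposition~\ref{prop:splitting-with-recursion}. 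Once these bookkeeping facts are stated, the biconditional follows with no further work; the genuine mathematical weight sits in the already-established Theorems~\ref{thm:abstract-CC-theorem} and~\ref{thm:I-joinable-corners}, not in this final composition, so I expect the only real obstacle to be the careful alignment of the ``locally confluent modulo $\approx$'' wording across the two theorems and the termination notion feeding Huet's result.
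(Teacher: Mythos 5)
Your proposal is correct and follows essentially the same route as the paper, which likewise obtains the theorem as a one-line composition of Theorem~\ref{thm:abstract-CC-theorem} with Theorem~\ref{thm:conflu-and-critCorner} (the latter already packaging the termination/Newman--Huet step via Theorem~\ref{thm:I-joinable-corners}). The additional bookkeeping you flag --- inconsistent corners covering nothing, Proposition~\ref{prop:abstract-diagram-covering} for joinable corners, and Proposition~\ref{prop:splitting-with-recursion} for split-joinable ones --- is exactly what the paper relies on implicitly through its proof of Theorem~\ref{thm:abstract-CC-theorem}.
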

%
%

\section{Examples}\label{sec:examples}
We show three examples of confluence proofs.
First, we give all details for the very simple but highly motivating example appearing in the Introduction
of this paper.
Next, we consider a more complex program, the Viterbi algorithm expressed in CHR, for which we formalize
invariant and equivalence and give the proof of confluence modulo equivalence.
This is a practically interesting algorithm, and the example also demonstrates that
our framework can deal with nontrivial reasoning about the propagation history.
Finally, we show an example that our method is robust for some cases where a countably infinite splitting
is needed, Section~\ref{sec:infinite-split}.

Confluence modulo equivalence of a CHR version
of the union-find algorithm~\cite{Tarjan:1984:WAS:62.2160},
which has been used as a test case
for  aspects of confluence,
is demonstrated informally by~\cite{DBLP:conf/lopstr/ChristiansenK14}.
A detailed analysis and proof in terms of abstract critical corners is planned to appear in a 
future publication.

\subsection{The Motivating One-line Program shown Confluent Modulo Equivalence}\label{sec:set-example-all-details}
In the Introduction, we motivated  confluence modulo equivalence for CHR 
by a program consisting of the following single rule.
\begin{verbatim}
 set(L), item(A) <=> set([A|L]).
\end{verbatim}
Here we formalize the invariant $I$ and equivalence $\approx$ hinted in Examples~\ref{ex:collect} and ~\ref{ex:collect-inv}, p.~\pageref{ex:collect}-\pageref{ex:collect-inv}, and give a proof of confluence modulo $\approx$.
\begin{itemize}
\item $I(\langle S, T \rangle)$ if and only if
\begin{itemize}
 \item $S = \{{\tt set}(L)\} \uplus Items$ where $L$ is a list of constants, $Items$ is a set of {\tt item}/1 constraints
 whose arguments are constants,
 \item $T = \emptyset$.
\end{itemize}
\item $\langle S,T\rangle \approx \langle S',T'\rangle$,  if and only if 
 \begin{itemize}
  \item $I(\langle S,T\rangle)$ and $I(\langle S',T'\rangle)$,
  \item $S = \mbox{\tt set$(L)$} \uplus \mathit{Items}$ and  $S' = \mbox{\tt set$(L')$} \uplus \mathit{Items}$ such that $L$ and $L'$ are permutations of each other.
 \end{itemize}
 \end{itemize}
We identify the following two most general critical pre-corners for the program.
To give a complete picture, we have not abbreviated the application instances that label the derivation steps,
as we do in most other examples.
%
$$ 
\begin{array}{@{}c@{}}
\Lambda_1 = \;\;
 \bigl(\circ
 \stackrel{\scriptsize \mbox{\tt set(L),item(A)$\,$<=>$\,$set([A|L])}}{\mapsfrom}
 \langle \{\mbox{\tt set(L)}, \mbox{\tt item(A)}, \mbox{\tt item(B)}\}, \emptyset \rangle
 \stackrel{\scriptsize \mbox{\tt set(L),item(B)$\,$<=>$\,$set([B|L])}}{\ourmapsto} 
 \circ \bigr)
 \\
 \Lambda_2 = \;\;
 \bigl(\circ
 \stackrel{\scriptsize \mbox{\tt set(L1),item(A)$\,$<=>$\,$set([A|L1])}}{\mapsfrom}\; 
 \langle \{\mbox{\tt set(L1)},\mbox{\tt item(A)}, \mbox{\tt set(L2)}\}, \emptyset \rangle \; 
 \stackrel{\scriptsize \mbox{\tt set(L2),item(A)$\,$<=>$\,$set([A|L2])}}{\ourmapsto} 
 \circ \bigr)
 \end{array}
$$
We lift $\Lambda_1, \Lambda_2$ to the following abstract critical $I$-corners according to Definition~\ref{def:lifting-pre-corners}, p.~\pageref{def:lifting-pre-corners}. Trivially satisfied meta-level constraints are removed.
 $$ \scriptsize
 \begin{array}{@{}l@{}}
 \mathbf{\Lambda_1} =\;\;\;\;
        \begin{array}{c}
     \xymatrix{ 
     {\begin{array}{l}
     \langle \{\mbox{\tt set}(\ell), \mbox{\tt item}(a), \mbox{\tt item}(b)\} \uplus S, \emptyset \rangle \where \\
     \phantom{\where }\; I(\langle \{\mbox{\tt set}(\ell), \mbox{\tt item}(a), \mbox{\tt item}(b)\} \uplus S, \emptyset \rangle) 
     \end{array}}
 \ar@{|->}[dr]^(.6){\qquad\qquad\qquad \mbox{\scriptsize\tt set($\ell$),item($b$)$\,$<=>$\,$set($[b|\ell]$)}}
 \ar@{|->}[d]_-{\mbox{\scriptsize\tt set($\ell$),item($a$)$\,$<=>$\,$set($[a|\ell]$)}} & \\
 {\begin{array}{l}
     \langle \{\mbox{\tt set}([a|\ell]), \mbox{\tt item}(b)\} \uplus S, \emptyset \rangle \where \\
     \phantom{\where }\; I(\langle \{\mbox{\tt set}([a|\ell]), \mbox{\tt item}(b)\} \uplus S, \emptyset \rangle) 
     \end{array}}
 &
  {\begin{array}{l}
     \langle \{\mbox{\tt set}([b|\ell]), \mbox{\tt item}(a)\} \uplus S, \emptyset \rangle \where \\
     \phantom{\where }\; I(\langle \{\mbox{\tt set}([b|\ell]), \mbox{\tt item}(a)\} \uplus S, \emptyset \rangle) 
     \end{array}} 
 \\ }
 \end{array} \\ \\ \\
 \mathbf{\Lambda_2} =\;\;\;\;
 \begin{array}{c}
     \xymatrix{ 
     {\begin{array}{l}
     \langle \{\mbox{\tt set}(\ell_1), \mbox{\tt set}(\ell_2), \mbox{\tt item}(a)\} \uplus S, \emptyset \rangle \where \\
     \phantom{\where }\; I(\langle \{\mbox{\tt set}(\ell), \mbox{\tt set}(\ell_2), \mbox{\tt item}(a)\} \uplus S, \emptyset \rangle) 
     \end{array}}
 \ar@{|->}[dr]^(.6){\qquad\qquad\qquad \mbox{\scriptsize\tt set($\ell_2$),item($a$)$\,$<=>$\,$set($[a|\ell_2]$)}}
 \ar@{|->}[d]_-{\mbox{\scriptsize\tt set($\ell$),item($a$)$\,$<=>$\,$set($[a|\ell]$)}} & \\
 {\begin{array}{l}
     \langle \{\mbox{\tt set}(\ell_2),\mbox{\tt set}([a|\ell])\} \uplus S, \emptyset \rangle \where \\
     \phantom{\where }\; I(\langle \{\mbox{\tt set}(\ell_2),\mbox{\tt set}([a|\ell])\} \uplus S, \emptyset \rangle) 
     \end{array}}
  &
  {\begin{array}{l}
     \langle \{\mbox{\tt set}(\ell),\mbox{\tt set}([a|\ell_2])\} \uplus S, \emptyset \rangle \where \\
     \phantom{\where }\; I(\langle \{\mbox{\tt set}(\ell),\mbox{\tt set}([a|\ell_2])\} \uplus S, \emptyset \rangle) 
     \end{array}}
 }
     \end{array}
       \end{array}
$$
The abstract corner $\mathbf{\Lambda_2}$ is inconsistent because $I$ does not accept a constraint store with more than one {\tt set} constraint, and $\mathbf{\Lambda_1}$ is shown joinable modulo $\approx$ by the following abstract diagram $\boldsymbol{\Delta}_1$.
   $$ \scriptsize
    {\boldsymbol{\Delta}}_1 = \;\;\;\;
        \begin{array}{c}
     \xymatrix{ 
     {\begin{array}{l}
     \langle \{\mbox{\tt set}(l), \mbox{\tt item}(a), \mbox{\tt item}(b)\} \uplus S, \emptyset \rangle \where \\
     \phantom{\where }\; I(\langle \{\mbox{\tt set}(l), \mbox{\tt item}(a), \mbox{\tt item}(b)\} \uplus S, \emptyset \rangle) 
     \end{array}}
 \ar@{|->}[dr]^(.6){\qquad\qquad\qquad \mbox{\scriptsize\tt set($l$),item($b$)$\,$<=>$\,$set($[b|l]$)}}
 \ar@{|->}[d]_-{\mbox{\scriptsize\tt set($l$),item($a$)$\,$<=>$\,$set($[a|l]$)}} & \\
 {\begin{array}{l}
     \langle \{\mbox{\tt set}([a|\ell]), \mbox{\tt item}(b)\} \uplus S, \emptyset \rangle \where \\
     \phantom{\where }\; I(\langle \{\mbox{\tt set}([a|\ell]), \mbox{\tt item}(b)\} \uplus S, \emptyset \rangle) 
     \end{array}} 
     \ar@{|->}[d]_-{\mbox{\scriptsize\tt set($[a|l]$),item($b$)$\,$<=>$\,$set($[b,a|l]$)}}
 &
  {\begin{array}{l}
     \langle \{\mbox{\tt set}([b|\ell]), \mbox{\tt item}(a)\} \uplus S, \emptyset \rangle \where \\
     \phantom{\where }\; I(\langle \{\mbox{\tt set}([b|\ell]), \mbox{\tt item}(a)\} \uplus S, \emptyset \rangle) 
     \end{array}}
     \ar@{|->}[d]^-{\mbox{\scriptsize\tt set($[b|\ell]$),item($a$)$\,$<=>$\,$set($[a,b|\ell]$)}}
 \\ 
 {\begin{array}{l}
     \langle \{\mbox{\tt set}([b,a|\ell])\} \uplus S, \emptyset \rangle \where \\
     \phantom{\where }\; I(\langle \{\mbox{\tt set}([b,a|\ell])\} \uplus S, \emptyset \rangle) 
     \end{array}}
    \ar@2{~}[r]
 &
  {\begin{array}{l}
     \langle \{\mbox{\tt set}([a,b|\ell]) \} \uplus S, \emptyset \rangle \where \\
     \phantom{\where }\; I(\langle \{\mbox{\tt set}([a,b|\ell]) \} \uplus S, \emptyset \rangle) 
     \end{array}}
 \\  
 }
     \end{array}
$$
The program is terminating since each derivation step reduces the number of \texttt{item} constraints
by one, so by Theorem~\ref{thm:termination+abstract-joinability=confluence} it follows the program is  confluent modulo $\approx$.

\subsection{Confluence  Modulo Equivalence  of the Viterbi Algorithm}\label{sec:viterbi}
The Viterbi algorithm~\cite{Vit67} is an example of a dynamic programming algorithm that
searches for
one optimal solution to a problem among, perhaps, several equally good ones.

A Hidden Markov Model, HMM, is a finite state machine with probabilistic state transitions and
probabilistic emission of a letter from each state.
The \emph{decoding problem} for an observed sequence of emitted letters
$Ls$ is that of finding
a most probable \emph{path} which is a sequence of state transitions that may have produced $Ls$;
see~\cite{DurbinEtAL99} for a background on HMMs and their applications in computational biology.

A decoding problem is typically solved using the Viterbi algorithm~\cite{Vit67} which is an example
of a dynamic programming algorithm that produces solutions for a problem by successively extending
solutions for growing subproblems. While there are potentially exponentially many differents paths
to compare, an early pruning strategy ensures linear time complexity.

The algorithm has been studied in CHR
by~\cite{ChristiansenEtAlCHR2010,DBLP:conf/lopstr/ChristiansenK14} as shown below.
The optimal complexity requires a restriction in the possible derivations, namely that the \texttt{prune}
rule (below) is applied as early as possible. In~\cite{ChristiansenEtAlCHR2010}, it is
demonstrated how such a rule ordering can be imposed by semantics-preserving program
transformations; here we will show
confluence of the program modulo a suitable equivalence (which ensures that limiting the rule order
does not destroy the semantics of the program).

\begin{example}\label{ex:HMM-as-such}
The following diagram shows a very simple HMM with states \texttt{q0}, $\ldots$, \texttt{q3},
emission alphabet $\{\texttt{a},  \texttt{b}\}$ and probabilities indicated for transitions
and emissions.\footnote{An interesting HMM will, of course, have loops so it can produce arbitrary long
sequences. No explicit end states are needed.}

$$
\begin{array}{c} 
{\xymatrix@R=4mm@C=4mm{
&		& {\tt a} 	&  		& {\tt b} 	& \\
&		&	& *++[o][F]{\tt q1}
			   \ar@{=>}[ul]^{0.2}
			   \ar@{=>}[ur]_{0.8}
			   \ar[ddrr]^{1.0}
					&	& \\
& {\tt a}    		& 	&		&	& \\    
\ar@{->}[r]& *++[o][F]{\tt q0}	
\ar@{=>}[u]^{0.2\,}
\ar@{=>}[d]_{0.8\,}
\ar[ddrr]_{0.7}
\ar[uurr]^{0.3}
		& 	&		&	& *++[o][F]{\tt q3}\\
&{\tt b}     		& 	&		&	& \\
&		&	& *++[o][F]{\tt q2}
			   \ar@{=>}[dl]_{0.9}
			   \ar@{=>}[dr]^{0.1}	
			   \ar[uurr]_{1.0}
					&	& \\		
&		& {\tt a } 	&  		& {\tt b} 	& \\		
}}
\end{array}
$$
The different events of transitions and emissions are assumed to be independent.
For example, the sequence \texttt{a$\cdot$b} may be produced
via the path \texttt{q0}$\cdot$\texttt{q1}$\cdot$\texttt{q3} with probability
$0.2*0.3*0.8=0.048$ or 
\texttt{q0}$\cdot$\texttt{q2}$\cdot$\texttt{q3} with probability $0.2*0.7*0.1=0.014$.
For simplicity of the program that follows, it is assumed that an emission is produced when
a state is left (rather than entered).
\end{example}
A specific HMM is encoded as a set of \texttt{trans/3} and \texttt{emit/3} constraints that are not changed 
during program execution.
\begin{example}\label{ex:HMM-encoding}
The HMM of Example~\ref{ex:HMM-as-such} is encoded by the following constraints.
$$
\begin{tabular}{l}
  $\{\texttt{ trans(q0,q1,0.3)}, \texttt{ trans(q0,q2,0.7)},
  \texttt{ trans(q1,q3,1)}, \texttt{ trans(q2,q3,1)},$  \\
 \hbox to 0.6em{}$\texttt{ emit(q0,a,0.2)},  \texttt{ emit(q0,b,0.8)},$ \\ 
 \hbox to 0.6em{}$ \texttt{ emit(q1,a,0.2)},  \texttt{ emit(q1,b,0.8)},$\\  
 \hbox to 0.6em{}$ \texttt{ emit(q2,a,0.9)},  \texttt{ emit(q2,b,0.1)}  \}$
\end{tabular}
$$
\end{example}
The CHR program that implements the Viterbi algorithm is as follows.
{\begin{verbatim}
:- chr_constraint path/4, trans/3, emit/3.

expand @ trans(Q,Q1,PT), emit(Q,L,PE), path([L|Ls],Q,P,PathRev) ==>
   P1 is P*PT*PE  |  path(Ls,Q1,P1,[Q1|PathRev]).

prune @ path(Ls,Q,P1,_) \ path(Ls,Q,P2,_) <=> P1 >= P2 | true.
\end{verbatim}}\noindent
The meaning of a constraint \texttt{path($Ls$,$q$,$p$,$R$)}
is that $Ls$ is a remaining emission sequence to be processed, $q$  the current state of
the HMM, and $p$ the probability of a path $R$ found for the already processed prefix
of the emission sequence. To simplify the program, a path is represented in reverse order.
The decoding of a sequence $Ls$ starting from state $q_0$ is stated by the query
\begin{itemize}
  \item[] {\tt:- }$\textit{HMM}$\!\texttt{,} \texttt{path($Ls$,$q_0$,1,[$q_0$])}.
\end{itemize}
where \textit{HMM} is an encoding of a given
HMM in terms of ground \texttt{trans} and \texttt{emit} constraints;
for each pair of states $q_1,q_2$, 
\textit{HMM} contains at most one constraint of
the form \texttt{trans($q_1$,$q_2$,$\ldots$)}, and for each pair of state $q$ and emission letter $L$,
 \textit{HMM} contains at most one constraint of
the form \texttt{emit($q$,$L$,$\ldots$)}.

The first rule of the program, \texttt{expand},
expands the existing paths and \texttt{prune} removes paths for identical subproblems
(identified by the current HMM state and remaining emission sequence) with lower (or equal) probabilities.
The program is terminating for such queries as any new \texttt{path} constraint introduced by the \texttt{expand} rule
has a first argument shorter than that of its predecessor.
A final state will include one \texttt{path} constraint of optimal probability for each
prefix of the input string (unless the underlying state machine is not capable of generating that string).

The program is not confluent in the classical sense, as the \texttt{prune} rule may nondeterministically
remove one or the other of two alternative \texttt{path} constraints of identical probability for the same
sequence.
In the following we introduce invariant $I$ and equivalence $\approx$
and show the program confluent modulo equivalence.
For simplicity of the definitions and with no loss of generality, we assume a fixed
indexed encoding $\mathit{HMM}$ of a Hidden Markov Model
with initial state \texttt{q0} and fixed input emission  sequence $Ls_0$.

\begin{definition}
$I(\Sigma)$ if and only if
$\langle \textit{HMM} \cup \{(0\colon \texttt{path($Ls_0$,$q_0$,1,[$q_0$])})\}\rangle
\mapstostar
\Sigma$.
\end{definition}
However, in the proof of local confluence below, we will need a more direct
characterization of the possible derivation states
and the interrelations between their constraints.
To this end, we state the following proposition.

\begin{proposition}\label{prop:viterbi-invariant}
An $I$-state is of the form $\langle S\cup\mathit{HMM}, T\rangle$
where 
$S$ is a set of ground \texttt{path} constraints and $T$ a propagation history.

For any $(i\colon \texttt{path($[L|Ls]$,$q$,$P$,$qs$)}) \in S$ for which 
$\{ (i^t\colon\texttt{trans($q$,$q'$,$P^t$)}), (i^e\colon\texttt{emit($q$,$L$,$P^e$)})\} \in \mathit{HMM}$,
then one and only one of the following will be the case.
\begin{enumerate}
  \item \emph{Expansion has not taken place:}\\
          $({\texttt{expand}}@ i^t, i^e, i)\not\in T$
  \item  \emph{Expansion produced and still in the store:}\\
          $({\texttt{expand}}@ i^t, i^e, i)\in T \;\land\;
             \exists i' .\, (i'\colon \texttt{path($Ls$,$q'$,$P'$,$[q'|qs]$)}) \in S$
           where $P'$ is the value of $P$$*$$P^t$$*$$P^e$.
              
  \item  \emph{Expansion produced but pruned by stronger or equal alternative:}\\
          $({\texttt{expand}}@ i^t, i^e, i)\in T \;\land\;
             \not\exists i',P' .\, (i'\colon \texttt{path($Ls$,$q'$,$P'$,$[q'|qs]$)}) \in S$\\
           $\land\;
             \exists P',qs',i' .\, \bigl((i'\colon \texttt{path($Ls$,$q'$,$P'$,$[q'|qs']$)}) \in S
                       \land P' \geq P$$*$$P^t$$*$$P^e \land qs\neq qs'\bigr)$
\end{enumerate}
 \end{proposition}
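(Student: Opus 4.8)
The plan is to prove the proposition by induction on the length $n$ of a derivation witnessing $I(\Sigma)$, i.e.\ a derivation $\Sigma_0\mapstostar\Sigma$ from the fixed initial state $\Sigma_0=\langle \mathit{HMM}\cup\{(0\colon\texttt{path($Ls_0$,$q_0$,1,[$q_0$])})\},\emptyset\rangle$. Such a witnessing derivation exists by the definition of $I$ and is a finite sequence, so the induction is well founded, and since I am proving a property of the endpoint $\Sigma$, any witness suffices. Before inducting I would strengthen the statement by carrying along two auxiliary invariants that the bare classification does not mention but that its maintenance needs: (a) every \texttt{path} constraint is ground and its current state equals the head of its reverse-path argument; and (b) no two distinct \texttt{path} constraints in the store share the same reverse path — equivalently, a reverse path $R$ determines the whole constraint, since (a) fixes the current state as the head of $R$, the length of $R$ fixes the remaining sequence as a suffix of $Ls_0$, and the recorded states together with the ground $\mathit{HMM}$ fix the probability. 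Invariant (b) is what rules out the degenerate coincidences in the case analysis.

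For the base case $n=0$ the store is the single ground constraint $(0\colon\texttt{path($Ls_0$,$q_0$,1,[$q_0$])})$ together with $\mathit{HMM}$, and $T=\emptyset$; with no \texttt{expand} record present, every relevant \texttt{path} constraint falls under case~1, and (a),(b) hold trivially. For the inductive step I would write the witness as $\Sigma_0\mapstostar\Sigma''\ourmapsto\Sigma$, apply the induction hypothesis to $\Sigma''$, and case-split on the last step, which — as the query contains no built-ins — is either an application of \texttt{expand}, an application of \texttt{prune}, or the \textbf{Built-in} step consuming the \texttt{true} produced by a preceding \texttt{prune}. Groundness and (a),(b) are easy to re-establish: \texttt{expand} forms the new reverse path $[q'|qs]$ by prefixing (so it strictly extends, and cannot duplicate, an existing reverse path, the propagation history blocking any repeated expansion of the same triple), and its \texttt{is} guard instantiates the new probability to the ground value of $P{*}P^t{*}P^e$; \texttt{prune} only removes a \texttt{path} constraint.

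The heart of the argument is maintaining the three-way classification. On an \texttt{expand} step on constraint $i$: $i$ moves from case~1 to case~2 (its record enters $T$ and its unique continuation is added), the freshly added continuation itself starts in case~1, and every other constraint keeps its class — here I use invariant (b) to argue that the new continuation cannot accidentally be the missing continuation of some other case-3 constraint, since matching it would force two distinct constraints to share a reverse path. On a \texttt{prune} step removing the weaker constraint $k$ while keeping $\ell$ (with $P_\ell\ge P_k$ and a different reverse path), the only classes that can change are: a constraint whose case-2 continuation was exactly $k$, which now moves to case~3 with $\ell$ serving as the required stronger-or-equal alternative of different reverse path; and a constraint already in case~3 whose witnessing alternative was $k$, for which $\ell$ likewise becomes the new witness. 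Mutual exclusivity (``one and only one'') is then purely structural: case~1 is separated from cases~2,3 by whether the \texttt{expand} record lies in $T$, and cases~2 and~3 by whether the continuation is present in the store.

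I expect the main obstacle to be the exhaustiveness half of case~3 — showing that whenever the \texttt{expand} record is in $T$ yet the continuation has disappeared, a stronger-or-equal alternative necessarily survives. This is exactly what the inductive maintenance above secures: a continuation can leave the store only by being pruned, and the pruning constraint inherits the witness role, so the existence of a stronger-or-equal alternative is preserved step by step rather than argued globally. A secondary, bookkeeping-level nuisance is the transient \texttt{true} left in the store by \texttt{prune} before its \textbf{Built-in} cleanup: strictly, such an intermediate state has a store that is not purely \texttt{path} constraints, so I would either read the proposition up to immediate removal of this built-in or phrase the characterization to tolerate a single pending \texttt{true}, the \textbf{Built-in} case then simply deleting it and restoring the clean form.
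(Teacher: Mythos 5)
Your proof follows essentially the same route as the paper's: induction on the length of the witnessing derivation from the fixed initial state, with a case split on whether the last step is an application of \texttt{expand} or of \texttt{prune}, and an argument that each step preserves the three-way classification. The auxiliary invariants you carry along (groundness, current state equals head of the reverse path, uniqueness of reverse paths) and your handling of the transient \texttt{true} left by \texttt{prune} are details the paper's proof leaves implicit, but they make explicit exactly the facts the paper relies on (it appeals to uniqueness of \texttt{trans}/\texttt{emit} constraints in a remark preceding its proof), so this is a more careful rendering of the same argument rather than a different one.
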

Notice in case 3, that the \texttt{path} required to exists may either be the stronger (or equal) alternative
that via  \texttt{prune} rule lead to the removal of \texttt{path($Ls$,$q'$,$P'$,$[q'|qs]$)},
or it may be an even stronger (or equal) one, meaning that
several applications of \texttt{prune} have been involved.
%
The uniqueness of \texttt{emit} (\texttt{trans}) constraints in \textit{HMM} for a
fixed $q$ ensures that the constraints $(i'\colon \texttt{path($Ls$,$q'$,$P'$,$[q'|qs]$)})$ in case 2 and 3 are unique and uniquely related to the application record ${\texttt{expand}}@ i_t, i_e, i$.

\begin{proof}
We use induction over the length of the derivation leading to a given $I$-state.
\par\smallskip\noindent
\emph{Base case.} The state
 $\langle \textit{HMM} \cup \{(0\colon \texttt{path($Ls$,$q_0$,1,[$q_0$])})\}\rangle$
matches case 1 in the proposition.
\par\smallskip\noindent
\emph{Step.} Assume an $I$-state $\Sigma=\langle S\cup\mathit{HMM}, T\rangle)$ satisfying the proposition,
and let $\Sigma\ourmapsto\Sigma^*$, where\break
 $\Sigma^*=\langle S^*\cup\mathit{HMM}, T\rangle$.
Two kinds of derivation steps are possible, one for each program rule.

\texttt{expand}: Assume that the \texttt{path} constraint $i{:}\pi\in S$ of $\Sigma$ is
involved in an application of the \texttt{expand} rule. The only difference between
$\Sigma$ and $\Sigma^*$ is that the latter includes
a new \texttt{path} constraint $i^*{:}\pi^*\in S^*$ and a new application record
$({\texttt{expand}}@ i^t, i^e, i^*)\in T^*$.
In $\Sigma^*$, $i{:}\pi$ satisfies condition 2, and $i^*{:}\pi^*$ condition 1;
any other \texttt{expand} constraint in $\Sigma^*$ satisfies the same of 1, 2, 3
as in $\Sigma$.

\texttt{prune}: Assume that the rule applies in $\Sigma$ by the application instance
$i_1{:}\pi_1\texttt{/}i_2{:}\pi_2\mathrel\texttt{<=>}P_1\texttt{>=} P_2\texttt{|}\texttt{true}$.
Thus $S^*=S\setminus\{i_2{:}\pi_2\}$, $T^*=T$.
It holds that, when $i_1{:}\pi_1$ satisfies condition $k$ in $\Sigma$, then it also satisfies
condition $k$ in $\Sigma^*$ for $k=1,2,3$.

The only way that the removal of $i_2{:}\pi_2$ may affect the proposition
is when there is another $(i_2^0{:}\pi_2^0)\in S$ satisfying condition  2 or 3
with  $i_2$ in the role of the existentially quantified index $i$ in either case.
For condition 2, $(i_2^0{:}\pi_2^0)$  satisfies condition 2 or 3 in $\Sigma^*$ with $i'=i_1$;
for condition 3, $(i_2^0{:}\pi_2^0)$ satisfies condition 3 in $\Sigma^*$ with $i'=i_1$.
\end{proof}
Our equivalence relation specifies 
the intuition that
two solutions for the same subproblem are 
equally good 
when they have the same probability.
We recall that a state is defined as an equivalence class
over state representations sharing the same pattern of variable recurrence. 
\begin{definition}
The  $\approx$ is the smallest equivalence relation on $I$-states such that
$\langle S\cup \mathit{HMM}, T \rangle \approx $\break$\langle S'\cup \mathit{HMM}, T\rangle$
if and only if
\begin{itemize}
  \item For any $i\colon\! \mbox{\tt path($Ls$,$q$,$P$,$qs$)}\; \in S$, there is
    an $i\colon\! \mbox{\tt path($Ls$,$q$,$P$,$qs'$)}\; \in S$, and vice versa.
\end{itemize}
\end{definition}
\begin{theorem}
The Viterbi program with invariant $I$ is confluent modulo $\approx$. 
\end{theorem}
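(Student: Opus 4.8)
The plan is to invoke Theorem~\ref{thm:termination+abstract-joinability=confluence}: since the program is terminating for the queries admitted by $I$ (each \texttt{expand} step creates a \texttt{path} constraint whose first argument is strictly shorter, and \texttt{prune} removes a constraint), it suffices to exhibit, for every most general critical $I$-pre-corner, an abstract critical $I$-corner (Definition~\ref{def:lifting-pre-corners}) that is inconsistent, joinable modulo $\approx$, or split-joinable modulo $\approx$. First I would observe that the invariant confines $I$-states to \texttt{path}, \texttt{trans} and \texttt{emit} constraints, so no built-in predicate is a state built-in predicate; consequently the $\alpha_2$-, $\alpha_3$- and $\beta_2$-pre-corners of Definition~\ref{def:pre-corners} are vacuous, and only $\alpha_1$- and $\beta_1$-pre-corners remain to be examined.

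For the $\alpha_1$ cases I would enumerate the rule overlaps, the only overlapping predicate being \texttt{path}. The \texttt{expand}/\texttt{expand} overlap produces no critical pre-corner: \texttt{expand} is a propagation rule, so its removed part is empty and the required condition $B_1\sigma\cap H'_2\sigma\neq\emptyset$ or $B_2\sigma\cap H'_1\sigma\neq\emptyset$ fails (two \texttt{expand} steps commute, with the propagation history preventing either from re-firing). The genuinely critical overlaps are therefore \texttt{prune}/\texttt{prune} and \texttt{expand}/\texttt{prune}. For \texttt{prune}/\texttt{prune} I would treat the ancestor state containing \texttt{path} constraints sharing the same $(Ls,q)$; pruning in either order leaves a \texttt{path} of maximal probability, and when two probabilities are equal the two admissible outcomes differ only in the reverse-path argument $qs$ and are thus identified by $\approx$, so closing each wing with one further \texttt{prune} step gives joinability modulo $\approx$.

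The main obstacle I expect is the \texttt{expand}/\texttt{prune} corner, where \texttt{prune} removes exactly the \texttt{path} that \texttt{expand} is expanding. Here the ancestor state contains $\texttt{path}([L|Ls],q,P_1,qs_1)$ and $\texttt{path}([L|Ls],q,P,qs_2)$ with $P_1\ge P$, together with the matching \texttt{trans} and \texttt{emit}; one wing expands the $P$-labelled \texttt{path} while the other prunes it away. To rejoin I would prune the expanded wing back to the surviving $\texttt{path}([L|Ls],q,P_1,qs_1)$ and then expand the latter, using the arithmetic fact that $P_1\ge P$ forces $P_1\!\cdot\!P^t\!\cdot\!P^e \ge P\!\cdot\!P^t\!\cdot\!P^e$, so that \texttt{prune} eliminates the weaker expanded \texttt{path}; the two resulting \texttt{path} constraints then agree on $(Ls,q,\mathit{prob})$ and differ only in $qs$, hence are $\approx$-equivalent. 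The delicate point is the propagation history: whether \texttt{expand} has already fired on the surviving \texttt{path}, and the fact that deleting a \texttt{path} drops its \texttt{expand} records as non-relevant (Definition~\ref{def:derivations}). I would control this bookkeeping with Proposition~\ref{prop:viterbi-invariant}, performing a case split (cases~1--3 of that proposition, i.e.\ a splitting in the sense of Definition~\ref{def:splitting}) on the expansion status of the surviving \texttt{path}; each branch closes with at most one \texttt{expand} and one \texttt{prune} step.

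Finally, since $\approx \neq {=}$, I would discharge the two $\beta_1$-pre-corners, one per rule. Because $\approx$ relates states with identical propagation history and \texttt{path} constraints that agree on $(Ls,q,P)$ up to the $qs$-argument, the guard of \texttt{expand} ($P_1$ \texttt{is} $P\!\cdot\!P^t\!\cdot\!P^e$) and of \texttt{prune} ($P_1\!\ge\!P_2$), together with every application record, evaluate identically on two $\approx$-related states; applying the same rule on both wings yields results that again differ only in $qs$-arguments and are therefore $\approx$-equivalent, giving immediate joinability. With all abstract critical $I$-corners shown inconsistent, joinable, or split-joinable modulo $\approx$, Theorem~\ref{thm:termination+abstract-joinability=confluence} yields confluence modulo $\approx$.
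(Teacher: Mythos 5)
Your proposal is correct and follows essentially the same route as the paper's proof: termination plus Theorem~\ref{thm:termination+abstract-joinability=confluence}, elimination of $\alpha_2$-, $\alpha_3$- and $\beta_2$-corners because no built-in occurs in an $I$-state, direct joinability of the \texttt{prune}/\texttt{prune} and the two $\beta_1$-corners, and a three-way split of the \texttt{prune}/\texttt{expand} corner (where the expanded \texttt{path} is the one removed) according to the expansion status of the surviving \texttt{path} as classified by Proposition~\ref{prop:viterbi-invariant}. The only cosmetic difference is that the paper additionally displays the commuting \texttt{prune}/\texttt{expand} case in which the expanded constraint survives, which you correctly dismiss via the overlap condition of Definition~\ref{def:pre-corners}.
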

%
\par\medskip\noindent\textit{Proof.}\hskip 1ex
According to Theorem~\ref{thm:termination+abstract-joinability=confluence},
we can prove confluence of a CHR program by listing the 
 set of
critical abstract corners and showing each of them joinable or split joinable.

Firstly, we observe that no built-in predicate can appear in an $I$-state (they are only used in guards)
and that the two built predicates \texttt{>=} and \texttt{is} are $I$-complete.
Thus, we have no $\alpha_2$- and $\alpha_3$-corners to consider, leaving only $\alpha_1$-  and $\beta$-corners.
For a better overview, we indicate the overall shapes of corners in the chosen canonical set,
described in full detail below.
There are three $\alpha_1$-corners, one for each possible way that two rules may 
 produce a critical overlap:
\begin{itemize}
  \item [] $\Lambda_1\colon\hbox to 1.9em{} \circ \stackrel{\small\tt prune}\mapsfrom \circ \stackrel{\small\tt prune}\ourmapsto\circ$
  \item[] $\Lambda_2, \Lambda_3 \colon~ \circ \stackrel{\small\tt prune}\mapsfrom \circ \stackrel{\small\tt expand}\ourmapsto\circ$ ~differing in whether or not the constraint being expanded is removed;\\
\phantom{$\Lambda_2, \Lambda_3 \colon\hbox to 0pt{}$}our analysis will show $\Lambda_3$ (expanded constraint removed) is not joinable,
but can be split into\\\phantom{$\Lambda_2, \Lambda_3 \colon\hbox to 0pt{}$}three joinable subcases
$\Lambda_3^{(1)}, \Lambda_3^{(2)}, \Lambda_3^{(3)}$, one for each option in Proposition~\ref{prop:viterbi-invariant}.
\end{itemize}
Two $\beta$-corners are found, one for each clause of the program.
\begin{itemize}
  \item []$\Lambda_4\colon\hbox to 1.9em{} \circ \approx \circ \stackrel{\small\tt prune}\ourmapsto\circ$
  \item []$\Lambda_5\colon\hbox to 1.9em{} \circ \approx \circ \stackrel{\small\tt expand}\ourmapsto\circ$
\end{itemize}
To save space, application steps are labelled  by application records
(rather that application instances) and we leave out also the $id$ function, e.g., writing
\texttt{prune@}$\pi_1\pi_2$ instead of \texttt{prune@}$\id((\pi_1,\pi_2))$.

We abbreviate the writing of the invariant in an abstract state,
writing $(\Sigma\where I\land\cdots)$ instead of $(\Sigma\where I(\Sigma)\land\cdots)$,
where $\Sigma$ is a (perhaps complex) abstract state expression.
We use the following conventions
in expressions that represent propagation histories.
\begin{itemize}
  \item A condition of the form $ra\not\in T$, where $ra$ is a rule application and $T$ a propagation history,
  may be removed in an abstract state expressions when it is clear from context that it is always satisfied.
  This is relevant when $ra=(\langle\textit{rule-id}\rangle@\cdots i\cdots)$ and $T$ is part of a state guaranteed not
  to contain $i$.
   \item When $i$ represents a constraint index and $T$  a propagation history,
   the notation $T\setminus i$ is a shorthand for
   $T\setminus\{ra\mid \mbox{$ra$ is an application record of the form $ra=(\langle\textit{rule-id}\rangle@\cdots i\cdots)$}\}$.
   \end{itemize}
To simplify notation for the description of these corners,
we introduce the following abbreviations; the recurrences of variables are significant.
\begin{eqnarray*}
\tau & = & (i^t\colon\mbox{\tt trans($q$,$q'$,$P^t$)}) \\
\eta & = & (i^e\colon \mbox{\tt emit($q$,$L$,$P^e$)}) \\
\pi_j & = & (i_j\colon \mbox{\tt path([$L$|$LS$],$q$,$P_j$,$qs_j$)})\quad\,\,\mbox{for  $j=1,\ldots,4$} \\
 \pi'_j  & = & (i'_j\colon \mbox{\tt path($LS$,$q'$,$P'_j$,[$q'$|$qs_j$])})
     \quad\mbox{$P'_j$ is the value of $P_j$$*$$P^t$$*$$P^e$ for    $j=1,\ldots,4$}
\end{eqnarray*}
As it appears, $\pi_i'$ is can be derived from  $\pi_i$,
$\tau$ and $\eta$  using the \texttt{expand} rule.
The \texttt{path} constraints $\pi_1,\ldots\pi_4$ all concern the same sub-problem,
identified by the identical first and second argument, {\tt [$L$|$LS$]} and $q$;
and analogously for the $\pi'_i$ constraints.

We consider now the canonical abstract corners one by one and show them (split) joinable.

\smallskip
\noindent$\mathbf{\Lambda}_1$: \emph{Overlap of}  \texttt{prune} \emph{with itself}
   \begin{displaymath} \scriptsize
        \begin{array}{c}
     \xymatrix{ \langle S \uplus \{\pi_1,\!  \pi_2\}, T \rangle \where I\land P_1\texttt{>=} P_2 
              \land P_2\texttt{>=} P_1 
 \ar@{|->}[dr]^-{\mbox{\scriptsize\tt prune@$\pi_2\pi_1$}}     
 \ar@{|->}[d]^{\mbox{\scriptsize\tt prune@$\pi_1\pi_2$}}    & \\
\langle S \uplus \{\pi_1\}, T \rangle \where I &
 \langle S \uplus \{ \pi_2\}, T \rangle \where I }
                \\ \\ 
     \end{array}
\end{displaymath}
This extends immediately to a joinability diagram because the two abstract wing states are equivalent.

\smallskip
\noindent$\mathbf{\Lambda}_2$: \emph{Overlap of} \texttt{prune} \emph{and} \texttt{expand}\emph{; expanded constraint not removed}
   \begin{displaymath} \scriptsize
        \begin{array}{c}
     \xymatrix{ \langle S \uplus \{\pi_1, \! \pi_2, \! \tau,\!  \eta\}, T \rangle \where I\land  P_1\texttt{>=} P_2
           \land \texttt{expand@}\tau\eta\pi_1 \not\in T
 \ar@{|->}[dr]^-{\mbox{\scriptsize\tt expand@$\tau\eta\pi_1$}}     
 \ar@{|->}[d]^{\mbox{\scriptsize\tt prune@$\pi_1\pi_2$}}    & \\
\langle S \uplus \{\pi_1,\!  \tau,\!  \eta\}, T\backslash\pi_2 \rangle \where I \land \texttt{expand@}\tau\eta\pi_1 \not\in T &
 \langle S \uplus \{\pi_1,\!  \pi_2,\!  \pi'_1,\! \tau,\!  \eta\}, T\uplus \{\texttt{expand@}\tau\eta\pi_1\} \rangle \where I  }
                \\ \\ 
     \end{array}
   \end{displaymath}
In this case, the two rules commute and the corner joins in one and the same abstract state.
     \begin{displaymath} \scriptsize
        \begin{array}{c}
     \xymatrix{  \langle S \uplus \{\pi_1,\!  \pi_2,\!  \tau, \! \eta\}, T \rangle \where I\land  P_1\texttt{>=} P_2
           \land \texttt{expand@}\tau\eta\pi_1 \not\in T
 \ar@{|->}[dr]^-{\mbox{\scriptsize\tt expand@$\tau\eta\pi_1$}}     
 \ar@{|->}[d]^{\mbox{\scriptsize\tt prune@$\pi_1\pi_2$}}    & \\
\langle S \uplus \{\pi_1,\!  \tau,\!  \eta\}, T\backslash\pi_2  \rangle \where I \land \texttt{expand@}\tau\eta\pi_1 \not\in T 
\ar@{|->}[d]^-{\mbox{\scriptsize\tt expand@$\tau\eta\pi_1$}} 
&
\langle S \uplus \{\pi_1,\!  \pi_2,\!  \pi'_1,\tau, \eta\}, T\uplus \{\texttt{expand@}\tau\eta\pi_1\} \rangle \where I  
 \ar@{|->}[dl]^{\mbox{\scriptsize\tt prune@$\pi_1\pi_2$}}
 \\
 \langle S \uplus \{\pi_1,\!  \pi'_1,\!  \tau,\!  \eta\}, T\backslash\pi_2 \uplus \{\texttt{expand@}\tau\eta\pi_1\} \rangle \where I 
 &
 }
     \end{array}
   \end{displaymath}

\noindent$\mathbf{\Lambda}_3$: \emph{Overlap of} \texttt{prune} \emph{and} \texttt{expand}\emph{; expanded constraint removed}
    \begin{displaymath} \scriptsize
        \begin{array}{c}
     \xymatrix{ \langle S \uplus \{\pi_1,\! \pi_2,\!  \tau,\!  \eta\}, T \rangle \where I\land  P_1\texttt{>=} P_2
           \land \texttt{expand@}\tau\eta\pi_2 \not\in T
 \ar@{|->}[dr]^-{\mbox{\scriptsize\tt expand@$\tau\eta\pi_2$}}     
 \ar@{|->}[d]^{\mbox{\scriptsize\tt prune@$\pi_1\pi_2$}}    & \\
\langle S \uplus \{\pi_1,\!  \tau,\!  \eta\}, T\backslash\pi_2  \rangle \where I  &
\langle S \uplus \{\pi_1,\!  \pi_2,\!  \pi'_2,\! \tau,\!  \eta\}, T\uplus \{\texttt{expand@}\tau\eta\pi_2\} \rangle \where I  }
                \\ \\ 
     \end{array}
   \end{displaymath}
This abstract corner is not joinable as different derivations are possible depending on which of the three cases in Proposition~\ref{prop:viterbi-invariant} that holds for the path constraint $\pi_1$. This suggests a splitting of the corner into three new corners, that we can show joinable as follows. Hence, the corner is not joinable but split joinable. For reasons of space, we show only the related abstract joinability diagrams; the corners can be identified at the top.

\medskip
\noindent{$\mathbf{\Lambda}_3^{(1)}$: \emph{Split of} $\mathbf{\Lambda}_3$\emph{;} $\pi_1$ \emph{applicable}
   \begin{displaymath} \scriptsize
        \begin{array}{c}
     \xymatrix{
          {\begin{array}{l}
               \langle S \uplus \{\pi_1,\! \pi_2,\! \tau,\! \eta\}, T \rangle \\
                \where I\land  P_1\texttt{>=} P_2
               \land \texttt{expand@}\tau\eta\pi_1 \not\in T
               \land \texttt{expand@}\tau\eta\pi_2 \not\in T
            \end{array}}
 \ar@{|->}[dr]^-{\mbox{\scriptsize\tt expand@$\tau$,$\eta$,$\pi_2$}}     
 \ar@{|->}[d]^{\mbox{\scriptsize\tt prune@$\pi_1\pi_2$}}    & \\
{\begin{array}{l}\langle S \uplus \{\pi_1,\! \tau,\! \eta\}, T\backslash\pi_2  \rangle\\
 \where I\land  P_1\texttt{>=} P_2
               \land \texttt{expand@}\tau\eta\pi_1 \not\in T\end{array}}
\ar@{|->}[dd]^-{\mbox{\scriptsize\tt expand@$\tau\eta\pi_1$}} 
&
{\begin{array}{l}
\langle S \uplus \{\pi_1,\! \pi_2, \!\pi'_2,\!\tau, \!\eta\}, T\uplus \{\texttt{expand@}\tau,\eta,\pi_2\} \rangle   \\
\where I\land  P_1\texttt{>=} P_2
               \land \texttt{expand@}\tau\eta\pi_1 \not\in T 
\end{array}}
 \ar@{|->}[d]^-{\mbox{\scriptsize\tt expand@$\tau\eta\pi_1$}}
 \\
 %
&
{\begin{array}{l}
\langle S\uplus \{\pi_1,\! \pi'_1,\! \pi_2,\! \pi'_2,\!\tau,\! \eta\}, T\uplus \{\texttt{expand@}\tau\eta\pi_2,  \texttt{expand@}\tau\eta\pi_1 \} \rangle\\
\where I\land  P_1\texttt{>=} P_2
              \end{array}}
 \ar@{|->}[d]^{\mbox{\scriptsize\tt prune@$\pi_1$,$\pi_2$}} 
\\
{\begin{array}{l}\langle S \uplus \{\pi_1,\!\pi'_1, \!\tau, \!\eta\}, T\backslash\pi_2 \uplus \{\texttt{expand@}\tau\eta\pi_1\} \rangle\\
\where I\land  P_1\texttt{>=} P_2\end{array}}
& 
{\begin{array}{l}\langle S\uplus \{\pi_1, \!\pi'_1,\! \pi'_2,\!\tau,\! \eta\}, T\backslash\pi_2 \uplus \{\texttt{expand@}\tau,\eta,\pi_1 \} \rangle\\
 \where  I\land  P_1\texttt{>=} P_2 
\ar@{|->}[l]_{\mbox{\scriptsize\tt prune@$\pi'_1$,$\pi'_2$}} \end{array}}
 }
     \end{array}
   \end{displaymath}

\noindent{$\mathbf{\Lambda}_3^{(2)}$: \emph{Split of} $\mathbf{\Lambda}_3$\emph{;} $\pi_1$ \emph{already expanded into} $\pi_1'$\emph{;}
$\pi_1'$ \emph{still in state}

 \begin{displaymath} \scriptsize
        \begin{array}{c}
     \xymatrix{  {\begin{array}{l}\langle S\uplus \{\pi_1,\! \pi'_1,\! \pi_2,\! \tau, \!\eta\}, T\uplus \{\texttt{expand@}\tau\eta\pi_1 \} \rangle\\
     \where I \land  P_1\texttt{>=} P_2
     \land \texttt{expand@}\tau\eta\pi_2 \not\in T
     \end{array}}
 \ar@{|->}[dr]^-{\mbox{\scriptsize\tt expand@$\tau\eta\pi_2$}}     
 \ar@{|->}[dd]^{\mbox{\scriptsize\tt prune@$\pi_1\pi_2$}}    & \\
 &
 {\begin{array}{l}\langle S\uplus \{\pi_1,\! \pi'_1,\! \pi_2,\! \pi'_2,\!\tau,\! \eta\}, T\uplus \{\texttt{expand@}\tau\eta\pi_1,   \texttt{expand@}\tau\eta\pi_2\} \rangle\\
  \where I \land  P_1\texttt{>=} P_2  \end{array}}
 \ar@{|->}[d]^{\mbox{\scriptsize\tt prune@$\pi_1\pi_2$}} 
 \\
{\begin{array}{l} \langle S\uplus \{\pi_1,\! \pi'_1,\! \tau,\! \eta\}, T\backslash\pi_2\uplus \{\texttt{expand@}\tau\eta\pi_1 \} \rangle\\ \where I \land  P_1\texttt{>=} P_2  \end{array}}
 &
{\begin{array}{l} \langle S\uplus \{\pi_1,\! \pi'_1,\! \pi'_2,\!\tau,\! \eta\}, T\backslash\pi_2\uplus \{\texttt{expand@}\tau\eta\pi_1 \} \rangle\\
 \where I \land  P_1\texttt{>=} P_2 \end{array}}
 \ar@{|->}[l]_{\mbox{\scriptsize\tt prune@$\pi'_1\pi'_2$}} 
 }
     \end{array}
   \end{displaymath}   
Notice for the last {\tt prune@$\pi'_1\pi'_2$} step, that the application history
has no mentioning of $\pi'_2$, as the only event, since it was produced, 
is the step labelled   {\tt prune@$\pi_1\pi_2$}.

\medskip
\noindent{$\mathbf{\Lambda}_3^{(3)}$: \emph{Split of} $\mathbf{\Lambda}_3$\emph{;} $\pi_1$ \emph{already expanded into} $\pi_1'$\emph{;}
$\pi_1'$ \emph{already removed}

\smallskip\noindent
As given by Proposition~\ref{prop:viterbi-invariant}, option 3, this implies the presence in the common ancestor
state of a \texttt{path} constraints $\pi_3$, with sufficiently high probability to 
have pruned $\pi_1'$ as well as a possible $\pi_2'$ (expanded from  $\pi_2'$ using
$\tau$ and $\eta$). We can thus write this abstract corner and
expand it to an abstract joinability diagram as follows.
   \begin{displaymath} \scriptsize
        \begin{array}{c}
     \xymatrix{ 
     {\begin{array}{l}\langle S \uplus \{\pi_1,\! \pi_2,\! \pi'_3,\! \tau,\! \eta\}, T\uplus \{\texttt{expand@}\tau\eta\pi_1 \} \rangle\\
      \where I \land  P_1\texttt{>=} P_2 \land P'_3 \texttt{>=} P_1'\texttt{>=} P_2'\\
     \hbox to 3em{}\land \texttt{expand@}\tau\eta\pi_2 \not\in T
     \land \pi_1'\not\in S
      \end{array}}
 \ar@{|->}[dr]^-{\mbox{\scriptsize\tt expand@$\tau\eta\pi_2$}}     
 \ar@{|->}[dd]^{\mbox{\scriptsize\tt prune@$\pi_1\pi_2$}}    & \\
 &
{\begin{array}{l} \langle S \uplus \{\pi_1,\! \pi_2,\! \pi'_3, \!\pi'_2,\!\tau,\! \eta\}, T\uplus \{\texttt{expand@}\tau\eta\pi_1,  \texttt{expand@}\tau\eta\pi_2\} \rangle\\
      \where I \land  P_1\texttt{>=} P_2 \land P'_3 \texttt{>=} P_1'\texttt{>=} P_2'\\
     \hbox to 3em{}\land \pi_1'\not\in S
 \end{array}}
 \ar@{|->}[d]^{\mbox{\scriptsize\tt prune@$\pi_1\pi_2$}} 
 \\
 {\begin{array}{l}\langle S\uplus \{\pi_1,\! \pi'_3,\! \tau,\! \eta\}, T\backslash\pi_2\uplus \{\texttt{expand@}\tau\eta\pi_1 \} \rangle  \\
 \where
        I \land  P_1\texttt{>=} P_2 \land P'_3 \texttt{>=} P_1'\texttt{>=} P_2'\\
     \hbox to 3em{}\land \pi_1'\not\in S
   \end{array}}
 &
{\begin{array}{l} \langle S\uplus \{\pi_1,\! \pi'_3,\! \pi'_2,\!\tau,\! \eta\}, T\setminus\pi_2\uplus \{\texttt{expand@}\tau\eta\pi_1 \} \rangle\\
      \where I \land  P_1\texttt{>=} P_2 \land P'_3 \texttt{>=} P_1'\texttt{>=} P_2'\\
     \hbox to 3em{}\land \pi_1'\not\in S
\end{array}}
 \ar@{|->}[l]_{\mbox{\scriptsize\tt prune@$\pi'_3\pi'_2\;\;$}} 
 }
     \end{array}
   \end{displaymath}  
For the last {\tt prune@$\pi'_3\pi'_2$} step, the application history
has no mentioning of $\pi'_2$, as the only event since it was produced 
is the step labelled   {\tt prune@$\pi_1\pi_2$}.

This finishes the proof that $\mathbf{\Lambda}_3$ is split joinable.
Now we turn to the canonical abstract $\beta$-corners of which there are two,
$\mathbf{\Lambda}_4$ and  $\mathbf{\Lambda}_5$,
one for each program rule.

\medskip
\noindent{$\mathbf{\Lambda}_4$: \emph{Equivalence and the} \texttt{expand} \emph{rule}

\smallskip\noindent
For the two equivalent states on the left side, it holds that
{\tt expand@$\tau\eta\pi_2$} = {\tt expand@$\tau\eta\pi_1$}
and that $S_i=\mathit{HMM}\uplus S_i'$, $i=1,2$ where
$S'_1$ and $S'_2$ consist of pairwise similar \texttt{path} constraints with identical index and that may differ
only in their last arguments, and similarly for $\pi_1$ and $\pi_2$.
\begin{displaymath} \scriptsize \begin{array}{c}
\xymatrix{ 
\langle S_1 \uplus \{\pi_1,\! \tau,\! \eta\}, T\rangle \where I \land \texttt{expand@}\tau\eta\pi_1\not\in T
\ar@2{~}[d] \ar@{|->}[dr]^-{\mbox{\scriptsize\tt expand@$\tau\eta\pi_1$}}
\\
\langle S_2 \uplus \{\pi_2,\! \tau,\! \eta\}, T \rangle \where I \land \texttt{expand@}\tau\eta\pi_2\not\in T 
\ar@{|->}[dr]_-{\mbox{\scriptsize\tt expand@$\tau\eta\pi_2$}} &
\langle S \uplus \{\pi_1,\! \pi'_1,\!\tau,\! \eta\}, T\uplus \{\texttt{expand@}\tau\eta\pi_1\} \rangle \where I
\ar@2{~}[d] 
\\
& \langle S_2 \uplus \{\pi_2,\! \pi'_2,\!\tau,\! \eta\}, T\uplus \{\texttt{expand@}\tau\eta\pi_2\} \rangle \where I
}
\end{array}\end{displaymath}  
To see that the lower equivalence holds, we notice that the indices of
 $\pi_1'$ and $\pi_2'$ can be chosen identical (and different from any other index used),
 and they may differ only in their last arguments.

\medskip
\noindent{$\mathbf{\Lambda}_5$: \emph{Equivalence and the} \texttt{prune} \emph{rule}

\smallskip\noindent
For the two equivalent states on the left side, it holds that
{\tt expand@$\tau\eta\pi_{i+2}$} = {\tt expand@$\tau\eta\pi_i$}, $i=1,2$,
and that $\pi_{i+2}$ and $\pi_{i}$ , $i=1,2$, may differ
only in their last arguments.
Furthermore, $S_i=\mathit{HMM}\uplus S_i'$, $i=1,2$ where
$S'_1$ and $S'_2$  consist of pairwise similar \texttt{path} constraints with identical index and that may differ
only in their last arguments.
%
\begin{displaymath} \scriptsize \begin{array}{c}
\xymatrix{ 
\langle S \uplus \{\pi_1,\! \pi_2,\!\tau,\! \eta\}, T \rangle \where 
I\land P_1{\tt>=}P_2\land  \mbox{\scriptsize\tt prune@$\pi_1\pi_2$}\not\in T
\ar@2{~}[d] \ar@{|->}[dr]^-{\mbox{\scriptsize\tt prune@$\pi_1\pi_2$}}
\\
\langle S\uplus \{\pi_3,\!\pi_4,\! \tau,\! \eta\}, T \rangle \where 
     I\land P_3{\tt>=}P_4\land  \mbox{\scriptsize\tt prune@$\pi_3\pi_4$}\not\in T
\ar@{|->}[dr]_-{\mbox{\scriptsize\tt prune@$\pi_3\pi_4$}} &
 \langle S \uplus \{\pi_1,\! \tau,\!\eta\}, T\backslash \pi_2 \rangle \where I
\ar@2{~}[d] 
\\
&  \langle S \uplus \{\pi_3,\! \tau,\!\eta\}, T\backslash \pi_4 \rangle \where I
}
\end{array}\end{displaymath}  
Thus the set of abstract, critical corners have been shown joinable or split joinable;
by termination and Theorem~\ref{thm:termination+abstract-joinability=confluence}, the program is confluent modulo $\approx$.
\hskip 1em$\Box$
\subsection{Countably Infinite Splitting}\label{sec:infinite-split}
Here we show a program whose proof of confluence needs an infinite splitting of 
an abstract critical corner.
The following CHR program is intended for queries of the form
\texttt{start,} \texttt{c(s$^n$(0))},
where \texttt{s$^n$(0)} denotes the $n$th successor of {\tt 0} for any $n\ge 0$, e.g., \texttt{s$^2$(0)} = \texttt{s(s(0))}.

{\small\begin{verbatim}
easy    @   start          <=> easy.
hard    @   start          <=> hard.
done    @   c(X), easy     <=> c(0), end.
step    @   hard \ c(s(X)) <=> c(X).
finally @   c(0) \ hard    <=> end.
\end{verbatim}}\noindent
The first step in such a derivation will introduce either an \texttt{easy}
or a \texttt{hard} constraint.
In case of \texttt{easy}, the derivation terminates after one additional step in the state  $\{\texttt{c(0)}, \texttt{end}\}$.
In case of \texttt{hard}, the derivation terminates after $n+1$ steps in the same state,
so the program is confluent (modulo trivial ${\approx}={=}$) under the invariant implied by the intended initial states.

We can specify the invariant as follows, using the unary meta-level predicate $\mathit{succ}(N)$, satisfied if and only if $N$
of the form \texttt{s$^n$(0)} for an arbitrary natural number $\ge0$.
$I(\langle S,T \rangle)$ holds if and only if
\begin{itemize}
 \item $S = \{R, \texttt{c($N$)}\}$ where $R \in \{{\tt start}, {\tt hard}, {\tt easy},{\tt end} \}$ and  $\mathit{succ}(N)$,
 \item $T=\emptyset$.
\end{itemize}
There exists only one consistent abstract critical $I$-corner $\mathbf{\Lambda}$, and it is
based on the overlap of the rules \texttt{easy} and \texttt{hard}.
Notice that the invariant  has been unfolded, which is a semantics-preserving transformation. 
$$ \mathbf{\Lambda} = 
\begin{array}{c}
\xymatrix{ 
  \langle \{{\tt start},{\tt c}(n)\} , \emptyset \rangle \where \mathit{succ}(n)
 \ar@{|->}[dr]^-{\qquad \mbox{\scriptsize\tt hard@start}}     
  \ar@{|->}[d]_{\mbox{\scriptsize\tt easy@start}}    
 & \\
\langle \{{\tt easy},{\tt c}(n)\} , \emptyset \rangle \where \mathit{succ}(n) &
\langle \{{\tt hard},{\tt c}(n)\} , \emptyset \rangle \where \mathit{succ}(n) \\ 
}
\end{array}
$$
The abstract critical corner $\mathbf{\Lambda}$ is not joinable
as no single joinability diagram applies for all concrete corners covered by  $\mathbf{\Lambda}$.
Therefore we split $\mathbf{\Lambda}$ 
using the infinite disjunction $\mathit{succ}(n) \Leftrightarrow n=\texttt{0}\lor n=\texttt{s(0)}\lor\cdots$.

This leads to a countably infinite set of abstract corners $\mathbf{\Lambda}_0$,  $\mathbf{\Lambda}_1$, \ldots, where
$$ \mathbf{\Lambda}_i = 
\begin{array}{c}
\xymatrix{ 
  \langle \{{\tt start},\texttt{c(s$^n$(0)}\} , \emptyset \rangle
 \ar@{|->}[dr]^-{\qquad\mbox{\scriptsize\tt hard@start}}     
  \ar@{|->}[d]_{\mbox{\scriptsize\tt easy@start}}    
 & \\
\langle \{{\tt easy},\texttt{c(s$^n$(0)}\} , \emptyset \rangle&
\langle \{{\tt hard},\texttt{c(s$^n$(0)}\} , \emptyset \rangle  \\ 
}
\end{array}
$$
Each such abstract corner can be extended into a joinability diagram $\boldsymbol{\Delta}_i$, each having $i+4$ abstract states
and the same number of abstract derivation steps.
For a better overview, we indicate only the shapes of these diagrams;
the actual states are uniquely determined by the rules applied.
$$
\begin{array}{llllll}
 \qquad\;\;\boldsymbol{\Delta}_0= &   \qquad\;\;\boldsymbol{\Delta}_1= &  & \qquad\;\;\boldsymbol{\Delta}_n= &  \\
 \xymatrix{
 &
 \circ 
 \ar@{|->}[dr]^-{\mbox{\scriptsize\tt hard}} 
 \ar@{|->}[dl]_{\mbox{\scriptsize\tt easy}}
 &\\
 \circ \ar@{|->}[dr]^-{\mbox{\scriptsize\tt done}}   & &  \circ \ar@{|->}[dl]^-{\mbox{\scriptsize\tt finally}} \\
  & \circ & \\
 } 
 & 
   \xymatrix{
 &
 \circ 
 \ar@{|->}[dr]^-{\mbox{\scriptsize\tt hard}} 
 \ar@{|->}[dl]_{\mbox{\scriptsize\tt easy}}
 &\\
 \circ \ar@{|->}[ddr]^-{\mbox{\scriptsize\tt done}}  & &   \circ \ar@{|->}[d]^-{\mbox{\scriptsize\tt step}}   \\
  & &   \circ \ar@{|->}[dl]^-{\mbox{\scriptsize\tt finally}}\\
  &\circ& \\
 } 
  & \xymatrix{ \\ \ldots\\ \\ \\}
  &
   \xymatrix{
 &
 \circ 
 \ar@{|->}[dr]^-{\mbox{\scriptsize\tt hard}} 
 \ar@{|->}[dl]_{\mbox{\scriptsize\tt easy}}
 &\\
 \circ \ar@{|->}[ddr]^-{\mbox{\scriptsize\tt done}}  & &   \circ \ar@{|->}[d]^-{\mbox{\scriptsize\tt step}^{\scriptscriptstyle n}}   \\
  & &   \circ \ar@{|->}[dl]^-{\mbox{\scriptsize\tt finally}}\\
  &\circ& \\
 } 
%
   & \xymatrix{ \\ \ldots\\ \\ \\}\\
\end{array}
$$
Thus  $\mathbf{\Lambda}$  is split joinable, the program is terminating (no derivation starting from
a state containing $\texttt{c(s$^n$(0))}$ includes more that $n+2$ steps),
and by Theorem~\ref{thm:termination+abstract-joinability=confluence} it follows that the program is  confluent (modulo =).

We notice here that confluence is due to the invariant; without invariant, we would get instead of $\mathbf\Lambda$, the  corner
$\langle \{{\tt easy},{\tt c}(x)\}\rangle\mapsfrom\langle \{{\tt start},{\tt c}(x)\} , \emptyset\rangle\ourmapsto\langle \{{\tt hard},{\tt c}(x)\}\rangle$.
It is neither joinable
nor split joinable.

\section{Conclusions and Future Work}\label{sec:discussion} 
The aim of this paper is both theoretical and practical.
Practical as it points forward to methods for proving highly useful properties of realistic CHR programs that may
identify possible  optimizations and contribute to correctness proofs; and theoretical
since it provides a firm basis for understanding
the notion of confluence modulo equivalence applied in the context of CHR.

We have demonstrated the relevance of confluence modulo equivalence for
Constraint Handling Rules, which may also inspire to apply the concept
to other systems with nondeterministic choice and parallelism.
This may be approached either by migrating our results to other
types of derivation systems, or using the fact that programs and systems of many such paradigms can
be mapped directly into CHR programs; see an overview
in the book by Fr\"uhwirth~\cite{fru_chr_book_2009}.

We introduced a new operational semantics for CHR that includes non-logical and incomplete built-ins
and, as we have argued, this semantics is in many respects more in accordance with concrete implementations
of CHR that what is seen in earlier work.

We introduced the idea of a logical meta-language \metachr{} specifically intended
for reasoning about CHR programs,
their semantics and their proofs of confluence modulo equivalence. These proofs are
reified as collections of abstract joinability diagrams.
A main advantage of this approach
is that we can parameterize such proofs, i.e., diagrams, by meta-variables
constrained at the meta-level to stand for, say, variables  or nonvariable terms of CHR.
In our approach, this is essential for handling non-logical and incomplete built-ins correctly.


Our work is an improvement of the state-of-art in confluence proving for CHR~\cite{DBLP:conf/cp/Abdennadher97,DBLP:conf/cp/AbdennadherFM96,DBLP:journals/constraints/AbdennadherFM99,DBLP:conf/iclp/DuckSS07} in several ways: we generalize to modulo equivalence, we handle a larger and more realistic class of CHR programs, and for many programs we can reduce to a finite number of proof cases where \cite{DBLP:conf/iclp/DuckSS07} needs infinitely many, even for simple invariants such as groundness.   
%
The foundational
works by Abdennadher et al~\cite{DBLP:conf/cp/Abdennadher97,DBLP:conf/cp/AbdennadherFM96,DBLP:journals/constraints/AbdennadherFM99} and Duck et al~\cite{DBLP:conf/iclp/DuckSS07}
use ordinary substitutions and inclusion of more constraints as their way to explain
how their abstract cases, called critical pairs, cover large classes of concrete such pairs, each
required to be joinable to ensure  confluence.
The use of the same language for abstract and concrete cases is quite limiting
for what can be done at the abstract level, and which causes 
the mentioned problem of inifinitely many proof cases.
Taking the step that we do, introducing an explicit meta-language with meta-level constraints,
eliminates this problem.

The use of a formal language provides a firm basis for automatic or
semi-automatic support for deriving actual proofs,
and our future plans include the development of such an implemented system.
This  requires a better understanding of how in general
to construct abstract post states, given a state and an abstract derivation step;
this is an important topic in our forthcoming research.
It is obvious to incorporate an existing confluence
checker in such a system in order to identify and eliminate those $\alpha_1$
corners that are joinable even when invariant and equivalence are ignored.

One practical issue that needs to be understood better is how to cope with infinite
splittings which have been exposed in our examples.
It may be considered to allow  meta-variables in \metachr{} to
range over entire sub-derivations, suitably constrained at the meta-level.
This may give rise to abstract diagrams that  cover (in the formal sense we have defined)
a range of differently shaped concrete diagrams.
This potential is indicated informally in a diagram shown in
Section~\ref{sec:infinite-split}, with a component indicating an entire sub-derivation,
written as $\stackrel{\texttt{step}^n}\ourmapsto$,
so that we could show (still informally) an infinite set of corners joinable
with a single argument.

A more detailed analysis of  $\beta$-corners is desirable.
We did not assume or impose any specific way of defining an equivalence, which means that
any abstract $\beta$-corner needs to be considered as critical as soon as the equivalence is non-trivial.
Huet~\cite{DBLP:journals/jacm/Huet80} has shown a lemma for term rewriting systems, which will be interesting to adapt for CHR (see our~Lemma~\ref{lemma:alpga-and-gamma}, p.~\pageref{lemma:alpga-and-gamma}).
It applies ${\approx}  = (\vdash\!\dashv)^{*}$ for some symmetric relation $\vdash\!\dashv$.
Such a relation may be specified by a finite number of cases, as in a system of equations or logical equivalences.
Here it seems possible to split each of our   $\beta$-corners into a  number
of sub-cases, one for each case of the inductive definition of $\vdash\!\dashv$.

%

\bibliographystyle{spmpsci}
\bibliography{CHR}

\appendix
\section{Proof of Lemma~\ref{lem:critCorner}: the Critical Corner Lemma}
We recall the notation
$\textit{all-relevant-app-recs}(S)$, Definition~\ref{def:state}, p.~\pageref{def:state}, that refers to the set
of all application records for rules of the current program taking indices
from the constraint store $S$.

\begin{proof} \emph{(Lemma~\ref{lem:critCorner}, p.~\pageref{lem:critCorner})}
We consider a program with invariant $I$ and equivalence $\approx$, and we will go through
the possible ways that an $I$-corner $\lambda$ can be non-joinable and in each such case point out
a most general pre-corner $\Lambda$ that subsumes $\lambda$.

\medskip\noindent
{\large{$\alpha_1$:}}\\
Let $\lambda=(\Sigma_1 \stackrel{R_1}\mapsfrom \Sigma \stackrel{R_2}\ourmapsto\Sigma_2)$
be an $\alpha_1$-corner that is not joinable
with application instances $R_k=$\break$(r_k\colon A_k\backslash B_k\texttt{<=>}g_k|C_k)$ for $k=1,2$.
Let now $H_k=A_k\cup B_k$,  $k=1,2$, and $\mathit{Overlap}=(B_1\cap H_2)\cup(B_2\cap H_1)$.

In case $\mathit{Overlap}=\emptyset$,  none of the application instances of $\lambda$ 
remove any constraint from the common ancestor state that prevents the other one from being successively applied.
Thus there exists some state $\Sigma'$ such that
$\Sigma_1\stackrel{R_2}\ourmapsto\Sigma'\stackrel{R_1}\mapsfrom\Sigma_2$
which means $\lambda$ is joinable.

Assume now that $\mathit{Overlap}\neq\emptyset$ and we proceed as follows to produce a
most general $\alpha$-pre-corner $\Lambda$ as follows.
We select two most general application pre-instances
$$
R_k^0\; = \; \bigl(r_k\colon  A_k^0\backslash B_k^0\texttt{<=>}g_k^0|C_k^0\bigr),\quad i=1,2
$$
in such a way such that, for $k=1,2$, the indices 
in $R_k^0$ and $R_k$ are pairwise identical, compared in the order they appear.
Define also $H_k^0=A_k^0\cup B_k^0$,  $k=1,2$.

Let now, for $k=1,2$, $\mathit{Overlap}^0_k$ be the set of constraints in $R_k^0$
whose indices coincide with those of $\mathit{Overlap}$.
Since $\mathit{Overlap}^0_1$ and $\mathit{Overlap}^0_2$ have the common instance
$\mathit{Overlap}$, there exists a most general unifier $\sigma$ of $\mathit{Overlap}^0_1$ and $\mathit{Overlap}^0_2$;
let furthermore $\theta$ be a smallest substitution that such that
$\mathit{Overlap}^0_1\sigma\theta=\mathit{Overlap}^0_2\sigma\theta=\mathit{Overlap}$. 

Noticing that $(g_1^0\sigma, g_2^0\sigma)$ is satisfiable (by $\theta$), we can
define now the following most general critical $\alpha_1$-pre-corner, that we argue below
subsumes $\lambda$.
\begin{eqnarray*}
\Lambda^0 & \;=\;
  & \bigl(  \circ\stackrel{R_1^0\sigma}\mapsfrom
               \langle S^0, T^0\rangle
               \stackrel{R_2^0\sigma}\ourmapsto\circ
       \bigl),\quad \text{where }\\ S^0  &\;=\;&  H_1^0\sigma\cup H_2^0\sigma \\
T^0 & \;=\; &
     \textit{all-relevant-app-recs}(H_1^0\sigma\cup H_2^0\sigma)
                \;\setminus\; \{r_1@id(A_1B_1), r_2@id(A_2B_2) \}  
\end{eqnarray*}
We should check that, if $r_1=r_2$, then 
$A^0_1\sigma \neq A^0_2\sigma$ or $B^0_1\sigma \neq B^0_2\sigma$ (cf.~Def.~\ref{def:pre-corners})
 in order for the indicated $\Lambda^0$
to actually be a most general $\alpha_1$-pre-corner: if this is not the case,
$\Sigma_1$ and $\Sigma_2$ would be identical and thus $\lambda$ joinable; contradiction.

Let now $\Sigma=\langle s, t\rangle$ (i.e., we name the parts of the common ancestor in $\lambda$)
and we can define $s^+$, $t^+$ and $t^\div$ as follows
\begin{eqnarray*}
s^+ & = & s\setminus S^0\theta \\
t^+  &= & t\setminus T^0 \\
t^\div  &=  & T^0\setminus t  
\end{eqnarray*}
By construction, $R_k^0\sigma\theta=R_k$, $k=1,2$, and 
we can show that the following properties hold. 
\begin{eqnarray*}
s & = & S^0\theta \uplus s^+  \\
t  &= & T^0 \uplus t^+ \setminus t^\div \\
%
%
t^+  & \subseteq & \allrars(S\theta\uplus s^{+})\setminus\allrars(S\theta)\\ 
t^\div  &\subseteq  & \allrars(S\theta)
\end{eqnarray*}
Thus, the conditions of Definition \ref{def:subsumption-by-gcpc} are satisfied, proving that $\Lambda^0$ subsumes $\lambda$.

\medskip\noindent
{\large{$\alpha_2$:}}\\
Let $\lambda=(\Sigma_1 \stackrel{R}\mapsfrom \Sigma \stackrel{b}\ourmapsto\Sigma_2)$
be an $\alpha_2$ corner that is not joinable,
with application instance $R=(r\colon A\backslash B\texttt{<=>}g|C)$, $g$ non-logical or $I$-incomplete, and built-in $b$.
Define now the following most general application pre-instance
$$
R^0\; = \; \bigl(r\colon  H^0\texttt{<=>}g^0|C^0\bigr)
$$
in such a way such that
the indices 
in $R^0$ and $R$ are pairwise identical, compared in the order they appear,
and let furthermore $b^0$ be a most general indexed built-in atom with same predicate
and index as $b$.
We define now the following most general critical $\alpha_2$-pre-corner that we will argue subsumes $\lambda$.
\begin{eqnarray*}
\Lambda^0\ & \;=\;
  & \bigl(  \circ\stackrel{R^0}\mapsfrom
               \langle H^0, T^0\rangle
               \stackrel{b^0}\ourmapsto\circ
       \bigl),\quad \text{where} \\
T^0 & \;=\; &
     \textit{all-relevant-app-recs}(H^0)
                \;\setminus\; \{r@id(H^0) \}  
\end{eqnarray*}
Let  $\theta$ be a smallest substitution such that $R^0\theta=R$ and $b^0\theta=b$.
If $\vars(b)\cap\vars(g)=\emptyset$, 
there would exists a state $\Sigma'$ such that
$\Sigma_1\stackrel{b}\ourmapsto\Sigma'\stackrel{R}\mapsfrom\Sigma_2$; contradiction.
The remaining arguments to show that $\Lambda^0$ subsumes $\lambda$ are exactly as
for the $\alpha_1$ case.

\medskip\noindent
{\large{$\alpha_3$:}}\\
Let $\lambda=(\Sigma_1 \stackrel{b_1}\mapsfrom \Sigma \stackrel{b_2}\ourmapsto\Sigma_2)$
be an $\alpha_3$-corner that is not joinable,
with built-ins $b_1,b_2$, where $b_1$ is non-logical or $I$-incomplete.
We define now the following most general critical $\alpha_2$-pre-corner that we will argue subsumes $\lambda$.
\begin{eqnarray*}
\Lambda^0\ & \;=\;
  & \bigl(  \circ\stackrel{b_1^0}\mapsfrom
               \langle H^0, \emptyset\rangle
               \stackrel{b_2^0}\ourmapsto\circ
       \bigl) 
\end{eqnarray*}
To show that $\Lambda^0$ subsumes $\lambda$, let $\theta$ be a smallest substitution such that
$b_k^0\theta=b_k$, $k=1$, and proceed exactly as in the $\alpha_1$ case (with $T_0=\emptyset$).
We should also notice that it must hold that $\vars(b_1)\cap\vars(b_2)\neq\emptyset$ as otherwise
$b_1$ and $b_2$ would commute and $\lambda$ be joinable.

\medskip\noindent
{\large{$\beta_1$:}}\\
Let $\lambda=(\Sigma_1 \approx \Sigma \stackrel{R}\ourmapsto\Sigma_2)$
be a $\beta$-corner that is not joinable,
where $R$ is an application instance
with application instance $R=(r\colon A\backslash B\texttt{<=>}g|C_k)$.
Define now the following most general application pre-instance
$$
R^0\; = \; \bigl(r\colon  H^0\texttt{<=>}g^0|C^0\bigr)
$$
in such a way such that
the indices 
in $R^0$ and $R$ are pairwise identical, compared in the order they appear.
The proof that the following most general critical $\beta_1$-pre-corner subsumes $\lambda$
is similar to the previous cases.
\begin{eqnarray*}
\Lambda^0\ & \;=\;
  & \bigl(  \circ\approx
               \langle H^0, T^0\rangle
               \stackrel{R^0}\ourmapsto\circ
       \bigl),\quad \text{where} \\
T^0 & \;=\; &
     \textit{all-relevant-app-recs}(H^0)
                \;\setminus\; \{r@id(H^0) \}  
\end{eqnarray*}

\medskip\noindent
{\large{$\beta_2$:}}\\
Analogous to the $\beta_1$ case and omitted.
\end{proof}

\end{document}